\numberwithin{equation}{section}
\newcommand{\zqzuo}[1]{\textcolor{blue}{\textit{[TODO: #1]}}}
\newcommand{\MyPara}[1]{\vspace{3pt}\noindent\textbf{\textit{#1}}\hspace{0.5em}}
\newcommand{\eg}{\hbox{\emph{e.g.}}\xspace}
\newcommand{\ie}{\hbox{\emph{i.e.}}\xspace}
\theoremstyle{remark}
\newtheorem{theorem}{Theorem}
\newcommand*{\tool}{BigDataflow\xspace}
\newcommand*{\toolwho}{BigDataflow-whole\xspace}
\newcommand*{\toolinc}{BigDataflow-incremental\xspace}
\newcommand*{\toolchianina}{Chianina\xspace}
\newcommand*{\toolnaive}{BigDataflow-classic\xspace}
\newcommand{\change}[1]{\textcolor{black}{#1}}
\newcommand*{\toolincre}{BigDataflow-incremental\xspace}
\definecolor{Gray}{gray}{0.6}
\definecolor{backcolour}{rgb}{0.95,0.95,0.92}
\definecolor{codegreen}{rgb}{0,0.6,0}
\renewcommand{\cref}{\S \ref }
\lstdefinestyle{myStyle}{
    backgroundcolor=\color{backcolour},   
    commentstyle=\color{codegreen},
    basicstyle=\ttfamily\scriptsize,
    keywordstyle=\bfseries\color{blue},
    breakatwhitespace=false,         
    breaklines=true,                 
    keepspaces=true,                 
    numbers=left,       
    numbersep=2pt,                  
    showspaces=false,                
    showstringspaces=false,
    showtabs=false,                  
    tabsize=2,
    xleftmargin=1em,
}
\begin{document}

\title{Scaling Inter-procedural Dataflow Analysis on the Cloud}
\titlenote{\change{This submission is an extended version of a conference paper \cite{fse-23-bigdataflow} previously published in FSE'2023. Compared to the conference version, this manuscript additionally designed and implemented an incremental dataflow analysis framework, named \toolinc. To be specific, two incremental analysis algorithms are proposed, which are carefully elaborated in \S\ref{subsec:incremental-naive} and \S\ref{subsec:incremental-opt}. The formal correctness proofs are also given in \S\ref{subsec:proof-incre}. We conducted the empirical evaluations on \toolinc. Section \ref{subsec:eval-increment} discusses the detailed results. Moreover, we added the clarifications about the implementation and programming model of \toolinc in Section \ref{sec:implementation} and \ref{sec:model}, respectively. More related work about incremental static analysis is also presented in \S\ref{subsec:related-incremental}.}}

\author{Zewen Sun}
\authornote{Both authors contributed equally. Zewen Sun partially worked on this project while she interned at Alibaba Cloud Computing.}
\email{sunzew@smail.nju.edu.cn}
\author{Yujin Zhang}
\authornotemark[2]
\email{yujinz@smail.nju.edu.cn}
\author{Duanchen Xu}
\email{mf1933108@smail.nju.edu.cn}
\author{Yiyu Zhang}
\email{zhangyy0721@smail.nju.edu.cn}
\author{Yun Qi}
\email{mf20330058@smail.nju.edu.cn}
\author{Yueyang Wang}
\email{181860105@smail.nju.edu.cn}
\author{Yi Li}
\email{yili_cs@smail.nju.edu.cn}
\author{Zhaokang Wang}
\email{wang.zk@foxmail.com}
\author{Yue Li}
\email{yueli@nju.edu.cn}
\author{Xuandong Li}
\email{lxd@nju.edu.cn}
\author{Zhiqiang Zuo}
\authornote{Corresponding author.}
\email{zqzuo@nju.edu.cn}
\affiliation{%
  \institution{State Key Laboratory for Novel Software Technology at Nanjing University}
  \country{China}
}
    
    \author{Qingda Lu}
    \affiliation{%
      \institution{Alibaba Cloud Computing}
      \country{United States}
      }
    \email{qingda.lu@alibaba-inc.com}

    \author{Wenwen Peng}
    \affiliation{%
      \institution{Alibaba Cloud Computing}
      \country{China}
      }
    \email{wenwen.pww@alibaba-inc.com}

    \author{Shengjian Guo}
    \affiliation{%
      \institution{Baidu Research}
      \country{United States}
      }
    \email{guosj@vt.edu}
	

    \renewcommand{\shortauthors}{Sun et al.}

\begin{abstract}
	\change{Apart from forming the backbone of compiler optimization, static dataflow analysis has been widely applied in a vast variety of applications, such as bug detection, privacy analysis, program comprehension, etc. Despite its importance, performing interprocedural dataflow analysis on large-scale programs is well known to be challenging. In this paper, we propose a novel distributed analysis framework supporting the general interprocedural dataflow analysis. Inspired by large-scale graph processing, we devise dedicated distributed worklist algorithms for both whole-program analysis and incremental analysis. We implement these algorithms and develop a distributed framework called BigDataflow running on a large-scale cluster. The experimental results validate the promising performance of BigDataflow -- BigDataflow can finish analyzing the program of millions lines of code in minutes. Compared with the state-of-the-art, BigDataflow achieves much more analysis efficiency.}
\end{abstract}

\begin{CCSXML}
<ccs2012>
   <concept>
       <concept_id>10011007.10011006.10011008</concept_id>
       <concept_desc>Software and its engineering~General programming languages</concept_desc>
       <concept_significance>500</concept_significance>
       </concept>
  <concept>
       <concept_id>10003752.10010124.10010138.10010143</concept_id>
       <concept_desc>Theory of computation~Program analysis</concept_desc>
       <concept_significance>500</concept_significance>
       </concept>
   <concept>
       <concept_id>10010520.10010521.10010537.10003100</concept_id>
       <concept_desc>Computer systems organization~Cloud computing</concept_desc>
       <concept_significance>500</concept_significance>
       </concept>
   <concept>
       <concept_id>10010147.10010919.10010172</concept_id>
       <concept_desc>Computing methodologies~Distributed algorithms</concept_desc>
       <concept_significance>300</concept_significance>
       </concept>
 </ccs2012>
\end{CCSXML}

\ccsdesc[500]{Software and its engineering~General programming languages}
\ccsdesc[500]{Theory of computation~Program analysis}
\ccsdesc[500]{Computer systems organization~Cloud computing}
\ccsdesc[300]{Computing methodologies~Distributed algorithms}

\maketitle









\section{Introduction}
\label{sec:intro}

Dataflow analysis is a technique for statically gathering program information at program points along the program's control flow. 
Besides forming the backbone of compiler optimization, it has been adopted in many other significant application areas, including bug detection~\cite{sui-2016-svf-cc,Sadowski:2018-cacm-google}, security vulnerability discovery~\cite{hybridroid-ase-2016}, privacy analysis~\cite{Arzt:2014-pldi}, program testing/debugging~\cite{debugging-dataflow-pldi94,dataflowtesting-acms}, etc. 
In a dataflow analysis, a separate dataflow fact is maintained at each program point under the control flow graph (CFG) representation.
Based on the effect of each statement, a transfer function is applied to transform the dataflow fact accordingly along the CFG. The transformation process is performed iteratively via a worklist algorithm until a fixed point is reached~\cite{David:97-pepm}, meaning that all the dataflow facts are unchanged anymore.

\MyPara{Challenges.} Despite its importance, performing interprocedural dataflow analysis on large-scale systems code is well known to be challenging. 
First, as modern real-world programs are usually of large scale (like million lines of code), maintaining solutions at all program points with limited memory can hardly be scalable. Even worse, for certain analysis, the dataflow solution maintained at each point itself is highly space-intensive.
Although prior work attempts to adopt sparse representations \cite{Choi:1991-acs-popl,Ramalingam:2002-ser-tcs,Hardekopf-2009-popl,sui-2016-svf-cc}, the huge memory consumption still severely limits the scalability. 
As evidenced by recent studies~\cite{Shi:2018-pldi-pinpoint,zuo2021chianina}, the analysis over sparse value-flow graph can easily exceed hundreds of Gigabytes, showing the memory consumption a factual bottleneck. 
Second, the computation of flow-sensitive analysis requires updating the dataflow fact with respect to each statement along the CFG by performing the transfer function.
The process is highly computation-intensive because: 
(1) the amount of transfer function executed is at least linear in the number of program statements, which is large-scale given the modern large-size software under analysis;  
(2) the computation of each transfer function is perhaps expensive as well. For instance, in the flow-sensitive pointer/alias analysis,  the dataflow fact at each program point should capture the alias information among all the variables in the entire program. Updating variable relations by each transfer function consumes high CPU cycles. 

\MyPara{State-of-the-Art.}
To accelerate interprocedural dataflow analysis, a few attempts  to distribute/parallelize the computation have been made.
For distributed approaches, Garbervetsky et al. \cite{Garbervetsky:2017-fse} presented a distributed worklist algorithm on the basis of the actor model. However as stated explicitly in their paper, it cannot support the standard dataflow analysis due to the lack of flow ordering between actors. Albarghouthi et al. \cite{Albarghouthi:2012-pldi-bolt} parallelized the demand-driven top-down analyses based on MapReduce paradigm. They only targeted verification and software model checking without supporting dataflow analysis. BigSpa \cite{Zuo:2018-bigspa-ipdps,bigspa-tpds} supports the distributed acceleration for CFL reachability-based analysis \cite{Reps-1997-reachability}. Unfortunately, a lot of dataflow analyses, e.g., cache analysis and numerical analysis, do not belong to this category. 
Greathouse  et al. \cite{Greathouse-cgo-2011} proposed scalable dataflow analysis. However, they focused on dynamic analysis rather than static analysis. 
In brief, there exist no distributed systems supporting static dataflow analysis.

As for parallel approaches, Lee and Ryder \cite{lee1992comprehensive} exploited algorithmic parallelism to accelerate dataflow analysis. 
Rodriguez et al. \cite{Rodriguez-2011} proposed an actor model-based parallel algorithm for interprocedural finite distributive subset (IFDS) analysis~\cite{Reps-1995-dataflow}. 
Moreover, some researchers \cite{nagaraj2013parallel,su2014parallel} also studied parallel algorithms for pointer analysis. 
Note that the above approaches only support specific analysis rather than the general class of dataflow analyses.
More importantly, they rely heavily on local shared memory for computation. There is no doubt that they can rarely scale to large systems such as Linux kernel \cite{Aiken:2007-paste-saturn,zuo2021chianina}.
Recently, Zuo et al. \cite{zuo2021chianina} developed \toolchianina, a single machine-based analysis framework which can scale general dataflow analysis to millions lines of code.
Unfortunately, due to the involvement of disks, it readily takes hours or even days to finish the analysis for large-scale programs. Such inefficiency can hardly meet the requirement of quick analysis response (usually in minutes) in the modern continuous integration and deployment (CI/CD) pipelines \cite{Tricorder-icse-15,inputsplit-fse-2022}.

\MyPara{Our Work.} 
With the advent of cloud computing, the large-scale distributed cluster of commodity computers has nowadays become prevalent. It not only offers powerful computing capability, but can be easily accessible by a single developer.
Exploiting cloud resources for static analysis would be a promising breakthrough point for achieving both scalability and efficiency. 
However, as mentioned earlier, there exists no distributed system running on a cluster which can support the general dataflow analysis. Adapting the existing parallel algorithms (such as \toolchianina) to distributed environment is non-trivial. Parallel algorithms only focus on computation on shared memory, which lacks the consideration of partitioning, task dispatching, fault tolerance, and efficient communications between cluster nodes. 
None of the existing parallel approaches can directly do it without re-designing and re-implementing the system.


\change{
In this work, we propose a novel system that can leverage large-scale distributed cloud resources to scale and accelerate the general class of interprocedural dataflow analyses.
Inspired by large-scale graph processing \cite{Malewicz:2010-pregel,mccune2015thinking,Khan2016VC}, we revisit the traditional worklist algorithm from the perspective of  \emph{distributed vertex-centric computation model}, and devise a dedicated distributed worklist algorithm tailored for interprocedural dataflow analysis.
In addition, to further improve the analysis efficiency, an incremental distributed algorithm is proposed to realize incremental dataflow analysis.
We implement the distributed algorithms atop the general distributed graph processing platform (\ie, Apache Giraph \cite{giraph-book,giraph-vldb}) and develop a framework named \tool which supports both whole-program analysis (\ie, \toolwho) and incremental analysis (\ie, \toolinc), running on the cloud so as to take full advantage of the modern distributed computing resources.}

The underlying platform (\ie, Apache Giraph) provides the basic functionalities to support reliable and robust distributed processing, including input partitioning, task dispatching, cross-node communications, and fault tolerance.
\tool, as a generic framework, provides several APIs to specify the transfer functions and merge operator similar to other monotone dataflow frameworks \cite{Chambers-1996-tr-uw,Nielson:1999:PPA-sv}, thus alleviating the burden of implementing various client analyses.
By filling these APIs, users can readily implement a particular dataflow analysis on top of \tool.



\MyPara{Contributions.}
The contributions are listed as follows:
\begin{itemize}
\item We devise an optimized distributed vertex-centric computation model to accelerate static dataflow analysis by leveraging large-scale cloud resources.  
\item \change{We develop and implement a distributed dataflow analysis framework called \tool  running on a real-world cloud, which supports two modes: whole-program analysis and incremental analysis, while providing a variety of high-level APIs to easily implement client dataflow analyses.} 
\item We evaluate the performance and scalability of \tool over large-scale real-world software systems (e.g., Firefox and Linux kernel). The experimental results validate the promising performance of \tool---it can finish analyzing millions lines of code in minutes provided that a cluster of 125 commodity PCs.
\end{itemize}

\MyPara{Outlines.}
\change{
The rest of the paper is organized as follows. \S \ref{sec:background} gives the necessary background of dataflow analysis and distributed graph processing.  \S \ref{sec:design} and \S \ref{sec:design-incre} present the distributed worklist algorithms proposed for whole-program analysis and incremental analysis, respectively, followed by the implementation details in \S \ref{sec:implementation}. We discuss the programming model provided by our framework to implement various client analyses in \S \ref{sec:model}. \S \ref{sec:evaluation} describes the empirical evaluation of \tool in terms of performance and scalability. We give certain discussions in \S \ref{sec:discuss} and review the related work in \S \ref{sec:related}. Finally, \S \ref{sec:conclude} concludes.
}

\section{Background}
\label{sec:background}

\subsection{Intraprocedural Dataflow Analysis}


Dataflow analysis is a technique for gathering program information with respect to various program points along program flows. 
A client dataflow analysis can usually be formulated as an instance of the monotone dataflow analysis framework \cite{Kildall:1973-dataflowlattice,Kam:1977-monotoneframework}, which consists of 
the analysis domain including operations to copy and merge domain elements, and the transfer functions over domain elements with respect to each type of statement in the control flow graph (CFG). 
An iterative worklist algorithm then takes as input an instance of the monotone framework, performs the transfer function for each program statement iteratively along the CFG, and computes a fixed point as the analysis result \cite{Kam:1976-iterative}.
Algorithm \ref{a:worklist} shows the worklist algorithm for forward analysis in detail. 

For each statement $k$ in the CFG, two elements $\mathcal{IN}_k$ and $\mathcal{OUT}_k$  represent the incoming and outgoing dataflow facts, respectively. 
At each merging point of CFG in which case a node $k$ has multiple predecessors $p \in preds(k)$, the incoming dataflow fact $\mathcal{IN}_k$ of node $k$ is the combination of all the outgoing facts $\mathcal{OUT}_p$ (shown as Line \ref{a:worklist-merge}) where $\otimes$ indicates the merge operator specified by users which can be meet (for must-analysis) or join (for may-analysis). 
A transfer function for statement $k$  then takes as input $\mathcal{IN}_k$ and returns the new outgoing fact, as shown by Line \ref{a:worklist-transfer}. 
The worklist algorithm is conducted along the CFG to update the dataflow elements $\mathcal{IN}_k$ and $\mathcal{OUT}_k$ for each statement in an iterative manner until a fixed point is reached, meaning that all the dataflow facts are unchanged anymore \cite{Kam:1976-iterative}. 

\begin{algorithm}[htb!]
	\SetKwInOut{Input}{input}\SetKwInOut{Output}{output}
	\caption{Worklist Algorithm for Forward Analysis\label{alg:worklist}}
	\label{a:worklist}
	\DontPrintSemicolon
	\small
	$\mathcal{W} \leftarrow$ \{all the entry statements of the CFG\} \;
	\Repeat{$\mathcal{W} \equiv \emptyset$}
	{
		remove $k$ from $\mathcal{W}$ \;
		$\mathcal{IN}_k \leftarrow \otimes_{p \in \textit{preds}~(k)}~\mathcal{OUT}_p$  \textcolor{olive}{/*merge function*/} \label{a:worklist-merge}\;
		$\textit{Temp} \leftarrow (\mathcal{IN}_k \setminus KILL_k) \cup GEN_k$  \textcolor{olive}{/*transfer function*/} \label{a:worklist-transfer}\;
		\If{$\textit{Temp} \neq \mathcal{OUT}_k$}
		{
			$\mathcal{OUT}_k \leftarrow \textit{Temp}$ \;
			$\mathcal{W} \leftarrow \mathcal{W} \cup \textit{succs}~(k)$ \;
		}
	}
\end{algorithm}

\subsection{Interprocedural Dataflow Analysis}
Interprocedural dataflow analysis takes into account the propagation of dataflow values across multiple procedures. 
Context-sensitive interprocedural analysis distinguishes the distinct calls of a procedure to eliminate the invalid paths, thus achieving high precision. 
Generally, there exist two dominant approaches to context-sensitive interprocedural analysis, namely the \emph{summary-based (or functional) approach} and the \emph{cloning-based approach} \cite{Sharir:two-inter-dataflow}. 

The summary-based approach commonly constructs a summary (transfer) function for each procedure. 
At each call site where the procedure is invoked, the analysis computes the effects of the procedure by directly applying the summary function to the specific inputs at the call site. 
As such, the re-analysis of the procedure body is avoided while enabling context sensitivity.
However, it is not possible to construct such (symbolic) summary functions in general. 
Take the pointer analysis as an example, we can hardly establish a succinct summarization for each procedure since the effects of a procedure are heavily dependent of the alias relations of the inputs at each call site. The evaluation of a summary function on a particular input may not be cheaper than reanalyzing the whole procedure \cite{Wilson:1995-ptf-pldi}.  
Another option is the explicit representation, a.k.a. tabulation method or partial transfer functions \cite{Wilson:1995-ptf-pldi,Murphy:1999-ptf-pepm}. Given a finite lattice, it enumerates the summary function as input-output dataflow value pairs for each procedure. The output value of a summary function can be directly exploited when the identical input value is encountered again for the same procedure. However, as a large number of states need to be maintained, this approach usually suffers from huge space consumption. 


The alternative of achieving context-sensitivity is a cloning-based approach, where a separate clone of the procedure body is created at each callsite \cite{Emami:1994-clone-pldi,Whaley:2004-clone-bdd}. As such, each procedure is re-analyzed under each calling context, preventing the analysis from propagating dataflow values along invalid paths. 
%
In this work, we adopt the cloning-based approach to achieve context-sensitivity. 
The basic analysis logic of interprocedural analysis is the same as that of intraprocedural analysis shown as Algorithm \ref{alg:worklist}, except that the CFG becomes the interprocedural CFG. 
More specifically, to construct the interprocedural CFG, the CFG for each function is firstly generated. Based on a pre-computed call graph, the CFG for each function is cloned and incorporated into that of each of its callers by creating assignment edges to connect vertices representing formal parameters and actual arguments. 
In order to achieve the sweet spot between scalability and precision, we can actually perform cloning only at certain levels, which is theoretically equivalent to the $k$-CFA call string approach  \cite{Shivers:1991:CAH-cmu}. 


\subsection{Vertex-Centric Graph Processing\label{subsec:back-vertex-centric}}
With the inception of Pregel system \cite{Malewicz:2010-pregel}, vertex-centric graph processing becomes a hotspot in the large-scale graph processing community \cite{Khan2016VC}. 
Following Pregel, various  algorithmic techniques and systems were proposed, such as asynchronous model (GraphLab \cite{low2012distributed}), in-memory data parallel model (GraphX \cite{Gonzalez:2014-graphx}). People are able to achieve efficient, scalable, and fault-tolerant graph computing on a large cluster of computers by leveraging these systems. 


\begin{algorithm}[htb!]
	\SetKwInOut{Input}{input}\SetKwInOut{Output}{output}
	\caption{\small Synchronous Vertex-centric Graph Processing \label{alg:vertex-centric}}
	\label{a:vertex-centric}
	\DontPrintSemicolon
	\small
	
	\KwData{$\mathcal{A}$: the set of active vertices during processing}
	
	\BlankLine
	\Repeat{$\mathcal{A} \equiv \emptyset$}
	{
            \ForPar(\textcolor{brown}{/*done by system*/}){each vertex $k \in \mathcal{A} $}{\label{algo2:inter-nodes}
                    Remove $k$ from $\mathcal A$ \label{a-pregel:active-remove} \textcolor{brown}{/*done by system*/}\;
                    \textcolor{brown}{/*perform user-specified logic for each vertex, in particular including Gather, Apply and Scatter*/} \;
                    
                    $\mathcal{M}_k \leftarrow \textcolor{blue}{\textbf{Gather}}(k)$ \label{a-pregel:gather}  \textcolor{olive}{/*gather messages or information from neighbors*/} \;
                    
                    $\mathcal{D}_k \leftarrow \textcolor{blue}{\textbf{Apply}}(\mathcal{M}_k, k)$ \label{a-pregel:apply} \textcolor{olive}{/*update value of k based on gathered information*/} \;
                    
                    $\langle\mathcal{M}, \mathcal{A'} \rangle \leftarrow \textcolor{blue}{\textbf{Scatter}}(\mathcal{D}_k, k)$  \label{a-pregel:scatter} \textcolor{olive}{/*activate new vertices and/or send out messages*/} \;
            }
            \BlankLine
            \textcolor{brown}{/*synchronize before next superstep*/}\;
            \textsc{Synchronize}() \label{a-pregel:sync-message} \textcolor{brown}{/*done by system*/}\;
            $\mathcal{A} \leftarrow \mathcal{A'}$ \label{a-pregel:sync-active} \textcolor{brown}{/*done by system*/}\;
	}
	
\end{algorithm}

Algorithm \ref{a:vertex-centric} gives the pseudo-code of a synchronous vertex-centric processing algorithm.
Given an initialized set of active vertices, it conducts an iterative computation where each iteration is termed as a superstep. At each superstep, all the active vertices in $\mathcal{A}$ are processed in a distributed and parallel way across the entire cluster. Over each active vertex $k$, Gather-Apply-Scatter (a.k.a., GAS model) is performed \cite{Gonzalez:2012-powergraph}.   
At first, the messages or information from its neighbors are gathered (Line \ref{a-pregel:gather}).
At the Apply phase, it updates its associated value $\mathcal{D}_k$ according to its current value and the information gathered (Line \ref{a-pregel:apply}). Based on the newly computed value, it updates the active vertices accordingly, and/or sends necessary messages to its neighbors (Line \ref{a-pregel:scatter}) at the Scatter phase. 
Before the next superstep,  all the messages generated at the current superstep and active vertices are synchronized (Lines \ref{a-pregel:sync-message}-\ref{a-pregel:sync-active}). 
The whole computation terminates until no active vertex is generated.

Note that vertex-centric graph processing \cite{Malewicz:2010-pregel,Gonzalez:2012-powergraph,mccune2015thinking} is a programming model for implementing graph processing applications. 
Users write graph algorithms from the perspective of vertices. They only need to specify the code executed at each vertex, particularly Gather-Apply-Scatter functions (Lines \ref{a-pregel:gather}-\ref{a-pregel:scatter}). 
\textcolor{black}{The underlying graph processing system is responsible for dividing the input large-scale graph into multiple partitions, loading partitions into different cluster nodes, launching multiple threads/processes to execute user-defined code simultaneously, performing necessary synchronizations, optimizing communication among nodes, maintaining replicas to ensure fault tolerance,} etc.

In this work, we take inspiration from vertex-centric graph processing, design and implement a distributed framework \tool tailored to interprocedural dataflow analysis of large-scale code. Similar to the existing general-purpose graph systems, \tool provides user-friendly APIs (\eg, merge and transfer functions) based on which users can readily implement their own client analyses without worrying about scalability. The intrinsic system support under \tool  ensures the distributed capability in lifting the sophisticated analysis to large-scale programs.

\section{Distributed Whole-program Dataflow Analysis}
\label{sec:design}

Inspired by large-scale graph processing, we revisit the classic worklist algorithm of dataflow analysis (Algorithm \ref{alg:worklist}) from the perspective of vertex-centric computation model (Algorithm \ref{alg:vertex-centric}), and accordingly present our first distributed worklist algorithm for whole-program dataflow analysis, \ie, Algorithm~\ref{a:naive-algo} in \cref{subsec:design-naive}.
This algorithm faithfully follows the classic worklist algorithm, and thus it is easy to understand; however, its scalability is also limited under the distributed setting. As a result, in \cref{subsec:design-opt}, we further propose an optimized algorithm that achieves better performance than Algorithm~\ref{a:naive-algo}  as demonstrated in \cref{sec:evaluation}.

\begin{algorithm}[htb!]
	\caption{\small Distributed Worklist Algorithm}
	\label{a:naive-algo}
	\DontPrintSemicolon
	\small
	\KwData{$\mathcal{W}$: the list of all active vertices during analysis;  
	$\mathcal{DS}_k: \{\mathcal{OUT}_p \mid p \in preds(k) \}$ a set containing all the dataflow facts of $k$'s predecessors}
	
	\BlankLine
	\BlankLine

	$\mathcal{W} \leftarrow $ \{all the entry vertices in CFG\} \label{a-naive:worklist-init} \;
	\Repeat{$\mathcal{W} \equiv \emptyset$}
	{
            \ForPar(\textcolor{white}{/*done by */}){each CFG vertex $k \in \mathcal W$}{\label{a-naive:forpar} 
                    Remove $k$ from $\mathcal W$ \label{a-naive:active-remove} \textcolor{white}{/*done by system*/}\;
                    $\mathcal{DS}_k \leftarrow \textsc{GatherAll}(k)$ \label{a-naive:gather} \textcolor{olive}{/*gather all the predecessors' dataflow facts*/}\;
                    $\mathcal {IN}'_k \gets \textbf{Merge}(\mathcal{DS}_k )$ \label{a-naive:combine} \textcolor{olive}{/*merge*/} \;
                    $\mathcal {OUT}_k^\prime \leftarrow \textbf{Transfer}(\mathcal{IN}'_k, k)$ \label{a-naive:transfer} \textcolor{olive}{/*transfer*/} \;
                    \If(\textcolor{olive}{/*propagate*/}){\textbf{Propagate}($\mathcal{OUT}_k$,$\mathcal{OUT}_k^\prime $)}{\label{a-naive:propagate}
                        $\mathcal{OUT}_k \leftarrow \mathcal{OUT}_k^\prime$ \label{a-naive:fact-update}\;
                        $\mathcal{W'} \leftarrow \mathcal{W'} \cup \textit{succs}~(k)$ \label{a-naive:active-update}\;
                    }
            }
            
            \BlankLine
            \textsc{Synchronize}() \label{a-naive:sync-message} \textcolor{white}{/*done by system*/}\;
            $\mathcal{W} \leftarrow \mathcal{W'}$ \label{a-naive:sync-active} \textcolor{white}{/*done by system*/}\;
	}
\end{algorithm}


\subsection{Distributed Vertex-centric Worklist Algorithm \label{subsec:design-naive}}
By directly instantiating \text{Gather-Apply-Scatter} interface and other respective data structures in Algorithm \ref{alg:vertex-centric}, we devise the first distributed worklist algorithm for dataflow analysis, which is listed as Algorithm \ref{a:naive-algo}. 
It takes as input a large interprocedural control flow graph (CFG) or an arbitrary sparse representation \cite{Hardekopf:2011-fpa-cgo,Choi:1991-acs-popl,Ramalingam:2002-ser-tcs}.
At the beginning, all the entry vertices in the input CFG are added to $\mathcal W$ as the initial active vertices (Line~\ref{a-naive:worklist-init}).
During each superstep, the underlying system launches a large number of threads/processes to handle the computation on each vertex in parallel (Line~\ref{a-naive:forpar}).
On each vertex $k$, all the dataflow facts from $k$'s predecessors are firstly gathered (Line \ref{a-naive:gather}). 
This can be implemented by directly invoking the existing APIs provided by pull-based graph systems (\eg, PowerGraph \cite{Gonzalez:2012-powergraph}) or designing a pulling mechanism on top of push-based systems (such as Giraph \cite{giraph-vldb}). 
Next a \emph{Merge} function takes all the dataflow facts gathered from predecessors (\ie, $\mathcal{DS}_k$) as input, and produces the incoming dataflow fact $\mathcal{IN}'_k$ (Line \ref{a-naive:combine}). A transfer function is then performed to generate the new outgoing dataflow fact $\mathcal{OUT}'_k$ (Line \ref{a-naive:transfer}).
After that, we check if the updated dataflow fact $\mathcal{OUT}'_k$ is different from that (\ie, $\mathcal{OUT}_k$) at previous superstep. If so, the propagation is employed to update the dataflow fact as the newly computed value (Line \ref{a-naive:fact-update}). Simultaneously, all of $k$'s successors are activated and put into active list $\mathcal{W}'$ for the next superstep (Line \ref{a-naive:active-update}).

\begin{figure}[htb!]
\centering
\hspace{-1em}
\subfloat[initial state]{         
\label{fig:naive-init}
\begin{minipage}[c]{.3\linewidth}
\centering
\includegraphics[width=0.7\textwidth]{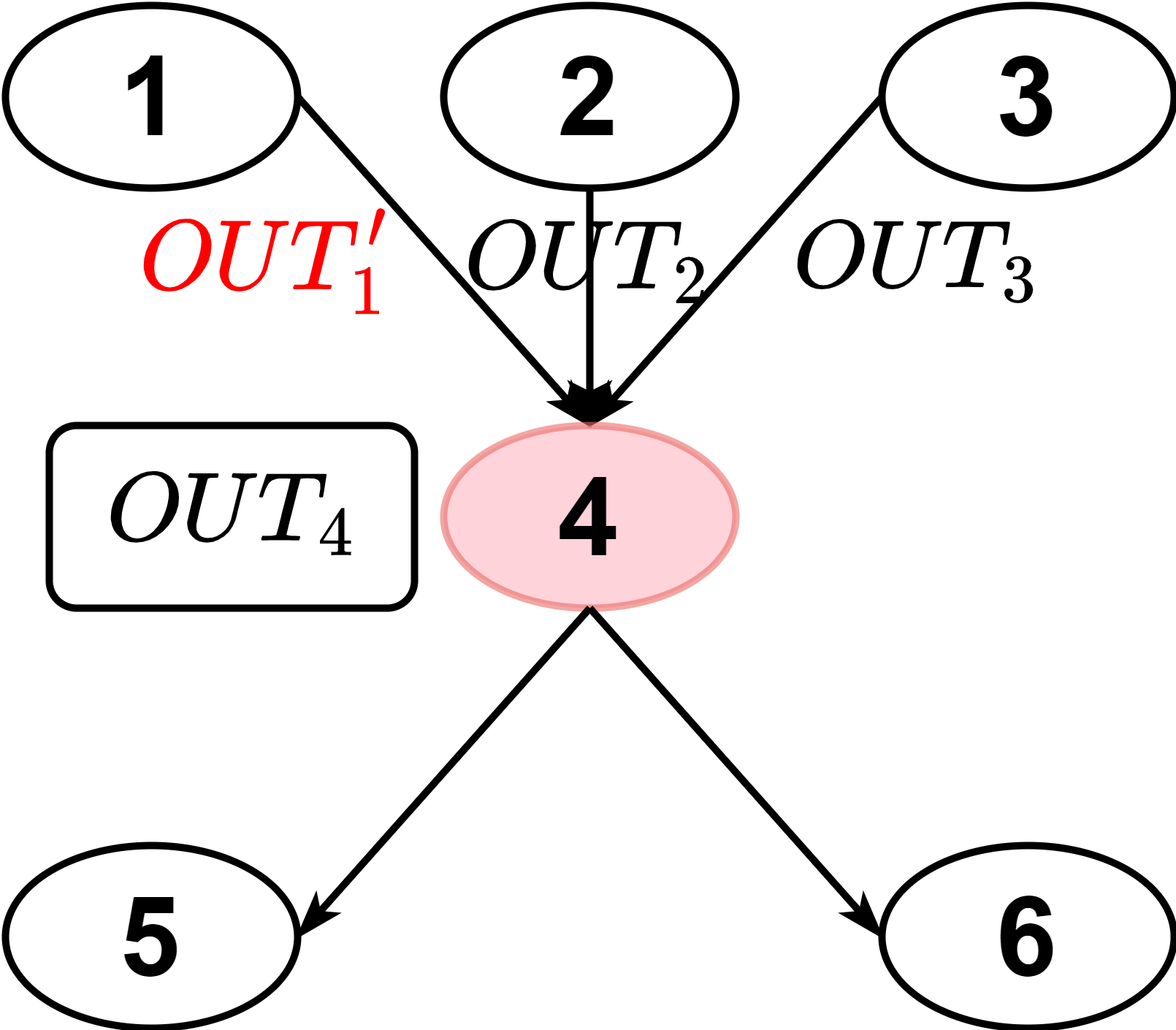}
\end{minipage}
}
\subfloat[gather and merge]{         
\label{fig:naive-gather}
\begin{minipage}[c]{.3\linewidth}
\centering
\includegraphics[width=0.7\textwidth]{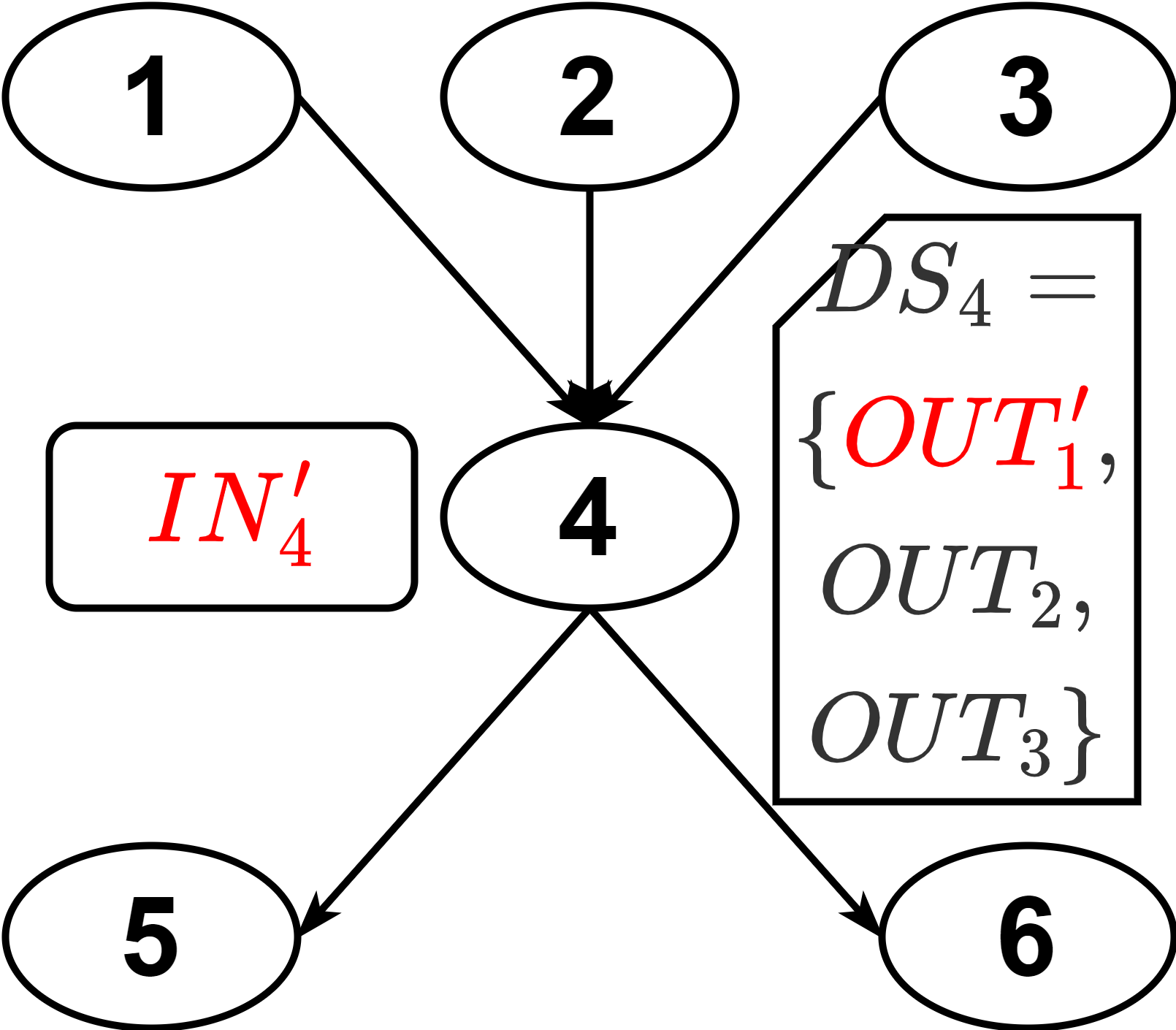}
\end{minipage}
}
\subfloat[transfer and propagate]{         
\label{fig:naive-compute}
\begin{minipage}[c]{.3\linewidth}
\centering
\includegraphics[width=0.7\textwidth]{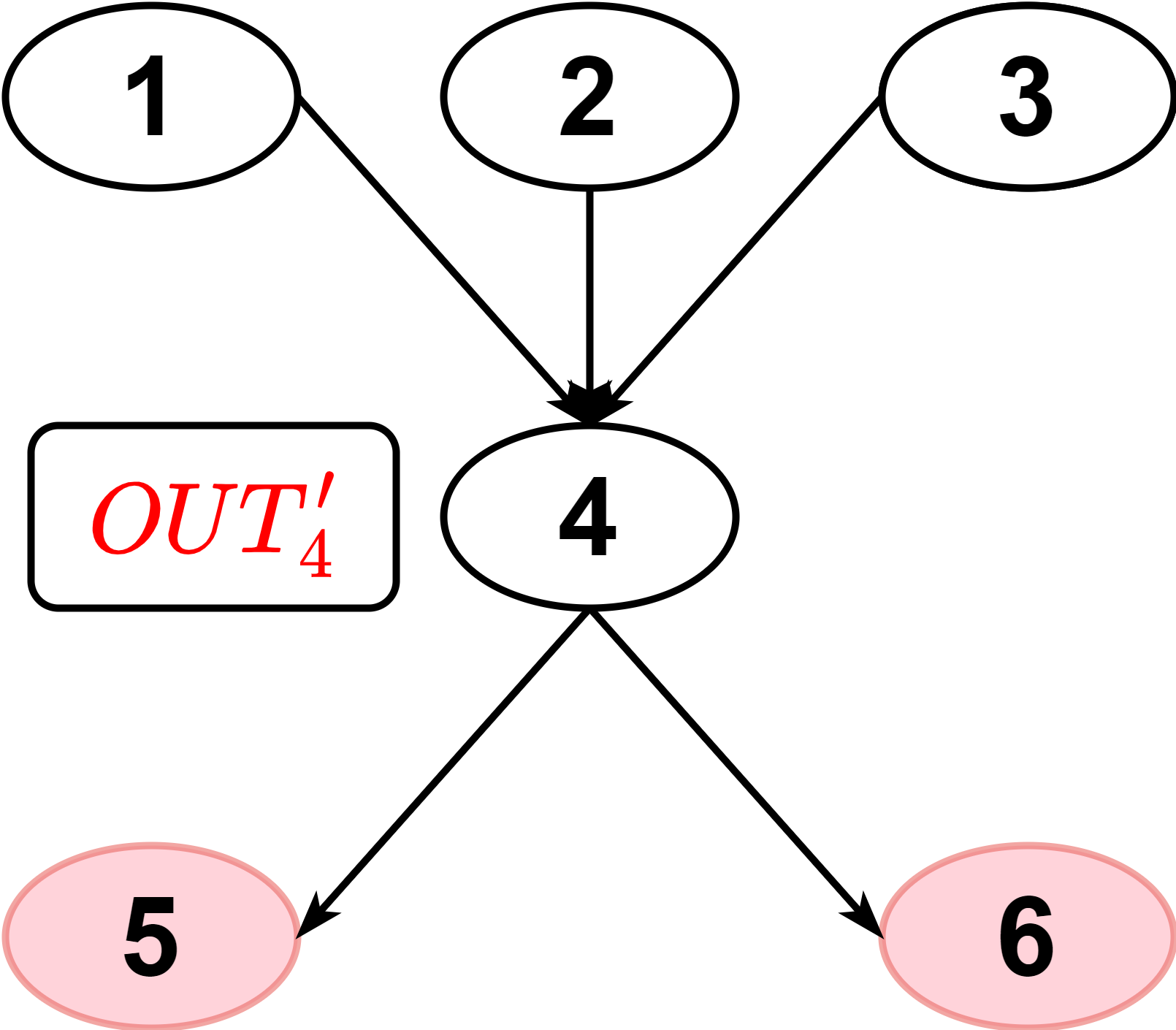}
\end{minipage}
}
\caption{One superstep computation at vertex 4 in Algorithm \ref{a:naive-algo}.}
\label{fig:naive-proc}
\end{figure}

\MyPara{Example.} 
Figure \ref{fig:naive-proc} illustrates the computation procedure at vertex $4$ in the above algorithm, where the vertices with yellow background are active. Vertices $1,2,3$ are the predecessors of $4$, and $5,6$ are its successors. 
Suppose that at the beginning of a certain superstep, the active vertex $4$ has its outgoing dataflow fact $\mathcal{OUT}_4$.
Predecessor $1$ has the newly updated fact $\mathcal{OUT}'_1$, while predecessors $2$ and $3$ hold the old dataflow facts $\mathcal{OUT}_2$ and $ \mathcal{OUT}_3$, respectively (shown as Figure \ref{fig:naive-init}). Firstly, all the predecessors' dataflow facts are gathered as $\mathcal{DS}_4 = \{\mathcal{OUT}'_1, \mathcal{OUT}_2, \mathcal{OUT}_3\}$ and $\mathcal{IN}'_4$ is generated by merging $\mathcal{DS}_4$ shown in Figure \ref{fig:naive-gather}. $\mathcal{OUT}_4^\prime$ is computed by performing transfer function on $\mathcal{IN}'_4$. Assuming that $\mathcal{OUT}_4^\prime$ is different from $\mathcal{OUT}_4$, propagation is employed so that all the successors $5$ and $6$ are marked as active shown as Figure \ref{fig:naive-compute}.

Despite that the above algorithm succeeds in leveraging large-scale distributed computing resources to accelerate dataflow analysis, it may still suffer from poor scalability especially when analyzing large-scale programs such as the Linux kernel or Firefox (elaborated shortly in \cref{sec:evaluation}). 
As shown in Algorithm \ref{a:naive-algo}, each vertex has to collect a full set of dataflow facts associated with all its predecessors (\ie, $\mathcal{DS}_k$) for computation.  
In the worst case, the dataflow fact of each vertex would be made multiple copies each of which is sent to one of its successors. 
As a result, the total number of dataflow facts held in memory and passed across networks grows exponentially with the size of interprocedural control flow graph under analysis. 
This number could be super large in practice especially when performing context-sensitive analysis over large-scale programs. 
Passing/gathering a huge number of expensive dataflow facts not only exhausts the precious memory of a cluster quickly but also increases the burden of network communications, leading to poor scalability. 
We implemented Algorithm \ref{a:naive-algo} as a prototype named \toolnaive and conducted the empirical evaluation of it. The experimental results discussed shortly in \cref{sec:evaluation} show that \toolnaive works well for medium-size programs, but quickly runs out of memory on a 500-worker cluster when analyzing large-scale programs, such as the Linux kernel or Firefox. 
In the following (\cref{subsec:design-opt}), we propose an optimized algorithm which addresses the aforementioned limitations by significantly pruning away the data gathered, thus achieving better scalability and performance.


\subsection{Optimized Distributed Worklist Algorithm \label{subsec:design-opt}}

\begin{algorithm}[htb!]
	\caption{\small Optimized Distributed Worklist Algorithm}
	\label{a:opt-algo}
	\DontPrintSemicolon
	\small
	\KwData{$\mathcal{W}$: the list of all active vertices during analysis;  
	$\mathcal{M}_k:\{\mathcal{OUT'}_p \mid \text{p is a predecessor of k} \}$ a set containing the dataflow facts of $k$'s predecessors which are updated at previous superstep 
	}
	
	\BlankLine
	\BlankLine

	$\mathcal{W} \leftarrow $ \{all the entry vertices in CFG\} \label{a-opt:worklist-init} \;
	\Repeat{$\mathcal{W} \equiv \emptyset$}
	{
            \ForPar{each CFG vertex $k \in \mathcal W$}{\label{a-opt:forpar} 
                    Remove $k$ from $\mathcal W$ \label{a-opt:active-remove} \;
                    $\mathcal{M}_k \leftarrow \textsc{GatherMessages}(k)$ \label{a-opt:gather} \textcolor{olive}{/*gather dataflow facts of the updated predecessors*/}\;
                    $\mathcal {IN}'_k \gets \textbf{Merge}(\mathcal{M}_k,\mathcal {IN}_k)$ \label{a-opt:combine} \textcolor{olive}{/*merge*/} \;
                    $\mathcal {OUT}_k^\prime \leftarrow \textbf{Transfer}(\mathcal{IN}'_k, k)$ \label{a-opt:transfer} \textcolor{olive}{/*transfer*/} \;
                    \If(\textcolor{olive}{/*propagate*/}){\textbf{Propagate}($\mathcal{OUT}_k$,$\mathcal{OUT}_k^\prime $)}{\label{a-opt:propagate}
                        $\mathcal{OUT}_k \leftarrow \mathcal{OUT}_k^\prime$ \label{a-opt:fact-update}\;
                        \ForEach{successor $d$ of $k$ \label{a-opt:propagate-for}}{
                	        $\textsc{SendMessages}(d, \mathcal {OUT}'_k) $ \label{a-opt:sendmsg} \textcolor{olive}{/*send the updated dataflow facts to successors*/}\;
                	        $\mathcal{W'} \leftarrow \mathcal{W'} \cup \{d\}$ \label{a-opt:active-update}\;
	                    }
                    }
                    $\mathcal {IN}_k \gets \mathcal {IN}'_k$ \;
            }
            
            \BlankLine
            \textsc{Synchronize}() \label{a-opt:sync-message} \textcolor{white}{/*done by system*/}\;
            $\mathcal{W} \leftarrow \mathcal{W'}$ \label{a-opt:sync-active} \textcolor{white}{/*done by system*/}\;
	}
\end{algorithm}

As discussed earlier, each active vertex requires the dataflow facts associated with all its predecessors to complete the computation in the original worklist algorithm (\ie, Line \ref{a:worklist-merge} of Algorithm \ref{alg:worklist} and Line \ref{a-naive:combine} of Algorithm \ref{a:naive-algo}). 
That is why extensive dataflow facts have to be transferred across the cluster network and then merged locally on each vertex, resulting in poor scalability. 
To tackle the problem, we devise an optimized algorithm which prunes the dataflow facts to be gathered.  
In particular, instead of gathering the full set of dataflow facts from all the predecessors, only the predecessors' dataflow facts that are newly updated at the previous superstep are passed and merged. 
Since a significant portion of dataflow facts are not changed at one superstep, the optimized algorithm can thus prune away many unnecessary and memory-consuming dataflow facts to be gathered, greatly reducing the overall message traffic and computation cycles for merging. 
We will discuss the correctness of such optimization -- it produces the same analysis results as the original algorithm, and give the formal proof shortly in \S \ref{subsec:proof}.


We propose an optimized distributed worklist algorithm shown as Algorithm \ref{a:opt-algo}. 
For each active vertex $k$, only the set of predecessors' dataflow facts which are updated at previous superstep are gathered. This can be achieved via the push-based message passing mechanism.  
Specifically, each vertex $k$ passively receives the messages passed to it (\ie, $\mathcal{M}_k$) from its predecessors at previous superstep (Line \ref{a-opt:gather}).
Each message in fact corresponds to a dataflow fact sent from one of the predecessors which is updated at the previous superstep.
Subsequently, the dataflow facts $\mathcal{OUT}'_p \in \mathcal{M}_k$ are merged with the incoming dataflow fact of $k$ at last superstep (\ie, $\mathcal{IN}_k$) to generate the new incoming dataflow fact (\ie, $\mathcal{IN}'_k$) (Line \ref{a-opt:combine}). 
We then update the dataflow fact accordingly via a transfer function (Line \ref{a-opt:transfer}).
Next we check if the updated dataflow fact $\mathcal{OUT}'_k$ is different from that (\ie, $\mathcal{OUT}_k$) at previous superstep. If so, the propagation is employed to update the dataflow fact as the newly computed value (Line \ref{a-opt:fact-update}). At the same time, $\mathcal{OUT}'_k$ is sent as a message to each of its successors $d$ (Line \ref{a-opt:sendmsg}), while activating $d$ for the next superstep (Line \ref{a-opt:active-update}).

\begin{figure}[htb!]
\centering
\hspace{-1em}
\subfloat[initial state]{         
\label{fig:opt-init}
\begin{minipage}[c]{.3\linewidth}
\centering
\includegraphics[width=0.75\textwidth]{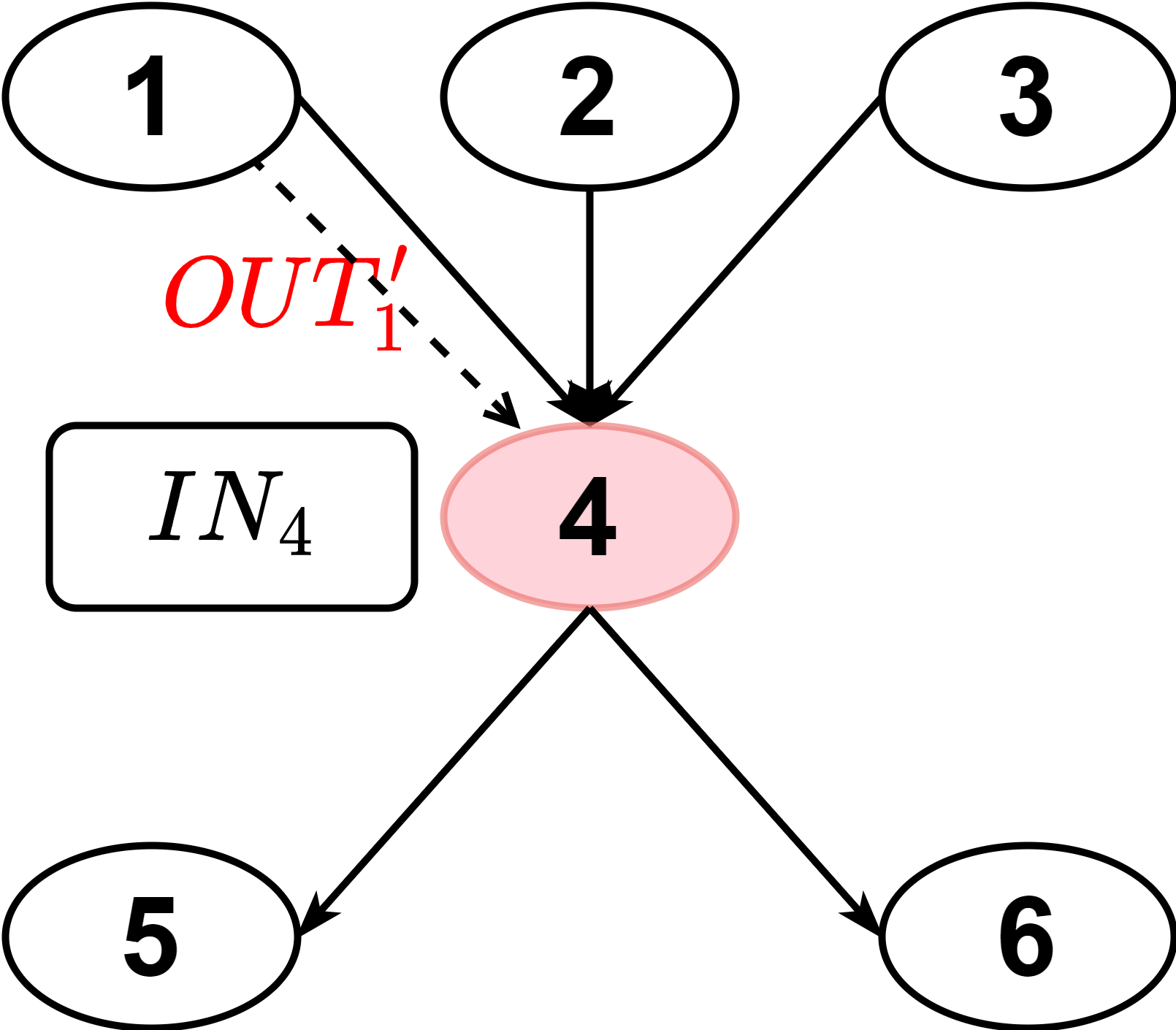}
\end{minipage}
}
\subfloat[gather and merge]{         
\label{fig:opt-gather}
\begin{minipage}[c]{.3\linewidth}
\centering
\includegraphics[width=0.75\textwidth]{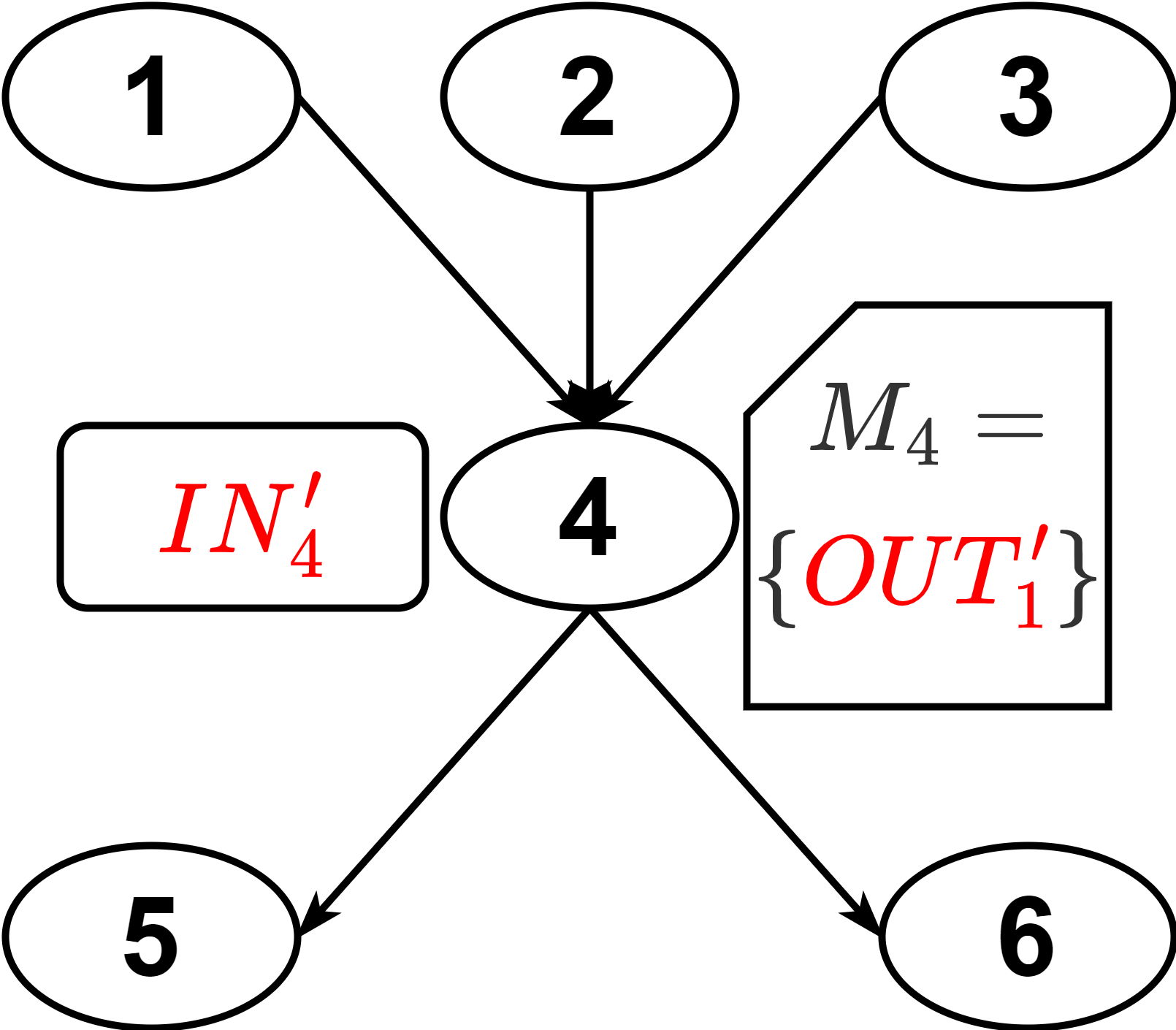}
\end{minipage}
}
\subfloat[transfer and propagate]{         
\label{fig:opt-compute}
\begin{minipage}[c]{.3\linewidth}
\centering
\includegraphics[width=0.75\textwidth]{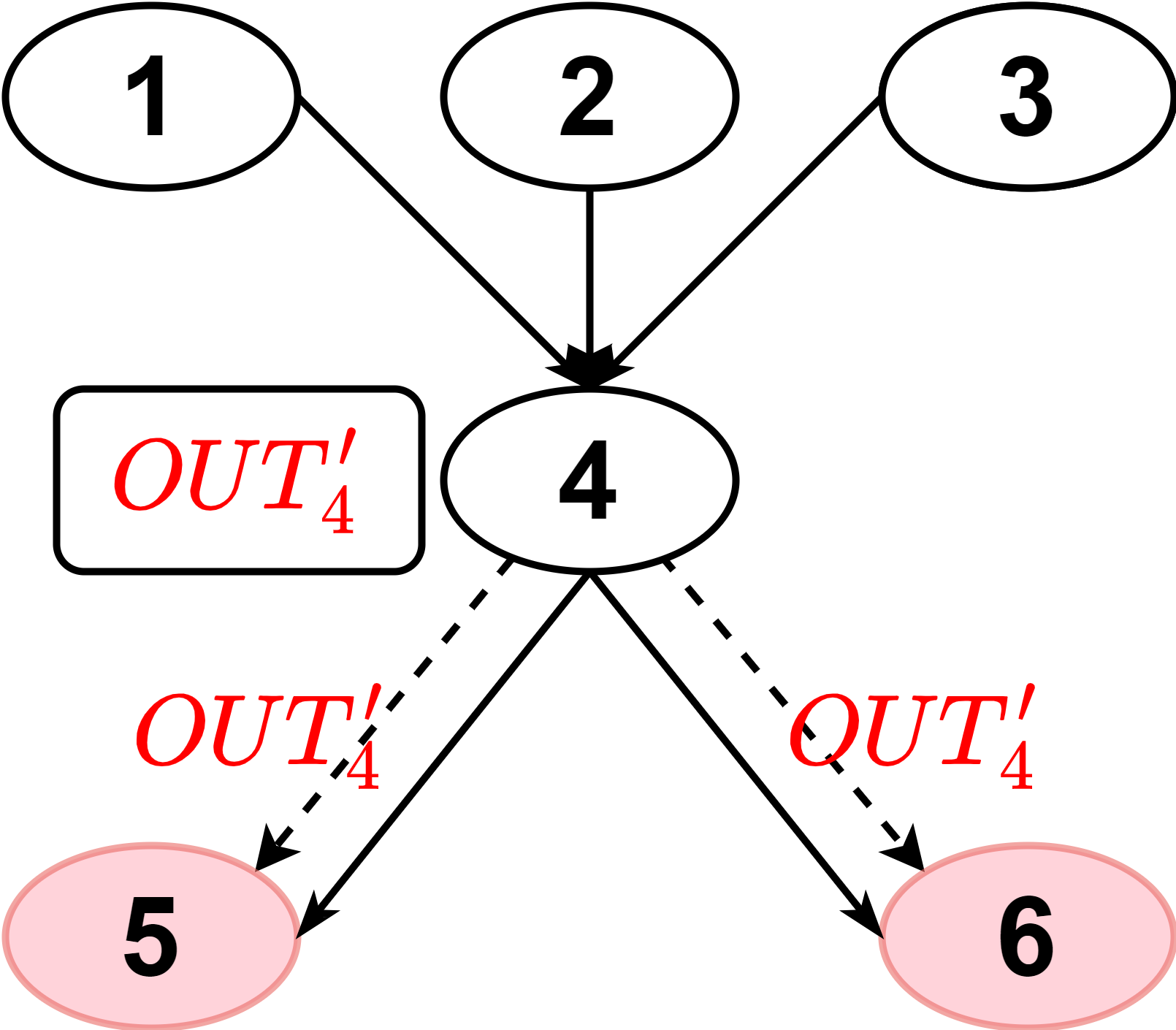}
\end{minipage}
}
\caption{One superstep computation at vertex 4 in Algorithm \ref{a:opt-algo}.}
\label{fig:opt-proc}
\end{figure}

\MyPara{Example.} 
Figure \ref{fig:opt-proc} illustrates the computation procedure at vertex $4$ in Algorithm \ref{a:opt-algo} for the same example as Figure \ref{fig:naive-proc}. 
Suppose that at the beginning of a certain superstep, the active vertex $4$'s predecessor $1$ has the newly updated fact $\mathcal{OUT}'_1$ and sends it as a message to $4$ at the last superstep denoted by the dashed arrows in Figure \ref{fig:opt-init}.
The message is gathered as $\mathcal{M}_4 = \{\mathcal{OUT}'_1\}$ and then $\mathcal{IN}'_4$ is generated by incrementally merging the dataflow facts in $\mathcal{M}_4$ with $\mathcal{IN}_4$ shown as Figure \ref{fig:opt-gather}.
Finally as shown by Figure \ref{fig:opt-compute}, $\mathcal{OUT}_4^\prime$ is computed by performing transfer function on $\mathcal{IN}'_4$. Assuming that $\mathcal{OUT}_4^\prime$ is different from $\mathcal{OUT}_4$, propagation is employed to send the newly updated $\mathcal{OUT}_4^\prime$ to all the successors $5$ and $6$ while marking them as active.

\subsection{Correctness Proof of Optimized Algorithm \label{subsec:proof}}
The underlying rationale of such optimization is that the \emph{merge} operation for the general monotone dataflow analysis satisfies the accumulative property. 
In other words, on each active vertex $k$, merging only the updated dataflow facts of $k$'s predecessors with the old $\mathcal{IN}_k$ of last superstep should produce identical results to that merging the full set of dataflow facts of all its predecessors.    
The following Theorem \ref{theorem:accumulative-property} gives its formal definition.

\begin{theorem}[\textbf{Accumulative Property}]
\label{theorem:accumulative-property}
Given an active vertex $k$, let $preds(k)$ be the set of $k$'s predecessors. Without loss of generality, suppose at the previous superstep, a partial set of $k$'s predecessors \ie, $P'(k) \subseteq preds(k)$ update their outgoing dataflow facts, while the outgoing facts of the remaining \ie, $P(k) = preds(k) - P'(k)$ stay unchanged. $\mathcal{IN}_k$ and $\mathcal{IN}'_k$ indicate the incoming dataflow fact of $k$ at previous and current superstep, respectively. The accumulative property is satisfied if and only if the following equation holds. 
{ \[\mathcal{IN}'_k  \equiv \mathcal{IN}_k {\color{black}{\otimes (}}\otimes_{p \in \textit{P}'(k)}\mathcal{OUT}'_p {\color{black}{)}}\]}
\ie, 
\begin{align}
& {\color{black}{(}} \otimes_{p \in \textit{P}(k)}\mathcal{OUT}_p {\color{black}{) \otimes (}} \otimes_{p \in \textit{P}'(k)}\mathcal{OUT}'_p {\color{black}{)}} \equiv \nonumber\\
              &~~~~~~~~~~~~~~~~~~~~{\color{black}{(}} \otimes_{p \in \textit{preds}(k)}\mathcal{OUT}_p {\color{black}{) \otimes (}} \otimes_{p \in \textit{P}'(k)}\mathcal{OUT}'_p {\color{black}{)}} \nonumber
\end{align}
\end{theorem}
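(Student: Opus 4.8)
The plan is to prove the two-line ``\emph{i.e.}'' equation directly, since the statement already observes that it is equivalent to $\mathcal{IN}'_k \equiv \mathcal{IN}_k \otimes (\otimes_{p \in P'(k)}\mathcal{OUT}'_p)$. First I would pin down the two incoming facts as merges over the predecessor set. Because $preds(k) = P(k) \cup P'(k)$ is a disjoint union and the outgoing facts of $P(k)$ are unchanged between the two supersteps, we have $\mathcal{IN}_k \equiv (\otimes_{p \in P(k)}\mathcal{OUT}_p) \otimes (\otimes_{p \in P'(k)}\mathcal{OUT}_p)$ while $\mathcal{IN}'_k \equiv (\otimes_{p \in P(k)}\mathcal{OUT}_p) \otimes (\otimes_{p \in P'(k)}\mathcal{OUT}'_p)$. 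These are precisely the left-hand side and, after substituting the expression for $\mathcal{IN}_k$, the right-hand side of the displayed equation, so establishing the equation amounts to relating these two regrouped merges.

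Then, starting from the right-hand side, I would expand $\mathcal{IN}_k$ and use the fact that $\otimes$ is a semilattice operation (commutative, associative, and idempotent) to combine the two products indexed by $P'(k)$ factor-by-factor into $\otimes_{p \in P'(k)}(\mathcal{OUT}_p \otimes \mathcal{OUT}'_p)$. At this point the whole argument reduces to a single per-predecessor identity: for every $p \in P'(k)$ the updated fact absorbs the stale one, i.e. $\mathcal{OUT}_p \otimes \mathcal{OUT}'_p \equiv \mathcal{OUT}'_p$. Applying this identity to each factor collapses the combined product back to $\otimes_{p \in P'(k)}\mathcal{OUT}'_p$, which leaves exactly $\mathcal{IN}'_k$ and closes the proof.

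The crux is justifying the absorption identity $\mathcal{OUT}_p \otimes \mathcal{OUT}'_p \equiv \mathcal{OUT}'_p$, and this is where the monotone-framework hypotheses do the real work. I would introduce the order induced by the merge operator, $a \sqsubseteq b \iff a \otimes b = a$, so that $\otimes$ is the greatest lower bound irrespective of whether the client is a may-analysis (join/union) or a must-analysis (meet/intersection). Under this single order the standard Kildall/Kam--Ullman property of a monotone framework---monotone transfer functions over a semilattice with the usual initialization---guarantees that the $\mathcal{OUT}$ value at each vertex forms a descending chain across supersteps; hence whenever $p$ is genuinely updated, $\mathcal{OUT}'_p \sqsubseteq \mathcal{OUT}_p$, which is exactly $\mathcal{OUT}_p \otimes \mathcal{OUT}'_p \equiv \mathcal{OUT}'_p$ by commutativity. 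I expect the main obstacle to be stating this monotonicity cleanly enough to cover both analysis flavours at once; once the order is chosen so that $\otimes$ is always the meet, the may/must distinction disappears and the absorption---and with it the whole theorem---reduces to the chain property plus the semilattice axioms.
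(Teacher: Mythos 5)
Your proposal is correct, and its skeleton is the same as the paper's: both arguments reduce the theorem to the per-predecessor absorption identity $\mathcal{OUT}_p \otimes \mathcal{OUT}'_p \equiv \mathcal{OUT}'_p$ for each $p \in P'(k)$, and then lift it to the full merge by regrouping with commutativity and associativity of $\otimes$ (your factor-by-factor collapse of $(\otimes_{p\in P'(k)}\mathcal{OUT}_p) \otimes (\otimes_{p\in P'(k)}\mathcal{OUT}'_p)$ into $\otimes_{p\in P'(k)}(\mathcal{OUT}_p\otimes\mathcal{OUT}'_p)$ is exactly the paper's chain of equations, read in the opposite direction). Where you genuinely differ is in how absorption is justified. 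The paper runs two parallel case analyses: increasing analysis with $\otimes=\sqcup$ and premise $\mathcal{OUT}_p \leq \mathcal{OUT}'_p$, and decreasing analysis with $\otimes=\sqcap$ and premise $\mathcal{OUT}'_p \leq \mathcal{OUT}_p$, each time deriving absorption from the least-upper-bound (resp.\ greatest-lower-bound) characterization together with anti-symmetry of $\leq$. You instead pass to the order induced by the merge itself, $a \sqsubseteq b \iff a\otimes b = a$, under which $\otimes$ is always a meet and both analysis flavours become the single premise that updated values descend; one absorption argument then covers both cases. Your route is more economical and makes clear that absorption is a semilattice identity rather than a may/must phenomenon, while the paper's route stays in the client-facing lattice order, which is the form in which the monotonicity of increasing/decreasing analyses is usually stated and which the paper reuses later for the incremental property (Theorem \ref{theorem:incremental-property}). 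One caveat to make explicit in a final write-up: like the paper, you take the cross-superstep monotonicity $\mathcal{OUT}'_p \sqsubseteq \mathcal{OUT}_p$ as a premise supplied by the monotone framework (it ultimately rests on bottom/top initialization and monotone transfer functions), not as a consequence of the semilattice axioms themselves.
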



\begin{proof}
Generally, there are two cases of monotone dataflow analysis, namely (1) increasing analysis with the join operator $\sqcup$ and (2) decreasing analysis with the meet operator $\sqcap$.

\textbf{For case (1):} $\otimes = \sqcup$ and for each predecessor $p \in P'(k)$,  $\mathcal{OUT}_p  \leq \mathcal{OUT}'_p$ holds where $\leq$ denotes the partial order relation and $\leq$ is reflexive, anti-symmetric and transitive according to its definition.  

As defined by the $\sqcup$ operator which computes the least upper bound of two elements in the lattice, the following inequality \ref{eq:left} holds.  
\begin{equation}\label{eq:left}
\mathcal{OUT}'_p  \leq \mathcal{OUT}_p \sqcup \mathcal{OUT}'_p
\end{equation}

Given that $\mathcal{OUT}_p  \leq \mathcal{OUT}'_p$ (for increasing analysis) and $\mathcal{OUT}'_p  \leq \mathcal{OUT}'_p$ ($\leq$ is reflexive), the following can be deduced:
\begin{equation}\label{eq:right}
\mathcal{OUT}_p \sqcup \mathcal{OUT}'_p \leq \mathcal{OUT}'_p
\end{equation}

As $\leq$ is anti-symmetric, given inequalities \ref{eq:left} and \ref{eq:right} hold, we can imply the following equation \ref{eq:equal}.
\begin{equation}\label{eq:equal}
\mathcal{OUT}'_p  \equiv \mathcal{OUT}_p \sqcup \mathcal{OUT}'_p
\end{equation}

Therefore, for all $p \in \textit{P}'(k)$, the following equation \ref{eq:pprime} holds. 
\begin{equation}\label{eq:pprime}
\sqcup_{p \in \textit{P}'(k)}\mathcal{OUT}'_p \equiv \sqcup_{p \in \textit{P}'(k)}(\mathcal{OUT}_p \sqcup \mathcal{OUT}'_p)
\end{equation}

Because the $\sqcup$ operator in monotone dataflow analysis is both associative and commutative, we can imply that:
\begin{equation}\label{eq:ie5}
\sqcup_{p \in \textit{P}'(k)}\mathcal{OUT}'_p \equiv {\color{black}{(}} \sqcup_{p \in \textit{P}'(k)}\mathcal{OUT}_p {\color{black}{) ~\sqcup~(}} \sqcup_{p \in \textit{P}'(k)}\mathcal{OUT}'_p {\color{black}{)}}
\end{equation}

By joining $\sqcup_{p \in \textit{P}(k)}\mathcal{OUT}_p$ with both sides of the equation \ref{eq:ie5}, we can get the following: 
\begin{align}\label{eq:ie6}
&\!\!\!\!\!\! {\quad\quad \color{black}{(}} \sqcup_{p \in \textit{P}(k)}\mathcal{OUT}_p {\color{black}{)~\sqcup~(}} \sqcup_{p \in \textit{P}'(k)}\mathcal{OUT}'_p {\color{black}{)}} \equiv \nonumber\\
              &~~~~~{\color{black}{(}}\sqcup_{p \in \textit{P}(k)}\mathcal{OUT}_p {\color{black}{) \sqcup}} (  {\color{black}{(}} \sqcup_{p \in \textit{P}'(k)}\mathcal{OUT}_p  {\color{black}{) \sqcup(}} \sqcup_{p \in \textit{P}'(k)}\mathcal{OUT}'_p {\color{black}{)}} ) 
\end{align}

And further equation \ref{eq:ie7} is deduced since $\sqcup$ is associative.
\begin{align}\label{eq:ie7}
&\!\!\!\!\!\!\!\! {\color{black}{(}} \sqcup_{p \in \textit{P}(k)}\mathcal{OUT}_p  {\color{black}{) ~\sqcup~(}} \sqcup_{p \in \textit{P}'(k)}\mathcal{OUT}'_p  {\color{black}{)}} \equiv \nonumber\\
              &~~~~~~~~~~( {\color{black}{(}} \sqcup_{p \in \textit{P}(k)}\mathcal{OUT}_p  {\color{black}{) ~\sqcup}~(} \sqcup_{p \in \textit{P}'(k)}\mathcal{OUT}_p {\color{black}{)}} ) {\color{black}{~\sqcup~(}} \sqcup_{p \in \textit{P}'(k)}\mathcal{OUT}'_p {\color{black}{)}}
\end{align}

Since equation \ref{eq:ie8} holds,
\begin{equation}\label{eq:ie8}
{\color{black}{(}} \sqcup_{p \in \textit{P}(k)}\mathcal{OUT}_p 
 {\color{black}{) ~\sqcup~(}}  \sqcup_{p \in \textit{P}'(k)}\mathcal{OUT}_p  {\color{black}{)}} \equiv \sqcup_{p \in \textit{preds}(k)}\mathcal{OUT}_p
\end{equation}

The final equation \ref{eq:final} for case (1) is thus proved. 
\begin{align}\label{eq:final}
&\!\!\!\!\!\!\!\!  {\color{black}{(}} \sqcup_{p \in \textit{P}(k)}\mathcal{OUT}_p {\color{black}{) ~\sqcup~(}} \sqcup_{p \in \textit{P}'(k)}\mathcal{OUT}'_p  {\color{black}{)}} \equiv \nonumber\\
              &~~~~~~~~~~~~~~~~~~~~ {\color{black}{(}} \sqcup_{p \in \textit{preds}(k)}\mathcal{OUT}_p 
 {\color{black}{) ~\sqcup~(}} \sqcup_{p \in \textit{P}'(k)}\mathcal{OUT}'_p {\color{black}{)}}
\end{align}

\MyPara{\color{black}{Example.}} 
{\color{black}{We use the vertex 4 in Figure \ref{fig:opt-proc} as an example to demonstrate the proof procedure. 
Assuming that $k = 4$, $preds(4) = \{1, 2, 3\}$.  Given that at previous superstep, predecessor 1 updates its outgoing dataflow fact, thus $P'(4) = \{1\}$ and $P(4) = preds(4) - P'(4) = \{2, 3\}$. The incoming dataflow fact of $4$ at previous and current supersteps are $\mathcal{IN}_4$ and $\mathcal{IN}'_4$, respectively. For case (1), suppose each dataflow fact corresponds to a set. The join operator $\sqcup$ indicates the set union $\cup$. The partial order relation $\leq$ is set inclusion $\subseteq$. Validating the accumulative property specific to this example is to prove the following equation holds: 
\[\mathcal{IN}'_4 \equiv  \mathcal{IN}_4 \cup \mathcal{OUT}'_1\]

Given the join operator $\cup$ and partial order relation $\subseteq$, it is apparent that the equation $\mathcal{OUT}'_1 \equiv \mathcal{OUT}_1 \cup \mathcal{OUT}'_1$ holds according to \ref{eq:left} and \ref{eq:right}. 
\begin{align}
                                    & \mathcal{OUT}'_1 \equiv \mathcal{OUT}_1 \cup \mathcal{OUT}'_1 \nonumber\\   
    \overset{\ref{eq:ie6}} \Longrightarrow~ & (\mathcal{OUT}_2 \cup \mathcal{OUT}_3) \cup \mathcal{OUT}'_1 \equiv \nonumber\\ 
    &~~~~~~~~~ (\mathcal{OUT}_2 \cup \mathcal{OUT}_3)  \cup (\mathcal{OUT}_1 \cup \mathcal{OUT}'_1) \nonumber\\  
    \overset{\ref{eq:ie7}} \Longrightarrow~ & (\mathcal{OUT}_2 \cup \mathcal{OUT}_3) \cup \mathcal{OUT}'_1 \equiv \nonumber\\ 
    &~~~~~~~~~ (\mathcal{OUT}_2 \cup \mathcal{OUT}_3  \cup \mathcal{OUT}_1) \cup \mathcal{OUT}'_1 \nonumber\\    
    \overset{\ref{eq:ie8}} \Longrightarrow~ & \mathcal{IN}'_4 \equiv  \mathcal{IN}_4 \cup \mathcal{OUT}'_1 \nonumber
\end{align}
}}

\textbf{For case (2):} $\otimes = \sqcap$ and for each predecessor $p \in P'(k)$,  $\mathcal{OUT}'_p  \leq \mathcal{OUT}_p$ holds. We can follow the similar proof logic.  

As the meet $\sqcap$ operator calculates the greatest lower bound of elements, the following inequality \ref{eq:left2} holds.  
\begin{equation}\label{eq:left2}
\mathcal{OUT}_p \sqcap \mathcal{OUT}'_p \leq \mathcal{OUT}'_p 
\end{equation}

Given that $\mathcal{OUT}'_p  \leq \mathcal{OUT}_p$ (for decreasing analysis) and $\mathcal{OUT}'_p  \leq \mathcal{OUT}'_p$ ($\leq$ is reflexive), the following can be deduced:
\begin{equation}\label{eq:right2}
\mathcal{OUT}'_p \leq \mathcal{OUT}_p \sqcap \mathcal{OUT}'_p
\end{equation}

As $\leq$ is anti-symmetric, given inequalities \ref{eq:left2} and \ref{eq:right2}, the following equation \ref{eq:equal2} can be concluded.
\begin{equation}\label{eq:equal2}
\mathcal{OUT}'_p  \equiv \mathcal{OUT}_p \sqcap \mathcal{OUT}'_p
\end{equation}

Therefore, we can imply the following equations. 
\begin{align}
            & \sqcap_{p \in \textit{P}'(k)}\mathcal{OUT}'_p \equiv \sqcap_{p \in \textit{P}'(k)}(\mathcal{OUT}_p \sqcap \mathcal{OUT}'_p) \nonumber\\
\Rightarrow~    & \sqcap_{p \in \textit{P}'(k)}\mathcal{OUT}'_p \equiv {\color{black}{(}} 
 \sqcap_{p \in \textit{P}'(k)}\mathcal{OUT}_p  {\color{black}{) ~\sqcap~(}} \sqcap_{p \in \textit{P}'(k)}\mathcal{OUT}'_p  {\color{black}{)}} \nonumber\\
\Rightarrow~  &  {\color{black}{(}} \sqcap_{p \in \textit{P}(k)}\mathcal{OUT}_p  {\color{black}{) ~\sqcap~(}} \sqcap_{p \in \textit{P}'(k)}\mathcal{OUT}'_p  {\color{black}{)}} \equiv \nonumber\\
            &   ~~~~~~~~~ {\color{black}{(}} \sqcap_{p \in \textit{P}(k)}\mathcal{OUT}_p  {\color{black}{) ~\sqcap~}} ( {\color{black}{(}} \sqcap_{p \in \textit{P}'(k)}\mathcal{OUT}_p  {\color{black}{) ~\sqcap~(}} \sqcap_{p \in \textit{P}'(k)}\mathcal{OUT}'_p {\color{black}{)}} ) \nonumber\\
\Rightarrow~    &  {\color{black}{(}} \sqcap_{p \in \textit{P}(k)}\mathcal{OUT}_p  {\color{black}{) ~\sqcap~(}} \sqcap_{p \in \textit{P}'(k)}\mathcal{OUT}'_p  {\color{black}{)}} \equiv \nonumber\\
            &   ~~~~~~~~~ ( {\color{black}{(}} \sqcap_{p \in \textit{P}(k)}\mathcal{OUT}_p  {\color{black}{) ~\sqcap~(}} \sqcap_{p \in \textit{P}'(k)}\mathcal{OUT}_p {\color{black}{)}} )  {\color{black}{~\sqcap~(}} \sqcap_{p \in \textit{P}'(k)}\mathcal{OUT}'_p  {\color{black}{)}} \nonumber\\
\Rightarrow~   &  {\color{black}{(}} \sqcap_{p \in \textit{P}(k)}\mathcal{OUT}_p  {\color{black}{) ~\sqcap~(}} \sqcap_{p \in \textit{P}'(k)}\mathcal{OUT}'_p  {\color{black}{)}} \equiv \nonumber\\
            &   ~~~~~~~~~ {\color{black}{(}} \sqcap_{p \in \textit{preds}(k)}\mathcal{OUT}_p  {\color{black}{) ~\sqcap~(}} \sqcap_{p \in \textit{P}'(k)}\mathcal{OUT}'_p  {\color{black}{)}} \label{eq:final2}
\end{align}

As equations \ref{eq:final} and \ref{eq:final2} hold for each case, we ultimately complete the proof of Theorem \ref{theorem:accumulative-property}.

\end{proof}

\change{
\section{Distributed Incremental Dataflow Analysis}
\label{sec:design-incre}

In the previous section, we present a dedicated distributed worklist algorithm to accelerate whole-program dataflow analysis by leveraging a large amount of computing resources in the cloud.
However, modern real-world programs are updated frequently, especially the popular open-source projects with extensive developer communities, or large-scale industrial production code.
Under this scenario, when a large-scale program is slightly updated, re-performing whole-program dataflow analysis from scratch could be time-consuming and expensive.

To efficiently handle the small batch of updates on real-world programs, we extend the distributed dataflow analysis framework to support incremental analysis. To this end, we develop \toolincre which realizes distributed incremental dataflow analysis on the cloud. 
Basically, \toolincre reuses the existing analysis result and only performs local analysis on the updated parts to achieve efficiency in terms of both memory and time costs.

\subsection{Workflow of Distributed Incremental Dataflow Analysis \label{subsec:overview-incre}}

\begin{figure}[htb!]
\centering
{         
\begin{minipage}[c]{0.7\linewidth}
\centering
\includegraphics[width=1\textwidth]{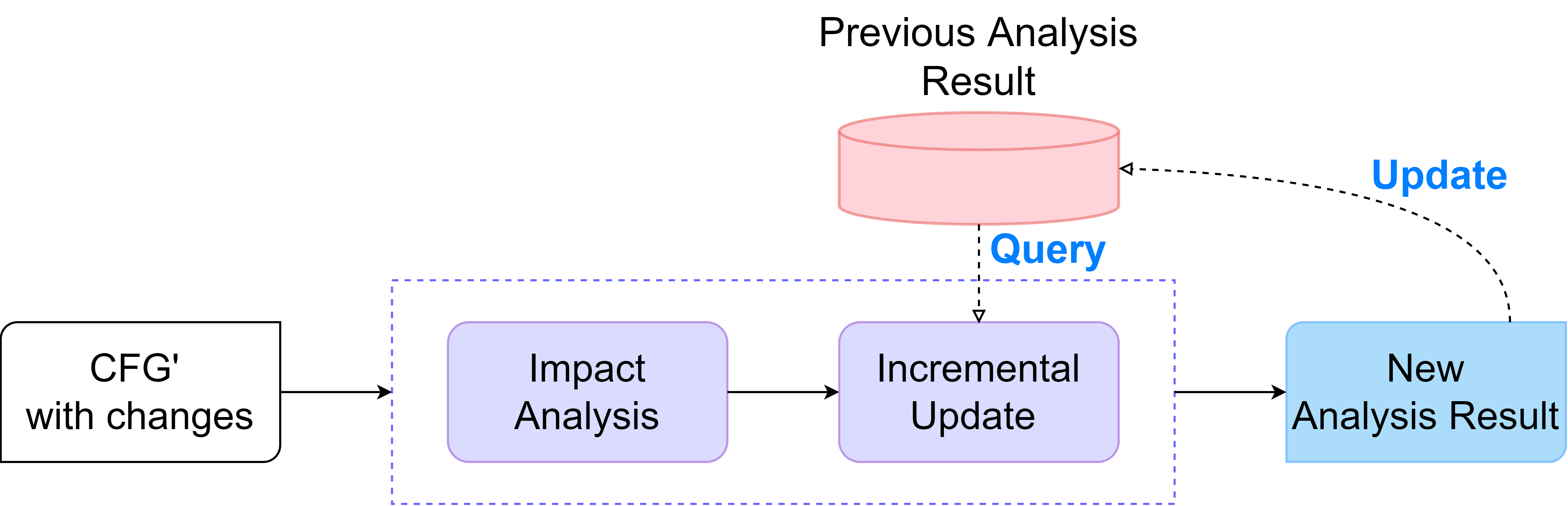}
\end{minipage}
}
\caption{Workflow of Distributed Incremental Dataflow Analysis.}
\label{fig:incre-framework}
\end{figure}


Figure \ref{fig:incre-framework} illustrates the workflow of the distributed incremental dataflow analysis. Given a CFG with a batch of updates, the framework first performs impact analysis to analyze the impact of the updates.
It determines a set of the affected CFG nodes whose associated dataflow facts have to be updated.
The affected nodes and their related edges form a new sub-CFG, where incremental update should be performed. 
Next, the incremental update takes the affected sub-CFG and the previous dataflow analysis result as input. It incrementally update the dataflow facts of the sub-CFG on the basis of previous results in an efficient manner.

Both above analysis processes are implemented as large-scale graph processing in distributed manner. 
Moreover, we exploit a distributed database (such as Redis) to maintain the whole-program analysis result for efficient queries and updates.
Redis\footnote{\url{https://redis.io/}} is a scalable, efficient and reliable distributed database system, which is suitable for our scenario where a number of dataflow facts need to be stored, queried and updated. 
As follows, we discuss our distributed incremental analysis and an optimized version in \cref{subsec:incremental-naive} and \cref{subsec:incremental-opt}, respectively.

\subsection{Distributed Incremental Analysis Algorithm \label{subsec:incremental-naive}} 



\begin{figure}[htb!]
\centering
\subfloat[added edge between existing nodes]{
\label{fig:CFG-add1}
\begin{minipage}[c]{.2\linewidth}
\centering
\includegraphics[width=1\textwidth]{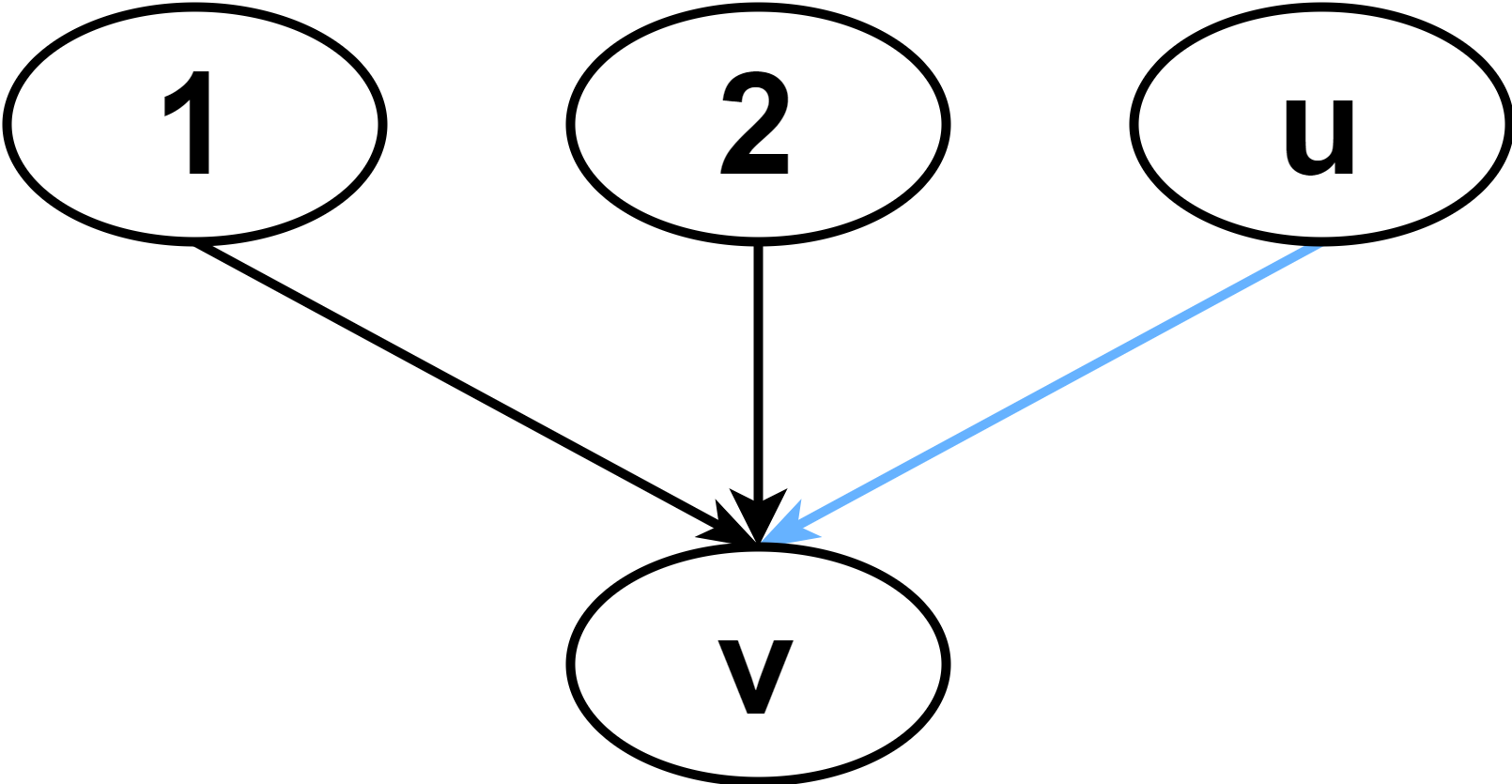}
\end{minipage}
}
\hspace{0.6em}
\subfloat[added source node]{    
\label{fig:CFG-add2}
\begin{minipage}[c]{.2\linewidth}
\centering
\includegraphics[width=1\textwidth]{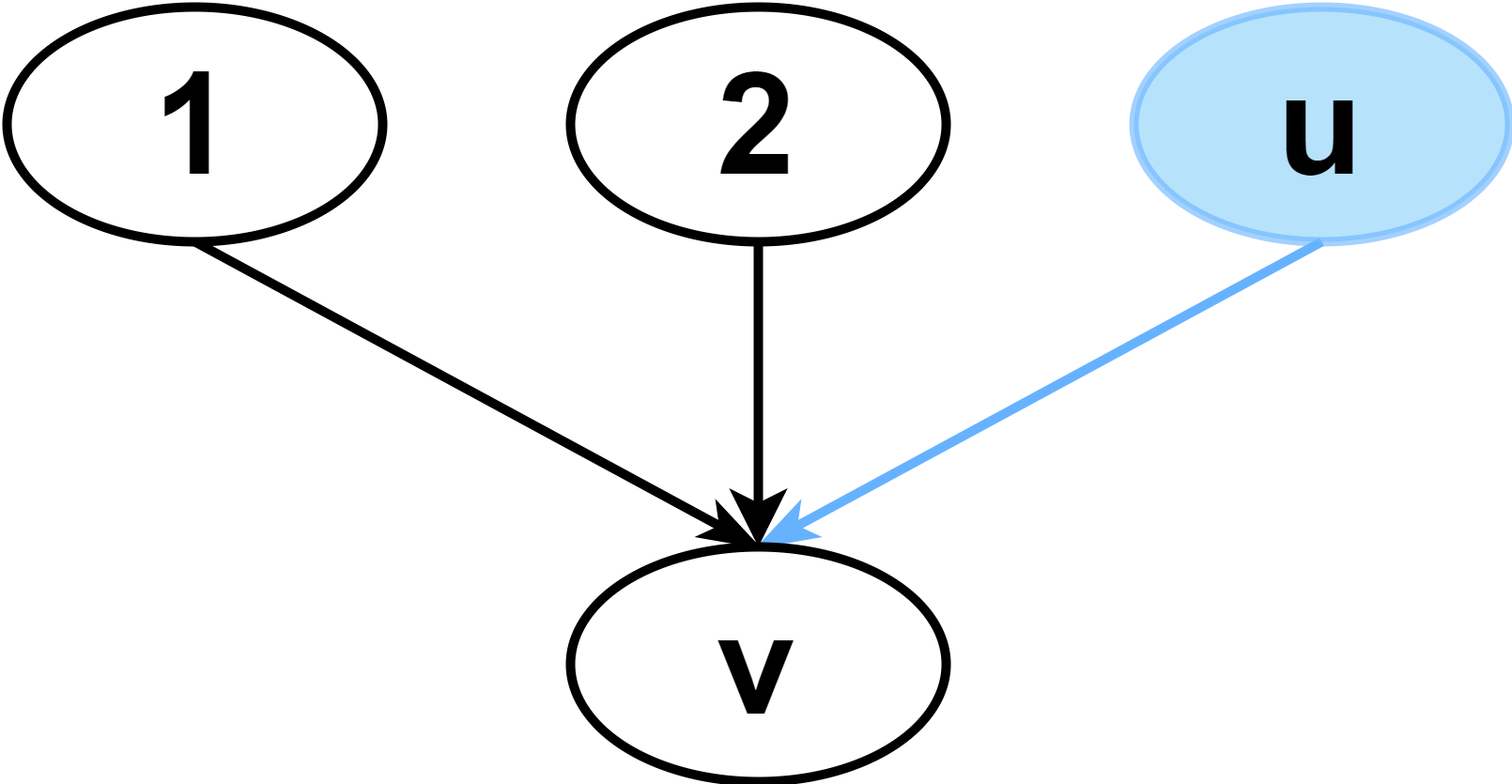}
\end{minipage}
}
\hspace{0.6em}
\subfloat[added destination node]{         
\label{fig:CFG-add3}
\begin{minipage}[c]{.2\linewidth}
\centering
\includegraphics[width=1\textwidth]{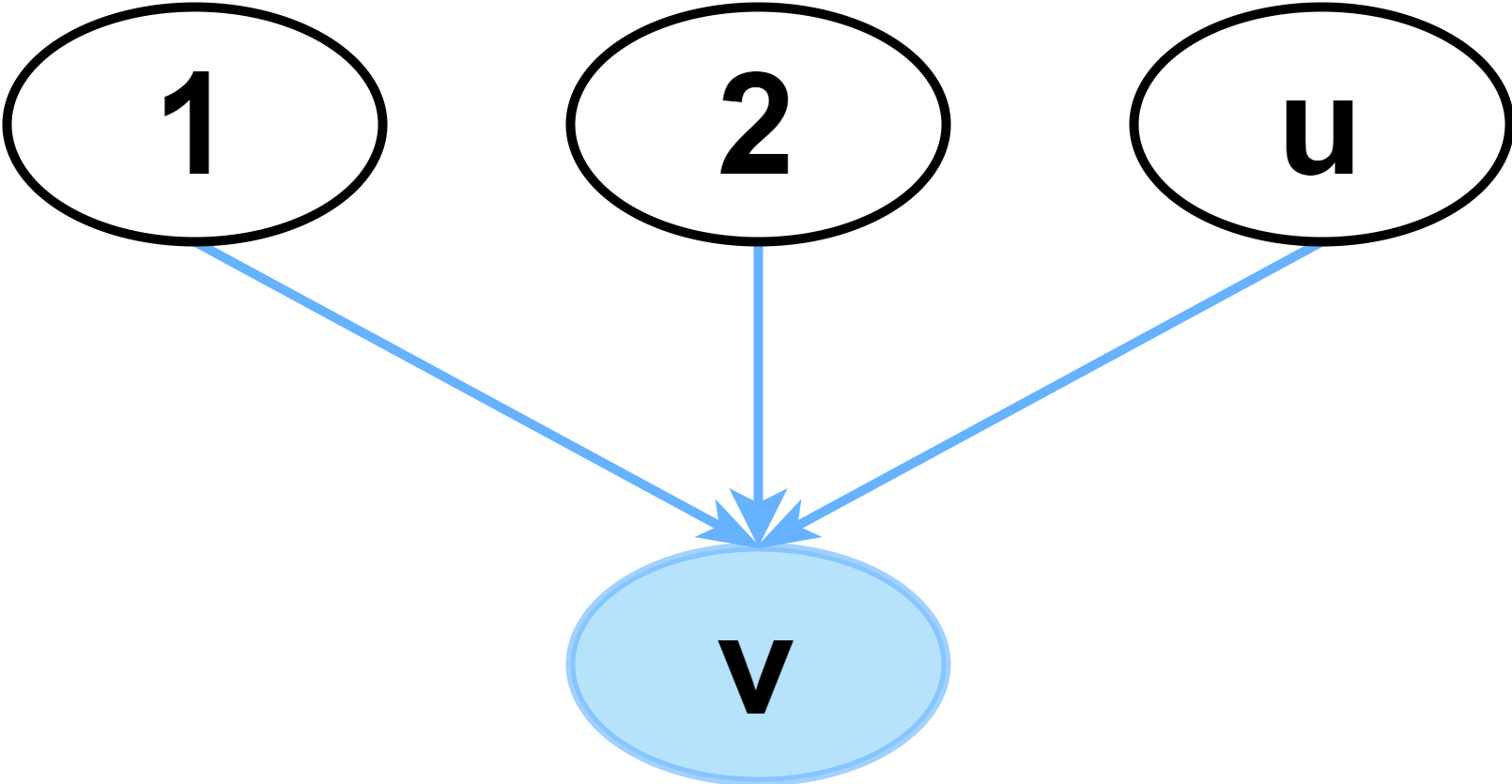}
\end{minipage}
}
\hspace{0.6em}
\subfloat[deleted edge between existing nodes]{    
\label{fig:CFG-del1}
\begin{minipage}[c]{.2\linewidth}
\centering
\includegraphics[width=1\textwidth]{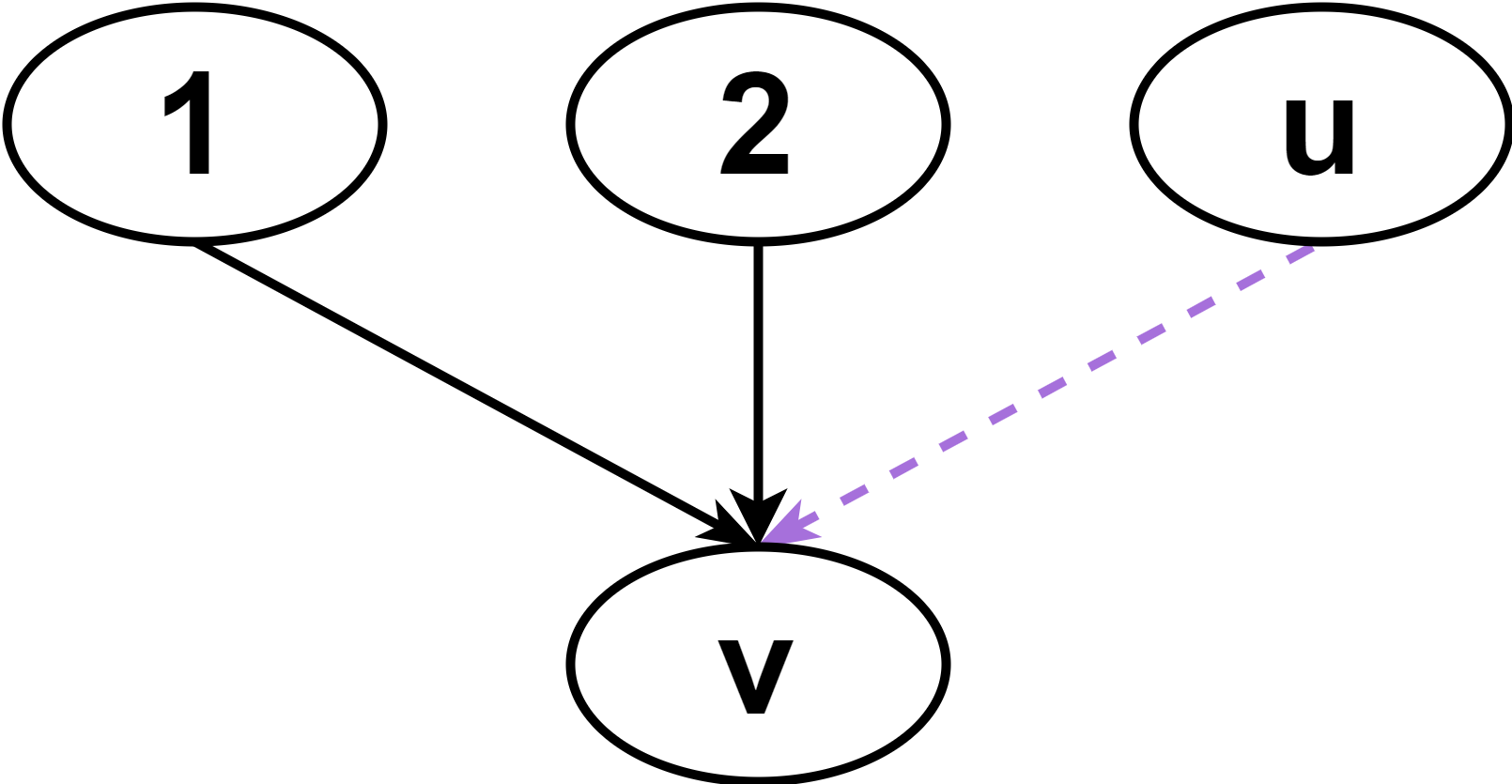}
\end{minipage}
}

\subfloat[deleted source node]{         
\label{fig:CFG-del2}
\begin{minipage}[c]{.2\linewidth}
\centering
\includegraphics[width=1\textwidth]{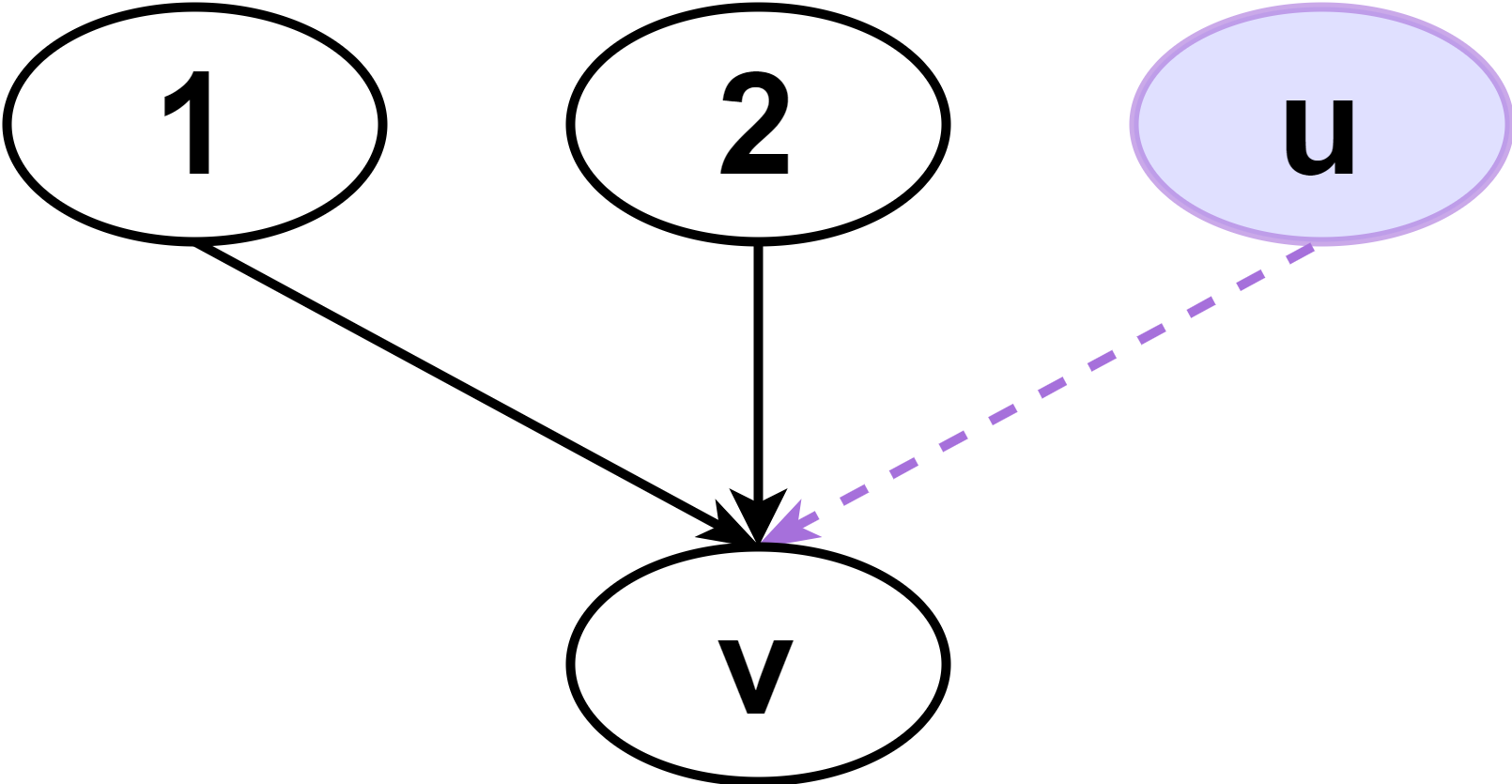}
\end{minipage}
}
\hspace{0.6em}
\subfloat[deleted destination node]{  
\label{fig:CFG-del3}
\begin{minipage}[c]{.2\linewidth}
\centering
\includegraphics[width=1\textwidth]{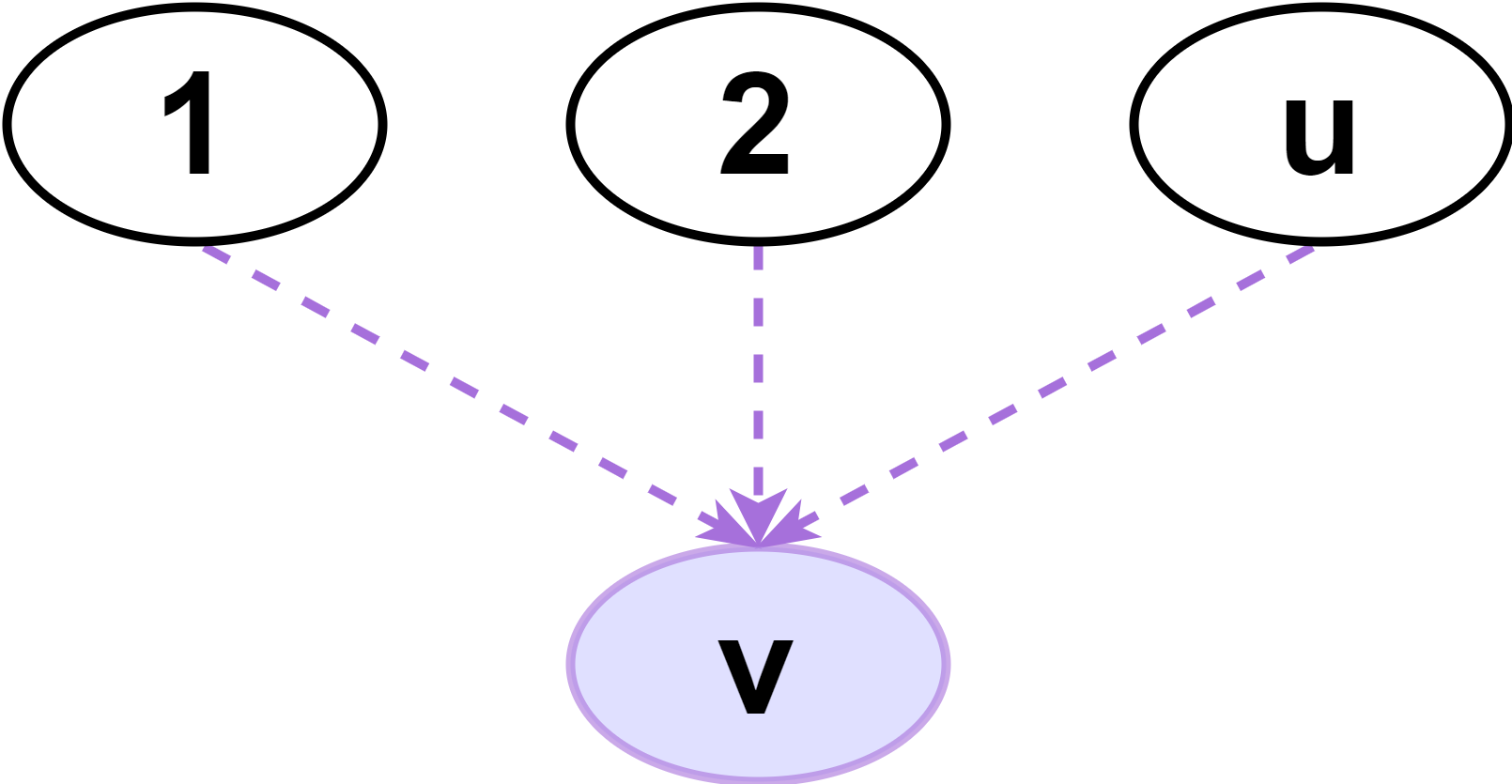}
\end{minipage}
}
\hspace{0.6em}
\subfloat[changed source node]{ 
\label{fig:CFG-change2}
\begin{minipage}[c]{.2\linewidth}
\centering
\includegraphics[width=1\textwidth]{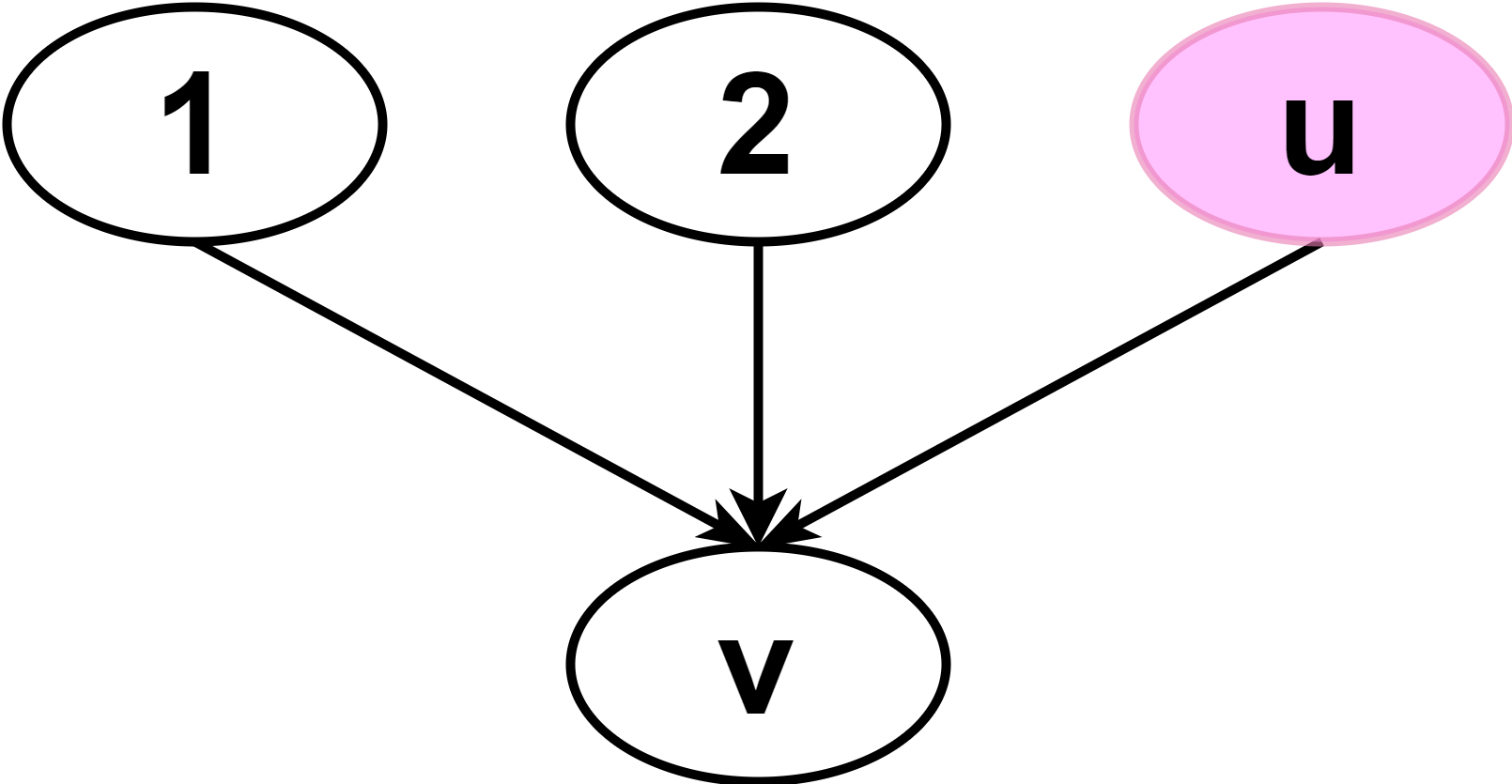}
\end{minipage}
}
\hspace{0.6em}
\subfloat[changed destination node]{     
\label{fig:CFG-change1}
\begin{minipage}[c]{.2\linewidth}
\centering
\includegraphics[width=1\textwidth]{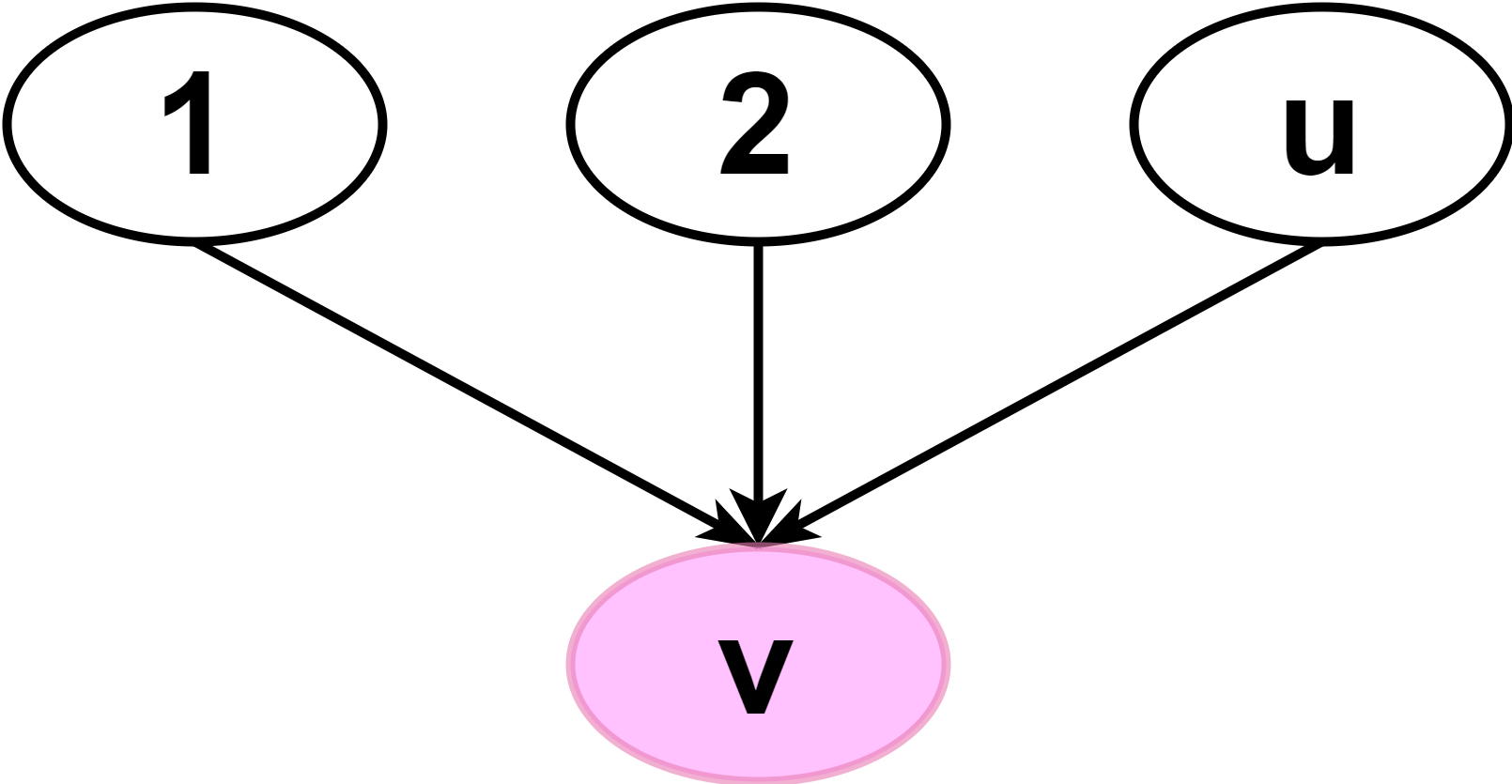}
\end{minipage}
}
\caption{Atomic changes on CFG.}
\label{fig:CFG-atomic-change}
\end{figure}

Given the code changes of CFG, we first conduct an impact analysis to identify a set of affected nodes on which incremental updates have to be performed.
Here we first enumerate all the potential cases of atomic changes on CFG shown as Figure \ref{fig:CFG-atomic-change}.
The atomic changes determine the initial set of affected nodes at the beginning. 

Figures {\ref{fig:CFG-add1}-\ref{fig:CFG-add3}} illustrate the cases of added nodes and edges on CFG at the smallest granularity. These kinds of changes affect the dataflow fact by influencing the combine procedure.
For the destination node $v$ of a newly added edge $e = \langle u,v \rangle$, the edge brings a new dataflow fact $\mathcal{OUT}_u$ from its new predecessor $u$. As the set of dataflow facts taken by combine operator is changed, the dataflow fact of $v$ has to be updated. As a result, in both three cases of addition, the node $v$ is apparently affected. 
Moreover, as a newly added source node shown as Figure \ref{fig:CFG-add2}, $u$ automatically belongs to the affected nodes.
Figures {\ref{fig:CFG-del1}-\ref{fig:CFG-del3}} illustrate the cases involving edge and node deletion. Similar to the cases of addition, the cases of deletion affect the combine procedure. Thus for all the three cases, the destination node $v$ is identified as affected.
Because $u$ in Figure \ref{fig:CFG-del2} and $v$ in Figure \ref{fig:CFG-del3} are deleted and absent in the updated CFG, we do not include them into the sub-CFG, but marking them as deleted.
Figures \ref{fig:CFG-change2} and \ref{fig:CFG-change1} show the cases with node change. The node change affects the dataflow facts by changing the transfer function. 
When node changes (\eg, the node $u$ in Figure \ref{fig:CFG-change2}), its transfer function changes as well.
The new transfer function can convert the old incoming fact into a new outgoing fact that is different from the previous one.
The associated dataflow fact needs to be updated. Hence, the changed node is determined as affected.

From the above,  it can be seen that the nodes where the atomic change occurs (\eg, nodes $u$ and $v$) are obviously affected, and their dataflow facts need to be updated accordingly. 
When the new dataflow fact of a node $v$ is transferred and propagated along the CFG iteratively, it also has a impact on all of its successor nodes.
As such, we propose a transitive closure analysis (Algorithm \ref{a:transitive-closure}) to identify all the nodes affected by changes. 
Next, all the affected nodes together with edges among them constitute a sub-CFG. 
We then directly invoke the whole-program dataflow analysis algorithm (Algorithm \ref{a:opt-algo}) over the sub-CFG to realize efficient incremental computation. 
%

\begin{algorithm}[htb!]
	\caption{\small Distributed Incremental Analysis Algorithm}
	\label{a:reachability-naive-algo}
	\DontPrintSemicolon
	\small
	\KwData{
    $\mathcal{C}$: all the atomic changes in the CFG; 
    $\mathcal{R}$: the previous dataflow analysis result;
    $\mathcal{A}$: a set of affected nodes whose dataflow fact needs to be updated; 
    $\mathcal{G}$: a sub-CFG consisting of all the affected nodes and edges among them.}
	
\BlankLine
\tcp{\textcolor{blue}{impact analysis}}
     $\mathcal{A} \leftarrow \emptyset$ \label{a-naive-reach:set-init} \;
    \ForPar {each atomic change $c \in \mathcal{C}$}
    { \label{a-naive-reach:set-base} 
        \Switch{the type of $c$}
        { 
            \uCase{added edge $\langle u \rightarrow v \rangle$ between existing nodes}
            {   \label{a-naive-reach:add-begin}
                $\mathcal{A} \leftarrow \mathcal{A} \cup \{v\} $  \label{a-naive-reach:add-edge} \;  
            }
            \uCase{added source node $u$ and edge $\langle u \rightarrow v \rangle$}
            {   
                $\mathcal{A} \leftarrow \mathcal{A} \cup \{u, v\} $ \label{a-naive-reach:add-src}\;
            }
            \uCase{added destination node $v$}
            {   
                $\mathcal{A} \leftarrow \mathcal{A} \cup \{v\} $ \label{a-naive-reach:add-dst}\;
            }
            \uCase{deleted edge $\langle u \rightarrow v \rangle$ between existing nodes}
            {   \label{a-naive-reach:delete-begin}
                $\mathcal{A} \leftarrow \mathcal{A} \cup \{v\} $ \label{a-naive-reach:delete-edge}\;
            }
            \uCase{deleted source node $u$ and edge $\langle u \rightarrow v \rangle$}
            {   
                $\mathcal{A} \leftarrow \mathcal{A} \cup \{v\} $ \label{a-naive-reach:delete-src}\;
            }
            \uCase{deleted destination node $v$}
            {   
            }
            \uCase{changed source node $u$}
            {   
                $\mathcal{A} \leftarrow \mathcal{A} \cup \{u\} $  \label{a-naive-reach:change-src} \;
            }
            \uCase{changed destination node $v$}
            {   
                $\mathcal{A} \leftarrow \mathcal{A} \cup \{v\} $ \label{a-naive-reach:change-dst} \;
            }
        }
    }
    
    $\mathcal{A} \gets $ \textsc{TransitiveClosure}($\mathcal{A}, CFG$) \label{a-naive-reach:set-trans}\;
    $V\mathcal{(G)} \leftarrow \mathcal{A}$ \textcolor{olive}{/*all the affected nodes constitute the nodes of sub-CFG*/} \label{a-naive-reach:sub-nodes} \;
    $E\mathcal{(G)} \!\!\leftarrow\!\! \{ \forall \langle u \!\!\rightarrow\!\! v \rangle \!\!\in\!\! CFG ~|~ u \!\in\! A ~\&~ v \!\in\! A  \}$ \textcolor{olive}{/*edges between affected nodes constitute edges of sub-CFG*/} \label{a-naive-reach:sub-edges} \;

\BlankLine
\tcp{\textcolor{blue}{incremental update}}   

    \ForPar {each node $k \in V\mathcal{(G)}$}
    {
        $\mathcal{IN}_k \leftarrow \bot/\top $; $\mathcal{OUT}_k \leftarrow \bot/\top$; \textcolor{olive}{/*initialize the dataflow facts of each node in the sub-CFG as the bottom (or top) value for increasing (or decreasing) analysis*/} \label{a-naive-reach:fact-init}\;
        \For {each predecessor $p$ of $k$ in the CFG \label{a-naive-reach:pred}}
        {        
            \If {$p \notin V\mathcal{(G)}$\label{a-naive-reach:pred-notin}}   
            { 
               $\mathcal{OUT}_p \!\leftarrow\!  \textsc{QueryOUT}(\mathcal{R}, p)$ \textcolor{olive}{/*query existing outgoing fact for unaffected predecessor*/} \label{a-naive-reach:pred-query} \; 
              $\mathcal{M}_k \leftarrow \mathcal{M}_k \cup \mathcal{OUT}_p$ \textcolor{olive}{/*initialize the value of $\mathcal{M}_k$ using $\mathcal{OUT}_p$*/} \label{a-naive-reach:message-init}\;
            }
        }
    }
    \textsc{Algorithm 4}($\mathcal{G}$) \textcolor{olive}{/*invoke Algorithm 4 on sub-CFG for incremental update*/}\label{algo:call4-naive}\;
\end{algorithm}

\begin{algorithm}
\caption{\small Transitive Closure Computation}
	\label{a:transitive-closure}
	\DontPrintSemicolon
	\small
 	\KwData{
    $\mathcal{W}$: the list of all active nodes during analysis;
    $\mathcal{A}$: a set of affected nodes.}
    
    \BlankLine
    
    $\mathcal{W} \leftarrow \mathcal{A}$ \;
    \Repeat{$\mathcal{W} == \emptyset$}
    {\ForPar {each node $k \in \mathcal{W}$}
        {
            remove $k$ from $\mathcal{W}$ \;
            \For {each successor $d$ of $k$ in the CFG}
                {
                \If{$d \notin \mathcal{A}$}
                {
                    $\mathcal{W}^{'} \leftarrow \mathcal{W'} \cup \{ d \}$ \;
                    $\mathcal{A} \leftarrow \mathcal{A} \cup \{ d \}$ \;
                }
                }
        }
        \textsc{Synchronize}()  \;
        $\mathcal{W} \leftarrow \mathcal{W'}$
    } 
\end{algorithm}

Algorithm \ref{a:reachability-naive-algo} gives the detailed algorithm for our distributed incremental analysis. 
At the beginning, the set of affected nodes $\mathcal{A}$ is initialized as empty (Line~\ref{a-naive-reach:set-init}).
Next, for the affected nodes at each atomic CFG change $c$, they are added to $\mathcal{A}$ accordingly (Line \ref{a-naive-reach:set-base}-\ref{a-naive-reach:change-dst}).
Having the set of directly affected nodes at the atomic changes, we next perform transitive closure computation to include all the indirectly affected nodes. 
In particular, the \textsc{TransitiveClosure} procedure (shown in Algorithm \ref{a:transitive-closure}) takes the atomic set and the CFG as input, identify and include all the successor nodes affected to $\mathcal{A}$ (Line \ref{a-naive-reach:set-trans}).
Next, all the affected nodes in $\mathcal{A}$ constitute the nodes in sub-CFG $\mathcal{G}$ (Line \ref{a-naive-reach:sub-nodes}), and the edges between them constitute the edges of sub-CFG (Line \ref{a-naive-reach:sub-edges}). 
The incremental update then only needs to be performed over such sub-CFG $\mathcal{G}$ to update the dataflow facts accordingly. 

Importantly, before performing the analysis over the sub-CFG, we first need to do necessary initialization. 
Basically, for each node $k$ of the sub-CFG, their new incoming and outgoing facts are initialized as $\bot$ (or $\top$) for increasing (or decreasing) analysis, respectively (Line \ref{a-naive-reach:fact-init}).
Moreover, in order to perform dataflow analysis over node $k$ successfully, we need to obtain all its predecessors' outgoing facts. 
For a predecessor node $p$ of $k$ in the whole CFG, if $p$ is not affected, then $p$ is not included in the sub-CFG (Lines \ref{a-naive-reach:pred} and \ref{a-naive-reach:pred-notin}).
In this case, we need to firstly query the outgoing fact of $p$, \ie, $\mathcal{OUT}_p$ from prior results, and then initialize the value of $\mathcal{M}_k$ as $\mathcal{OUT}_p$  (Lines \ref{a-naive-reach:pred-query}-\ref{a-naive-reach:message-init}).  
After the initialization, we can directly invoke the optimized whole-program dataflow analysis (\ie, Algorithm \ref{a:opt-algo}) over the sub-CFG to realize incremental computation (Line \ref{algo:call4-naive}).

\begin{figure}[htb!]
\centering
\hspace{-1em}
\subfloat[CFG with changes]{         
\label{fig:reach-naive-cfg}
\begin{minipage}[c]{.23\linewidth}
\centering
\includegraphics[width=0.9\textwidth]{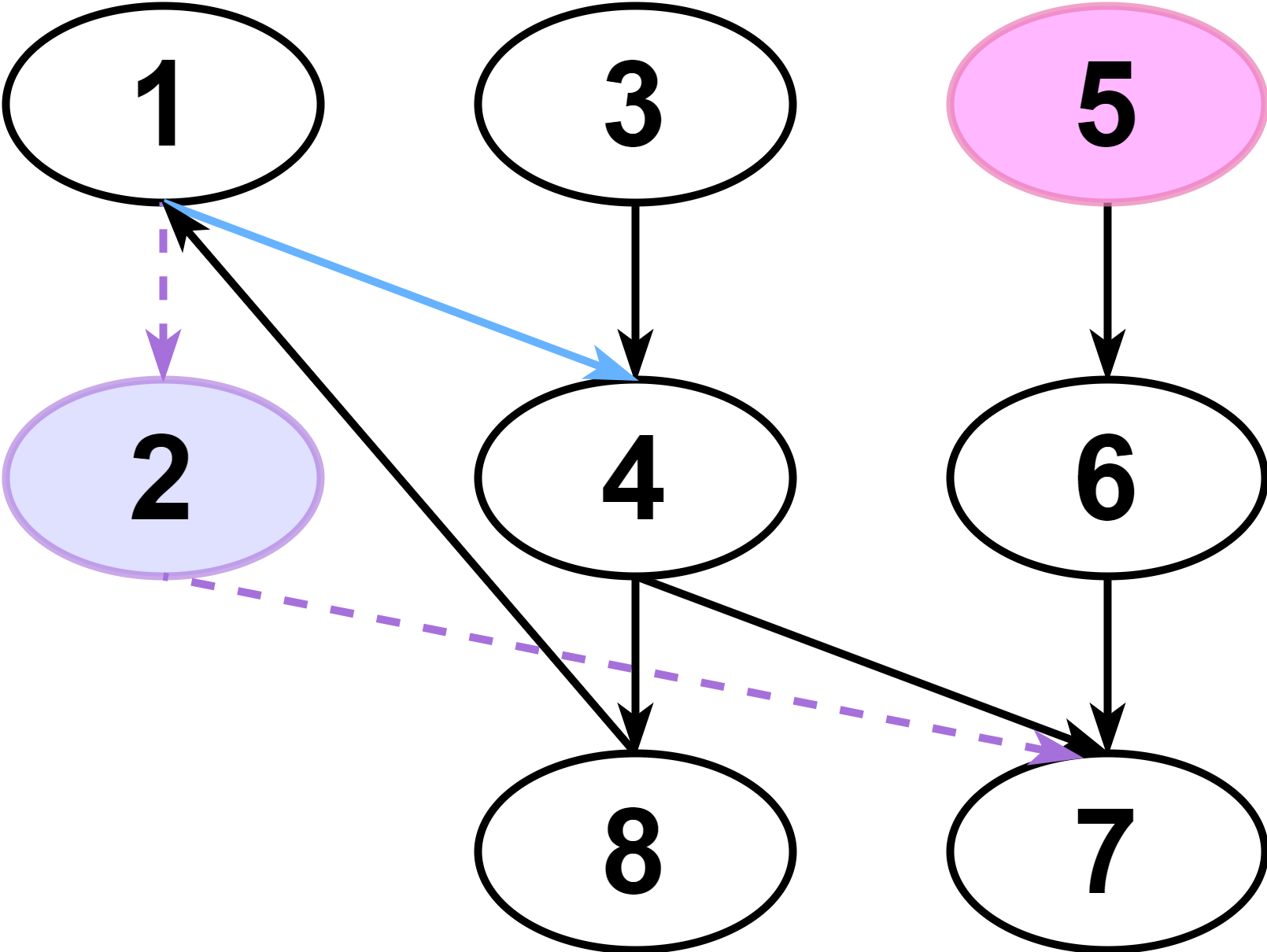}
\end{minipage}
}
\subfloat[initial affected nodes]{         
\label{fig:reach-naive-init}
\begin{minipage}[c]{.23\linewidth}
\centering
\includegraphics[width=0.9\textwidth]{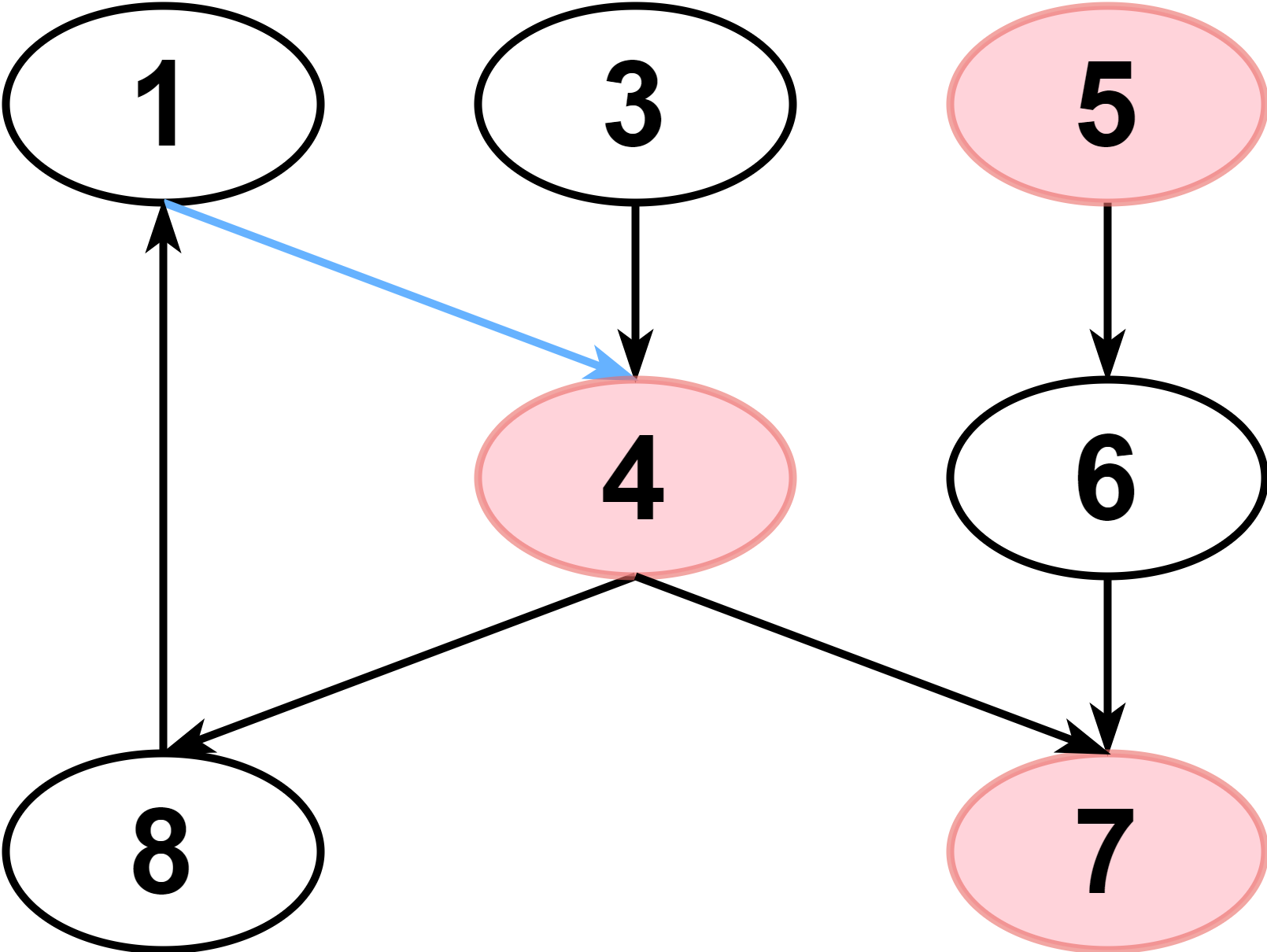}
\end{minipage}
}
\subfloat[final affected nodes]{         
\label{fig:reach-naive-compute}
\begin{minipage}[c]{.23\linewidth}
\centering
\includegraphics[width=0.9\textwidth]{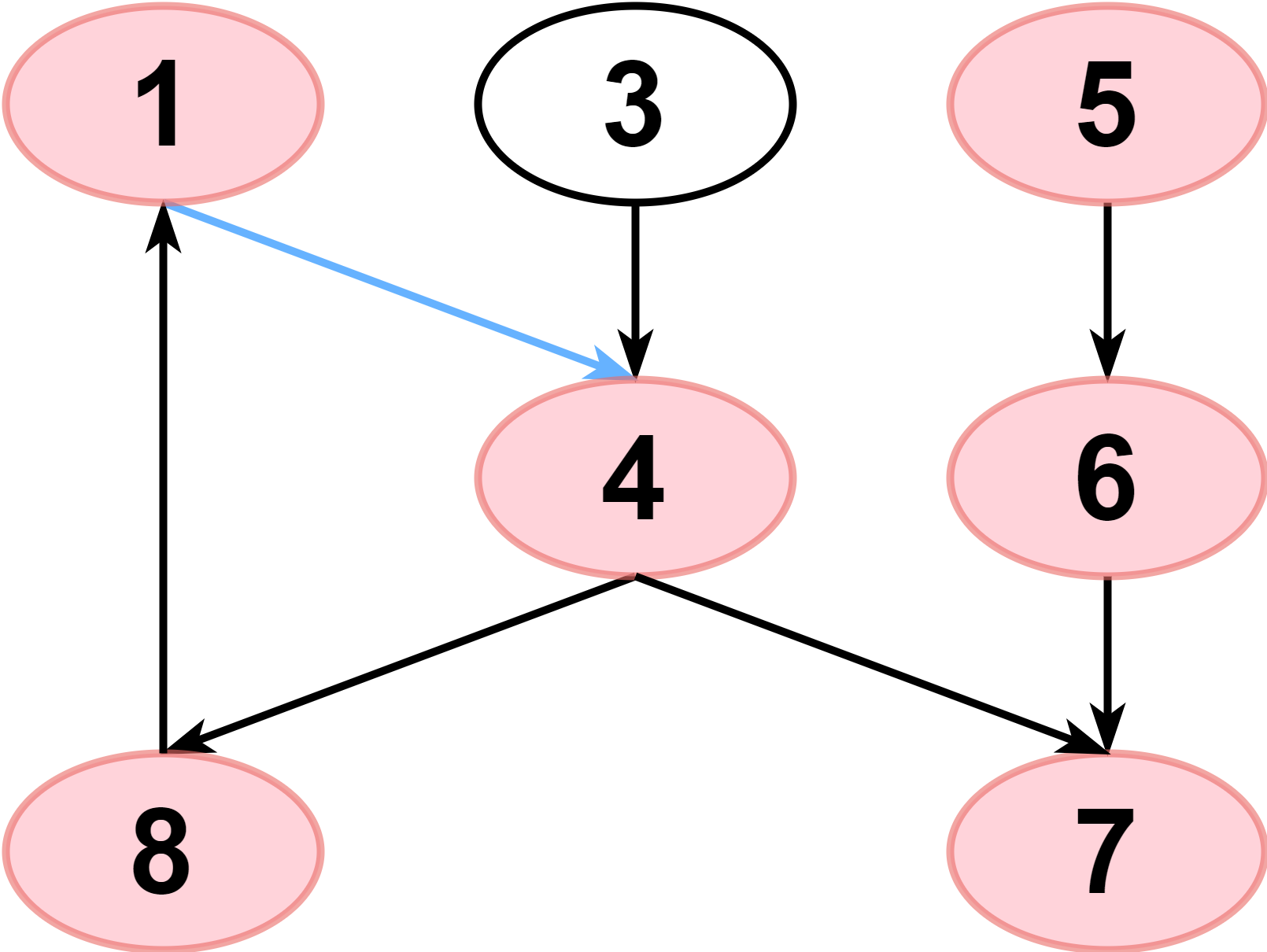}
\end{minipage}
}
\subfloat[sub-CFG]{         
\label{fig:incre-naive-sub-cfg}
\begin{minipage}[c]{.23\linewidth}
\centering
\includegraphics[width=0.9\textwidth]{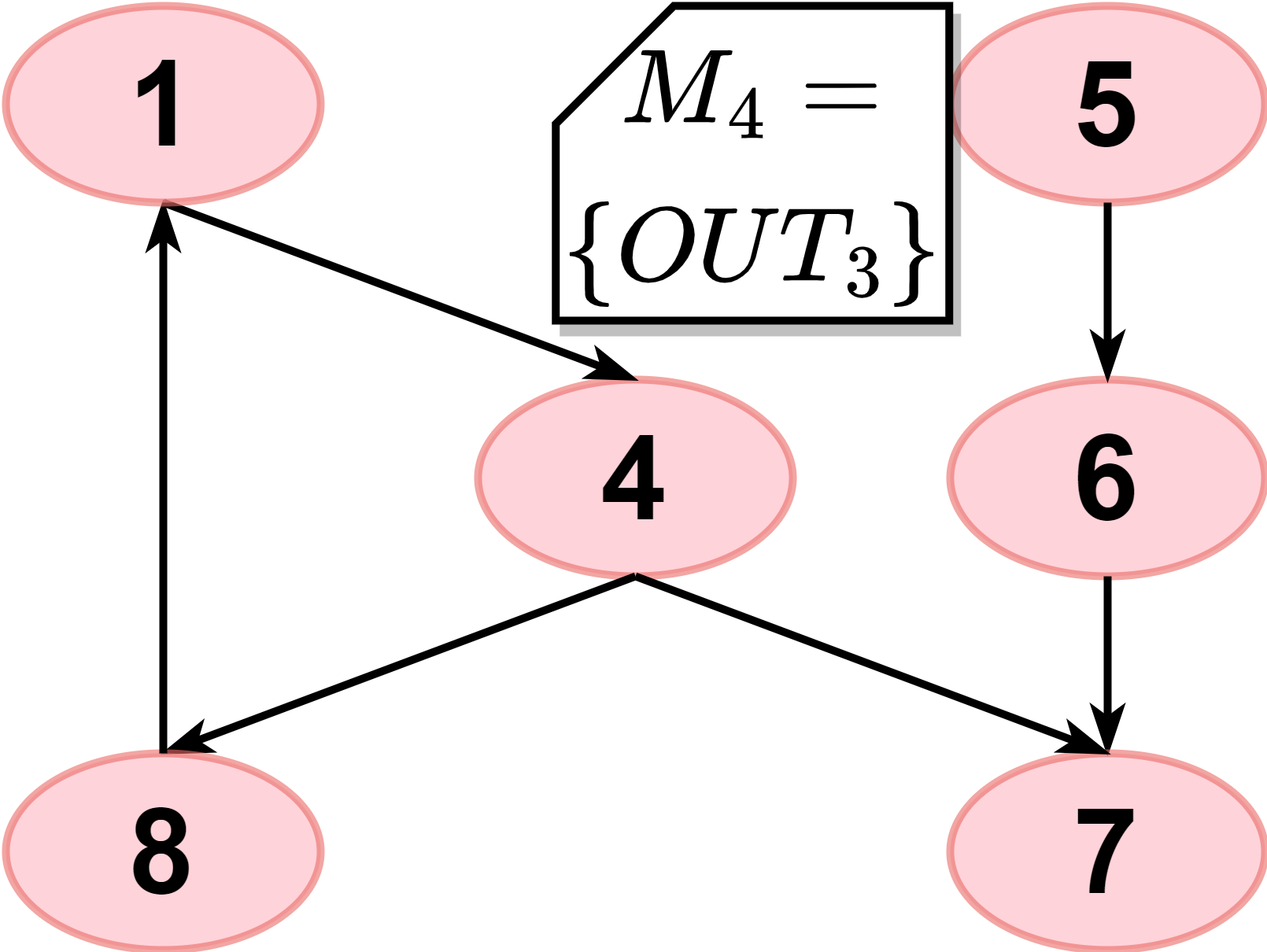}
\end{minipage}
}
\caption{Sub-CFG for incremental update in Algorithm \ref{a:reachability-naive-algo}.}
\label{fig:CFG-change-naive-exp}
\end{figure}

\MyPara{Example.} 
Figure \ref{fig:CFG-change-naive-exp} shows an example demonstrating how to compute the set of affected nodes $\mathcal{A}$ in Algorithm \ref{a:reachability-naive-algo}.
Similar to Figure \ref{fig:CFG-atomic-change}, the nodes and edges in blue, purple and pink indicate addition, deletion, and change cases, respectively.
Suppose that the batch of updates on the CFG includes adding a new edge $\langle 1,4 \rangle$, deleting node 2, and changing node 5 (shown as Figure \ref{fig:reach-naive-cfg}).
Apparently, node $2$ is marked as deleted. 
According to Lines \ref{a-naive-reach:set-base}-\ref{a-naive-reach:change-dst} of Algorithm \ref{a:reachability-naive-algo}, the nodes $\{4, 5, 7\}$ are initially identified as affected and put into $\mathcal{A}$ (marked as red in Figure \ref{fig:reach-naive-init}). 
Next, the transitive closure computation determines the all the successors affected, \ie, $\{1, 6, 8\}$. Thus, the final result is $\mathcal{A}= \{1, 4, 5, 6, 7, 8\}$ (shown in Figure \ref{fig:reach-naive-compute}).
Among these nodes, only node 4 has an unaffected predecessor 3 (marked in white in Figure \ref{fig:incre-naive-sub-cfg}). It queries the previous outgoing fact $\mathcal{OUT}_3$ from predecessor 3 as messages $\mathcal{M}_4$. Finally, Algorithm \ref{a:opt-algo} is invoked to do incremental update over the sub-CFG (shown as Figure \ref{fig:incre-naive-sub-cfg}).

\subsection{Optimized Distributed Incremental Analysis Algorithm \label{subsec:incremental-opt}}

The above algorithm succeeds in realizing incremental analysis by only performing updates on the affected nodes. 
However, as shown in Algorithm \ref{a:reachability-naive-algo}, the dataflow analysis performed over the sub-CFG has to start from scratch, \ie,  with the incoming and outgoing facts associated with each node as the initial value (\ie, $\bot$ or $\top$). 
In other words, the iterative computation has to start from the very beginning, which usually takes a number of iterations to reach fix-point convergence. 
To further improve the efficiency of incremental analysis, here we propose an optimized analysis algorithm which leverages the previous analysis result to accelerate the convergence.

In monotone dataflow analysis, the dataflow facts associated with each node monotonically increase/decrease  until they no longer change (\ie, they reach the convergence state). 
The \textbf{\textit{closer}} the dataflow fact is to the final state, the \textbf{\textit{faster}} the computation converges.
In other words, an affected node reaches its final dataflow fact with fewer iterations if the computation starts with a closer dataflow fact initialized.
In incremental analysis scenario, the potential closer dataflow fact which can be exploited would be the analysis result obtained from the previous analysis. 
Instead of initializing the incoming and outgoing facts as $\bot$ or $\top$, we directly assign $\mathcal{IN}_k$ and $\mathcal{OUT}_k$ from previous analysis as the initial value.
In this way, it is possible to get a closer incoming dataflow fact of the affected node. The number of iterations (aka., supersteps) needed can thus be greatly reduced, thus achieving convergence speedup.
However, naively initializing the dataflow facts of all the affected nodes as previous analysis results may lead to analysis incorrectness. 
That is the analysis results of such optimization may differ from the results of the naive algorithm (\ie, Algorithm \ref{a:reachability-naive-algo}). 
The reason is that in incremental analysis, the actual dataflow facts of $k$, \ie, $\mathcal{IN}_k$ and $\mathcal{OUT}_k$ may converge before reaching the state of the previous analysis result. In this case, the previous result is not a safe starting state.  
Here, the key is for each affected node in the sub-CFG, how to determine if the previous analysis result is a safe initial state for incremental analysis.


To this end, we define the following incremental property. For an affected node, if it satisfies the incremental property, it is safe to reuse the previous analysis results as their initial dataflow fact values. 
We will discuss the correctness of such optimization and give the proof shortly in \cref{subsec:proof-incre}.

\begin{theorem}[\textbf{Incremental Property}]
\label{theorem:incremental-property}
Given an affected node $k$ in the sub-CFG, let $\mathcal{IN}^*_k$ (and $\mathcal{OUT}^*_k$) and $\mathcal{IN}_k$ (and $\mathcal{OUT}_k$) be the incoming (and outgoing) dataflow facts before and after the change, respectively. The incremental property is satisfied if and only if the following equation holds:
\[\mathcal{IN}^*_k \leq \mathcal{IN}_k ~~~~\&~~~~ \mathcal{OUT}^*_k \leq \mathcal{OUT}_k\]
\end{theorem}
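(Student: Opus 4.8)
The plan is to read Theorem~\ref{theorem:incremental-property} not as a bare definition but as the soundness criterion it is meant to encode: reusing the previous facts $\mathcal{IN}^*_k,\mathcal{OUT}^*_k$ as the initial values of the incremental run (i.e. invoking Algorithm~\ref{a:opt-algo} over the sub-CFG) converges to the correct post-change result if and only if $\mathcal{IN}^*_k \leq \mathcal{IN}_k$ and $\mathcal{OUT}^*_k \leq \mathcal{OUT}_k$ hold at every affected node. I would assume the monotone-framework hypotheses that are already in force: Merge and Transfer are monotone with respect to $\leq$, and the lattice has finite height so the iteration terminates. Mirroring the proof of Theorem~\ref{theorem:accumulative-property}, I would split into the increasing case ($\otimes=\sqcup$, facts ascending from $\bot$) and the decreasing case ($\otimes=\sqcap$, facts descending from $\top$), proving the first in detail and obtaining the second by order-duality.

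For sufficiency in the increasing case I would exploit that Algorithm~\ref{a:opt-algo} is accumulative: by Theorem~\ref{theorem:accumulative-property} each superstep merges the newly arrived messages into the retained $\mathcal{IN}_k$, so the per-node iterate never decreases. Here $\mathcal{IN}_k$, the post-change fact, is the least solution of the new dataflow equations (over the sub-CFG with the queried boundary facts held fixed). Assuming $\mathcal{IN}^*_k \leq \mathcal{IN}_k$, monotonicity of Merge and Transfer gives, by induction on supersteps, that every iterate stays $\leq \mathcal{IN}_k$; since the lattice has finite height and the iteration ascends, it converges to some fixed point $\leq \mathcal{IN}_k$. Because $\mathcal{IN}_k$ is the \emph{least} solution, every fixed point dominates $\mathcal{IN}_k$, so the limit is sandwiched as $\mathcal{IN}_k \leq \text{limit} \leq \mathcal{IN}_k$ and equals $\mathcal{IN}_k$ by anti-symmetry of $\leq$. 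Hence the property guarantees that reuse is sound, and likewise for $\mathcal{OUT}_k$.

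For necessity I would argue the contrapositive. If $\mathcal{IN}^*_k \not\leq \mathcal{IN}_k$ at some affected node, then since the increasing iteration only ascends from its initial value $\mathcal{IN}^*_k$, every iterate and hence the limit dominates $\mathcal{IN}^*_k$; were the limit equal to $\mathcal{IN}_k$ we would get $\mathcal{IN}^*_k \leq \mathcal{IN}_k$, a contradiction. Thus the reused iteration ``overshoots'' the true fixed point precisely when the property fails, which is exactly the unsafe situation the surrounding text flags for deletions. The decreasing case is the order-dual: $\sqcup,\bot$ become $\sqcap,\top$, ascending becomes descending, ``least solution'' becomes ``greatest solution,'' and the same four ingredients (monotonicity, trapping the iterate in $[\mathcal{IN}^*_k,\mathcal{IN}_k]$, the extremal-solution argument, and anti-symmetry) transfer verbatim.

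The main obstacle I expect is the ``converges exactly to $\mathcal{IN}_k$'' step together with necessity. Sufficiency is not merely ``start below and climb up'': I must rule out halting at an intermediate fixed point strictly below $\mathcal{IN}_k$, which is precisely where the least-solution characterization of $\mathcal{IN}_k$ does the work of forcing equality. For necessity I must make the informal ``overshoot'' rigorous by showing the ascending iteration can never re-descend below its initialization, so an initial value not bounded by $\mathcal{IN}_k$ is fatal. Both steps rest on monotonicity of the user-supplied Merge and Transfer operators and on the accumulative property of Theorem~\ref{theorem:accumulative-property}, so I would state these as explicit standing assumptions before beginning the case analysis.
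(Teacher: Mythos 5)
Your proposal is correct, but it takes a genuinely different route from the paper --- indeed, the paper never proves Theorem~\ref{theorem:incremental-property} at all. In the paper this ``theorem'' functions as a \emph{definition} of when a node may safely reuse its old facts; the text following it argues only informally, case by case over the atomic changes of Figure~\ref{fig:CFG-atomic-change}, that addition-only nodes satisfy the property while deletion/change nodes may not. The formal work is deferred to Theorem~\ref{theorem:consistency-gaddonly}, which establishes $\mathcal{F}^{\infty}(\mathcal{F}^{*\infty}(\widehat{\bot})) \equiv \mathcal{F}^{\infty}(\widehat{\bot})$ by a two-sided inequality: the $\geq$ direction from $\mathcal{F}^{*\infty}(\widehat{\bot}) \geq \widehat{\bot}$ plus monotonicity, and the $\leq$ direction from the pointwise bound $\mathcal{F}^{*} \leq \mathcal{F}$ (this is exactly where the incremental property is invoked) together with idempotence of $\mathcal{F}^{\infty}$. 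You instead prove an operational ``iff'': reuse converges to the correct post-change result exactly when the property holds. Your sufficiency argument (trap the ascending iterates below the new least fixed point, then force equality because the limit satisfies the constraints and the least solution is least among constraint-satisfying states) replaces the paper's comparison $\mathcal{F}^{*} \leq \mathcal{F}$ of the old and new one-step operators with the weaker hypothesis that the initialization lies below the new least fixed point, so it covers any safe initialization, not only addition-induced ones; and your necessity/overshoot argument supplies the ``only if'' direction, which the paper never formalizes (it merely warns that unsafe reuse ``may lead to analysis incorrectness''). What the paper's route buys in exchange: it derives, rather than assumes, \emph{why} addition-only nodes satisfy the property (more merged predecessors yield larger facts), and its product-state formulation plugs directly into the three-phase decomposition over $\mathcal{G}_{\textit{unaffected}}$, $\mathcal{G}_{\textit{add\_only}}$, and $\mathcal{G}_{\textit{delete\&change}}$ that gives end-to-end correctness of Algorithm~\ref{a:reachability-optimized-algo}. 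Two small points to tighten in your write-up: the induction must run over the product lattice of all affected nodes' facts simultaneously (per-node bounds alone do not close the inductive step), and the worklist terminates at a constraint-satisfying (pre-fixed) state rather than literally a fixed point, so the sandwich step should appeal to ``least pre-fixed point equals least fixed point'' rather than ``every fixed point dominates the least solution.''
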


Considering the atomic changes in Figure \ref{fig:CFG-atomic-change}, these changes can illustrate the impact on dataflow facts from the perspective of incremental property.
For the destination node of a newly added edge, the edge brings a new dataflow fact from its predecessor that satisfies the incremental property. For example, in Figure \ref{fig:CFG-add1}, for increasing analysis, the propagated dataflow fact from $u$ to $v$ before edge $\langle u,v \rangle$ addition is $\mathcal{IN}^*_v = \mathcal{OUT}^*_1 \cup \mathcal{OUT}^*_2$. 
While after the edge addition, the dataflow fact propagated to $v$ becomes $\mathcal{IN}_v = \mathcal{OUT}^*_1 \cup \mathcal{OUT}^*_2 \cup \mathcal{OUT}^*_u$. Apparently, $\mathcal{IN}^*_v \equiv (\mathcal{OUT}^*_1 \cup \mathcal{OUT}^*_2) \leq (\mathcal{OUT}^*_1 \cup \mathcal{OUT}^*_2 \cup \mathcal{OUT}^*_u) \equiv \mathcal{IN}_v$. Since node $v$ is not changed, and its transfer function is monotonic, we can conclude that $\mathcal{OUT}^*_v \leq \mathcal{OUT}_v$. 
As a result, the computation at $v$ satisfies incremental property, vice versa for decreasing analysis. 
On the contrary, the cases that exist edge deletion and node change may violate the incremental property. For the destination node of a newly deleted edge, it no longer receives dataflow fact from the previous predecessor (shown as Figure \ref{fig:CFG-del1}). Hence the partial relation between the old and new outgoing fact is  $\mathcal{OUT}_u \leq \mathcal{OUT}^*_u$. Similarly, for node change (shown as Figure \ref{fig:CFG-change1}), the old and new outgoing dataflow facts from predecessors may violate partial order relation since the impact of statements changes on the monotonicity of the transfer function is indeterminate.
To sum up, different atomic changes have different impacts on the incremental property. 
The nodes involved in addition satisfy incremental property, whereas the nodes involved in deletion and change may not satisfy incremental property.
Moreover, as the impact of each atomic change propagates along the sub-CFG, a node in the sub-CFG may be influenced by multiple atomic changes.
For a node in the sub-CFG, if it is only influenced by edge addition, then we can conclude that it satisfies incremental property; otherwise, it does not.

\begin{algorithm}[htb!]
	\caption{\small Optimized Distributed Incremental Analysis Algorithm}
	\label{a:reachability-optimized-algo}
	\DontPrintSemicolon
	\small
	\KwData{
    $\mathcal{C}$: all the atomic changes in the CFG; 
    $\mathcal{R}$: the previous dataflow analysis result;
    $\mathcal{A}_{add}$: a set of affected nodes related to addition; 
    $\mathcal{A}_{delete}$: a set of affected nodes related to deletion; 
    $\mathcal{A}_{change}$: a set of affected nodes related to change; 
    $\mathcal{G}$: a sub-CFG consisting of all the affected nodes and edges among them.}
	
	\BlankLine
 \tcp{\textcolor{blue}{impact analysis}}
     $\mathcal{A}_{add} \leftarrow \emptyset; ~~\mathcal{A}_{delete} \leftarrow \emptyset; ~~\mathcal{A}_{change} \leftarrow \emptyset$; \label{a-opt-reach:3set-init} \;
    \ForPar {each atomic change $c \in \mathcal{C}$}
    { \label{a-opt-reach:3set-base} 
        \Switch{the type of $c$}
        {
            \uCase{added edge $\langle u \rightarrow v \rangle$ between existing nodes}
            {   \label{a-opt-reach:add-begin}
                $\mathcal{A}_{add} \leftarrow \mathcal{A}_{add} \cup \{v\} $ \label{a-opt-reach:add-edge} \;  
            }
            \uCase{added source node $u$ and edge $\langle u \rightarrow v \rangle$}
            {   
                $\mathcal{A}_{add} \leftarrow \mathcal{A}_{add} \cup \{u, v\} $ \label{a-opt-reach:add-src}\;
            }
            \uCase{added destination node $v$}
            {   
                $\mathcal{A}_{add} \leftarrow \mathcal{A}_{add} \cup \{v\} $ \label{a-opt-reach:add-dst}\;
            }
            \uCase{deleted edge $\langle u \rightarrow v \rangle$ between existing nodes}
            {   \label{a-opt-reach:delete-begin}
                $\mathcal{A}_{delete} \leftarrow \mathcal{A}_{delete} \cup \{v\} $ \label{a-opt-reach:delete-edge}\;
            }
            \uCase{deleted source node $u$ and edge $\langle u \rightarrow v \rangle$}
            {   
                $\mathcal{A}_{delete} \leftarrow \mathcal{A}_{delete} \cup \{v\} $ \label{a-opt-reach:delete-src} \;
            }
            \uCase{deleted destination node $v$}
            {   
            }
            \uCase{changed source node $u$}
            {   
                $\mathcal{A}_{change} \leftarrow \mathcal{A}_{change} \cup \{u\} $ \label{a-opt-reach:change-src} \;
            }
            \uCase{changed destination node $v$}
            {   
                $\mathcal{A}_{change} \leftarrow \mathcal{A}_{change} \cup \{v\} $ \label{a-opt-reach:change-dst} \;
            }
        }
    }
    
    $\mathcal{A}_{add} \gets $ \textsc{TransitiveClosure}($\mathcal{A}_{add}, CFG$) \label{a-opt-reach:add-trans}\;
    $\mathcal{A}_{delete} \gets $ \textsc{TransitiveClosure}($\mathcal{A}_{delete}, CFG$) \label{a-opt-reach:delete-trans}\;
    $\mathcal{A}_{change} \gets $ \textsc{TransitiveClosure}($\mathcal{A}_{change}, CFG$) \label{a-opt-reach:change-trans}\;

    $V\mathcal{(G)} \leftarrow \mathcal{A}_{add} \cup \mathcal{A}_{delete} \cup \mathcal{A}_{change} $ \textcolor{olive}{/*all the affected nodes constitute the nodes of sub-CFG*/}  \label{a-opt-reach:sub-nodes} \;
    $E\mathcal{(G)} \!\!\leftarrow\!\! \{ \forall \langle u \!\!\rightarrow\!\! v \rangle \!\in\! CFG ~|~ u \!\in\! V\mathcal{(G)} ~\&~ v \!\in\! V\mathcal{(G)}  \}$ \textcolor{olive}{/*edges between affected nodes constitute edges of sub-CFG*/} \label{a-opt-reach:sub-edges} \;

\BlankLine
\tcp{\textcolor{blue}{incremental update}}
    \ForPar {each node $k \in V\mathcal{(G)}$}
    {
        \uIf{$k \in \mathcal{A}_{add} - \mathcal{A}_{delete} - \mathcal{A}_{change}$ \label{a-opt-init:ifcondition}}
        { 
            $\mathcal{IN}_k \leftarrow \textsc{QueryIN}(\mathcal{R}, k); ~~\mathcal{OUT}_k \leftarrow \textsc{QueryOUT}(\mathcal{R}, k);$ \textcolor{olive}{/*initialize the dataflow facts as the previous values*/} \label{a-opt-reach:accu-fact}\;
        }
        \Else
        { 
            $\mathcal{IN}_k \leftarrow \bot/\top $; ~~$\mathcal{OUT}_k \leftarrow \bot/\top$; \textcolor{olive}{/*initialize dataflow facts as the  bottom (or top) value for increasing (or decreasing) analysis*/} \label{a-opt-reach:nonaccu-init}\;

            \For {each predecessor $p$ of $k$ in the CFG}
            {        
                \If {$p \notin V\mathcal{(G)} ~~||~~ p \in \mathcal{A}_{add} - \mathcal{A}_{delete} - \mathcal{A}_{change}$}  
                { \label{a-opt-reach:nonaccu-close}
                   $\mathcal{OUT}_p \!\leftarrow\!  \textsc{QueryOUT}(\mathcal{R}, p)$ \textcolor{olive}{/*query existing outgoing fact for unaffected predecessor and new predecessors influenced only by addition cases*/} \label{a-opt-reach:nonaccu-outset} \; 
                  $\mathcal{M}_k \leftarrow \mathcal{M}_k \cup \mathcal{OUT}_p$ \textcolor{olive}{/*initialize the value of $\mathcal{M}_k$ using $\mathcal{OUT}_p$*/} \label{a-opt-reach:nonaccu-message}\;
                }
            }
        }
        
    }
    \textsc{Algorithm 4}($\mathcal{G}$) \textcolor{olive}{/*invoke Algorithm 4 on sub-CFG for incremental update*/}\label{algo:call4-opt}\;
\end{algorithm}

Base on the above theorem, we classify the affected nodes $\mathcal{A}$ into the following three categories.
Specifically, $\mathcal{A} = \mathcal{A}_{add} \cup \mathcal{A}_{delete} \cup \mathcal{A}_{change}$ where: 
\begin{itemize}
    \item  $\mathcal{A}_{add}$ is the set of affected nodes influenced by addition cases. 
    \item  $\mathcal{A}_{delete}$ is a set of affected nodes influenced by deletion cases.
    \item  $\mathcal{A}_{change}$ is a set of affected nodes influenced by node change cases.
\end{itemize} 

For the nodes $k \in \mathcal{A}_{add} - \mathcal{A}_{delete} -\mathcal{A}_{change}$, it satisfies incremental property, and thus it can directly reuse the results from previous analysis as its initial values. 
For other affected nodes $k \in \mathcal{A}$, there is no guarantee that its previous and new dataflow facts satisfy the incremental property. Therefore, we can only initialize their dataflow facts as $\bot/\top$ for the sake of correctness.

Algorithm \ref{a:reachability-optimized-algo} presents the procedure of the optimized incremental analysis.
At the beginning, the three sets of affected nodes are all initialized as empty (Line~\ref{a-opt-reach:3set-init}).
Next, for affected nodes in different atomic CFG changes, they are added to $\mathcal{A}_{add}$, $\mathcal{A}_{delete}$, and $\mathcal{A}_{change}$ accordingly (Line \ref{a-opt-reach:3set-base}-\ref{a-opt-reach:change-dst}).
The \textit{TransitiveClosure} procedure takes each of the atomic sets and the CFG with updates as input, and includes all the transitively reachable nodes into the respective sets (Line \ref{a-opt-reach:add-trans}-\ref{a-opt-reach:change-trans}).
All the affected nodes in the union of three sets constitute the nodes of sub-CFG (Line \ref{a-opt-reach:sub-nodes}).   
The edges between these affected nodes constitute the edges of sub-CFG (Line \ref{a-opt-reach:sub-edges}).
For the affected node only influenced by addition cases (\ie, $k \in \mathcal{A}_{add} - \mathcal{A}_{delete} -\mathcal{A}_{change}$), it directly queries its previous incoming and outgoing facts for initialization (Line~\ref{a-opt-reach:accu-fact}). 
For other affected nodes influenced by deletion or change cases, their new incoming and outgoing fact are initialized as $\bot/\top$ to ensure correctness (Line~\ref{a-opt-reach:nonaccu-init}).
Moreover, for the predecessor $p$ of $k$, if $p$ is unaffected or satisfies incremental property, we query its outgoing fact from prior results, and then initialize the value of $\mathcal{M}_k$ as $\mathcal{OUT}_p$ (Line \ref{a-opt-reach:nonaccu-close}-\ref{a-opt-reach:nonaccu-message}). 
After the initialization, the optimized whole-program dataflow analysis (\ie, Algorithm \ref{a:opt-algo}) is performed over the sub-CFG to realize incremental computation (Line \ref{algo:call4-opt}). 

\begin{figure}[htb!]
\centering
\hspace{-1em}
\subfloat[CFG with changes]{         
\label{fig:reach-cfg}
\begin{minipage}[c]{.23\linewidth}
\centering
\includegraphics[width=0.9\textwidth]{samples/impact_exp1.png}
\end{minipage}
}
\subfloat[initial affected nodes]{         
\label{fig:reach-init}
\begin{minipage}[c]{.23\linewidth}
\centering
\includegraphics[width=0.9\textwidth]{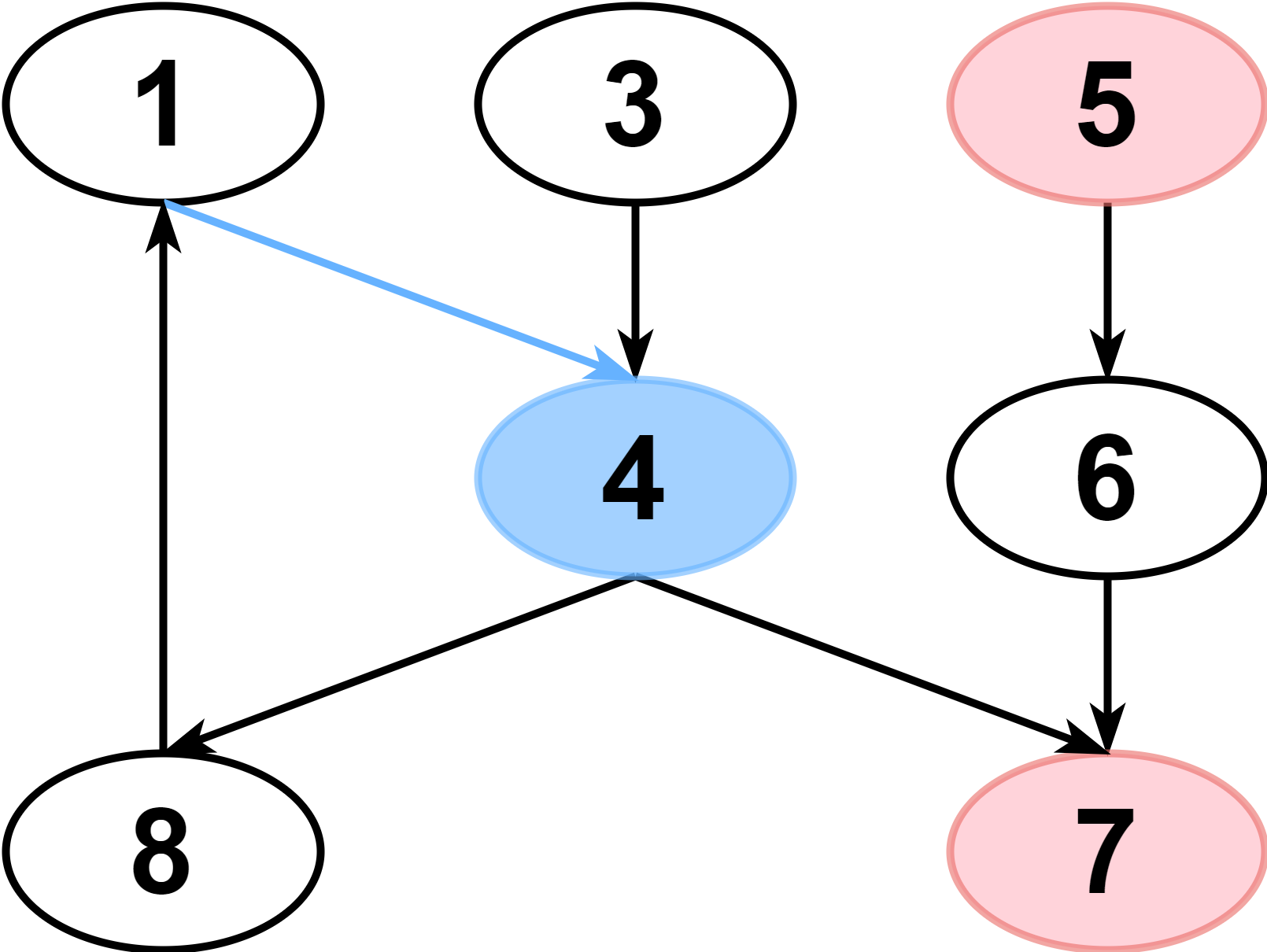}
\end{minipage}
}
\subfloat[final affected nodes]{         
\label{fig:reach-compute}
\begin{minipage}[c]{.23\linewidth}
\centering
\includegraphics[width=0.9\textwidth]{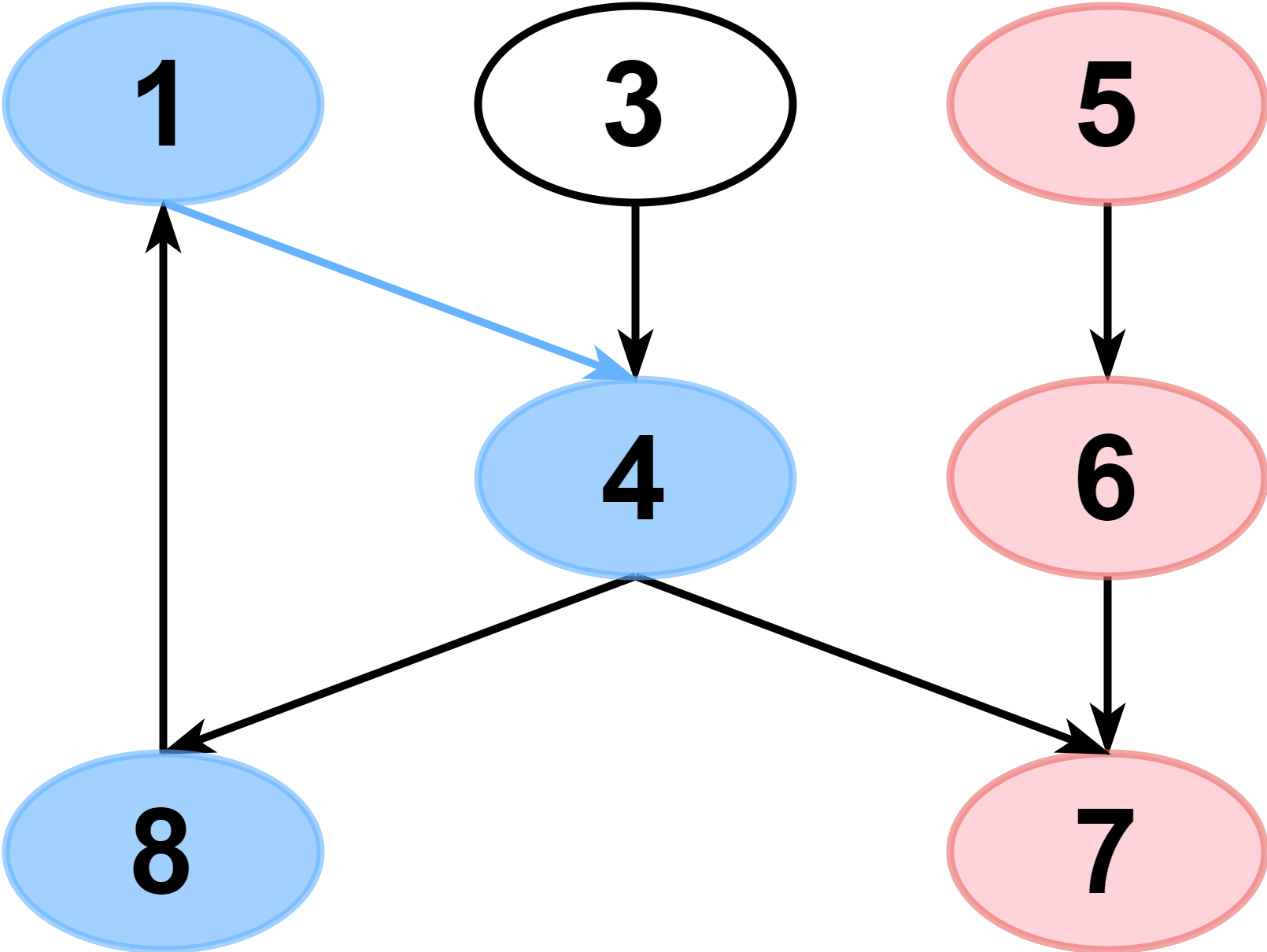}
\end{minipage}
}
\subfloat[sub-CFG]{         
\label{fig:incre-sub-cfg}
\begin{minipage}[c]{.23\linewidth}
\centering
\includegraphics[width=0.9\textwidth]{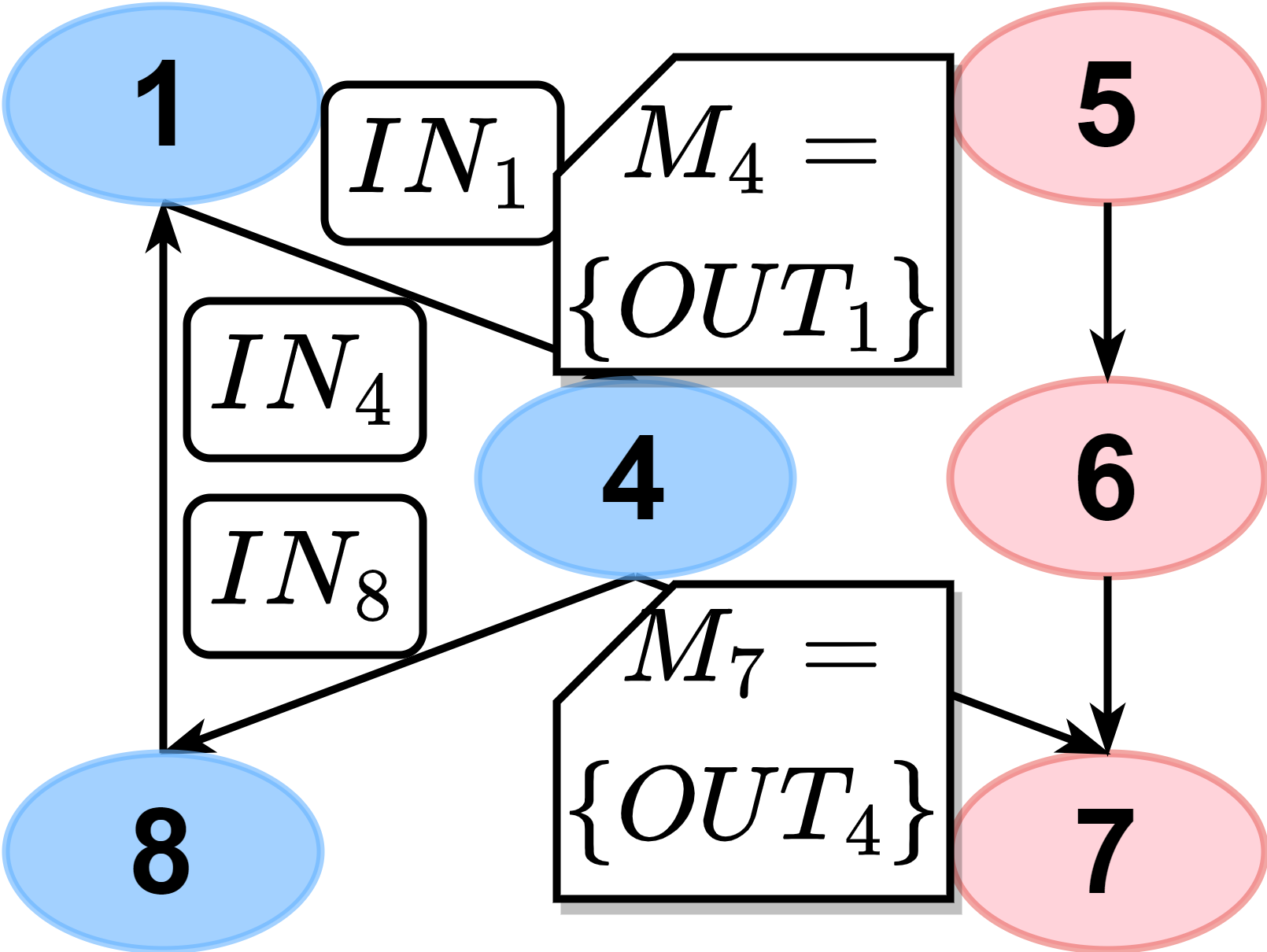}
\end{minipage}
}
\caption{Sub-CFG for incremental update in Algorithm \ref{a:reachability-optimized-algo}.}
\label{fig:CFG-change-exp}
\end{figure}

\MyPara{Example.} 
Here we use the same example as Figure \ref{fig:CFG-change-naive-exp} to demonstrate the procedure of the optimized incremental analysis in Algorithm \ref{a:reachability-optimized-algo}.
At the beginning of the optimized impact analysis, $\mathcal{A}_{add}, \mathcal{A}_{delete}$, and $\mathcal{A}_{change}$ are all initialized as empty. Based on the impact of atomic changes, the affected nodes are classified into the three set accordingly.  We get the initial results as $\mathcal{A}_{add} = \{4\}$, $\mathcal{A}_{delete} = \{7\}$, and $\mathcal{A}_{change} = \{5\}$. The affected nodes influenced by addition case are marked blue while others marked red in Figure \ref{fig:reach-init}. 
Next, all the successor nodes affected are identified. The final results are $\mathcal{A}_{add} = \{1, 4, 7, 8\}$, $\mathcal{A}_{delete} = \{7\}$, and $\mathcal{A}_{change} = \{5, 6, 7\}$ (shown in Figure \ref{fig:reach-compute}).
The nodes that are only affected by the addition cases are $\{1, 4, 8\}$. For these three nodes, we can directly reuse their old incoming facts for initialization. 
Besides, nodes 4 and 7 have a predecessor 1 and 4 which is only influenced by addition, respectively. We query the previous outgoing facts $\mathcal{OUT}_1$ and $\mathcal{OUT}_4$ as messages to node $4$ and $7$, \ie, $\mathcal{M}_4$ and $\mathcal{M}_7$. Here $\mathcal{OUT}_3$ is not included into $\mathcal{M}_4$, since it is already subsumed by $\mathcal{IN}_4$. 
Finally, Algorithm \ref{a:opt-algo} is invoked to do incremental update over the sub-CFG (shown as Figure \ref{fig:incre-sub-cfg}). 

\subsection{Correctness Proof of Incremental Algorithms\label{subsec:proof-incre}}
\begin{figure}[htb!]
\centering
{         
\begin{minipage}[c]{0.8\linewidth}
\centering
\includegraphics[width=1\textwidth]{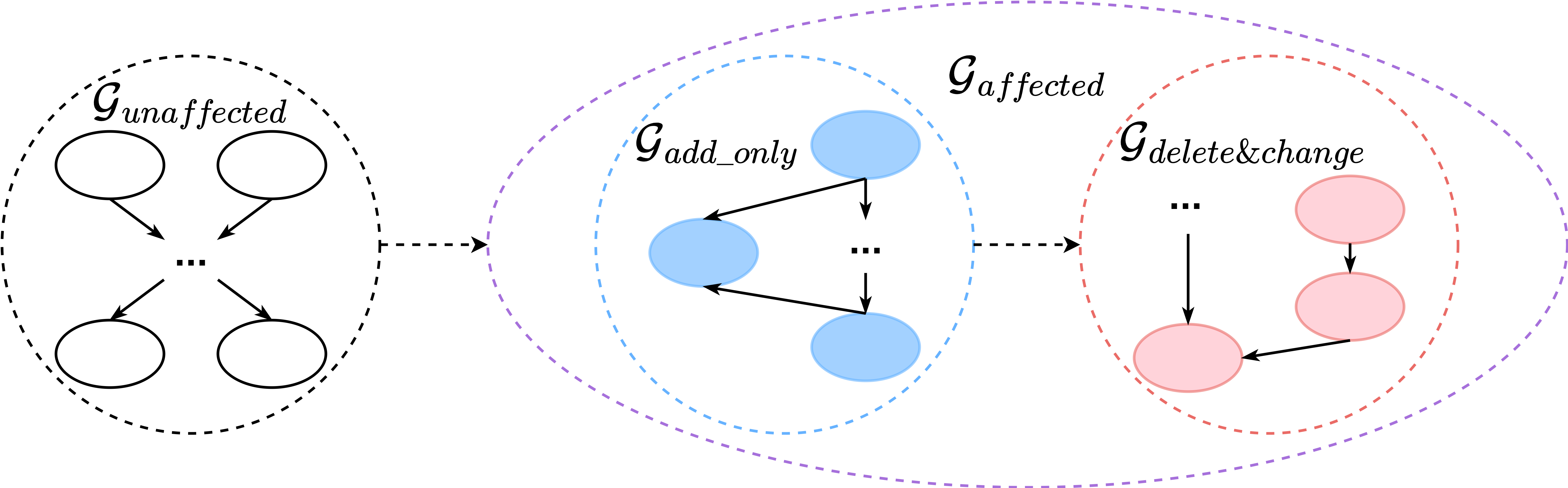}
\end{minipage}
}
\caption{The division of a CFG into two sub-CFGs namely $\mathcal{G}_{\textit{unaffected}}$ and $\mathcal{G}_{\textit{affected}}$, where $\mathcal{G}_{\textit{affected}}$ can be further divided into $\mathcal{G}_{\textit{add\_only}}$ and $\mathcal{G}_{\textit{delete\&change}}$.}
\label{fig:division-CFG-Updates}
\end{figure}

In this section, we give the correctness proofs of our incremental analysis algorithms. 
For the sake of illustration, we divide a CFG into three sub-CFGs shown as Figure \ref{fig:division-CFG-Updates}.
As discussed in Algorithm \ref{a:reachability-naive-algo}, the nodes affected by changes as well as the edges among them constitute a sub-CFG denoted as $\mathcal{G}_{\textit{affected}}$. 
The nodes which are not affected by changes and the edges among them constitute a sub-CFG denoted as  $\mathcal{G}_{\textit{unaffected}}$. 
The nodes in the two sub-CFGs are disjoint. 
There may exist edges from $\mathcal{G}_{\textit{unaffected}}$ to $\mathcal{G}_{\textit{affected}}$.
But  no edge exists from $\mathcal{G}_{\textit{affected}}$ to $\mathcal{G}_{\textit{unaffected}}$ due to the transitive closure analysis over affected nodes in Algorithm \ref{a:reachability-naive-algo}. 
Furthermore, according to the different updates (\ie, addition, deletion or change),  $\mathcal{G}_{\textit{affected}}$ can be further partitioned into $\mathcal{G}_{\textit{add\_only}}$ and $\mathcal{G}_{\textit{delete\&change}}$.
The affected nodes influenced only by addition together with the edges among them constitute $\mathcal{G}_{\textit{add\_only}}$. The other affected nodes and edges among them constitute $\mathcal{G}_{\textit{delete\&change}}$.
Similarly, only edges from $\mathcal{G}_{\textit{add\_only}}$ to $\mathcal{G}_{\textit{delete\&change}}$ are possible due to the existence of transitive closure analysis.

\MyPara{Correctness Proof of Incremental Analysis in Algorithm \ref{a:reachability-naive-algo}.}
Proving the correctness of our incremental analysis is essentially to prove that the results produced by performing the whole-program analysis over the entire CFG from scratch should be consistent with that obtained by our incremental analysis.
To be specific, the baseline is to perform the whole-program dataflow analysis over the entire CFG (including both $\mathcal{G}_{\textit{unaffected}}$ and $\mathcal{G}_{\textit{affected}}$), while with the dataflow facts associated with all the nodes initialized as $\bot$.
In contrast, our incremental analysis is to perform the whole-program analysis only over $\mathcal{G}_{\textit{affected}}$ with $\bot$ as the initial value. The results associated with $\mathcal{G}_{\textit{unaffected}}$ are still the convergent state of the previous analysis. 

Here we prove the consistency of two analyses by controlling the processing order of CFG nodes in the worklist algorithm. 
Based on the formal proof in \cite{Kam:1977-monotoneframework}, it is well-known that the maximal fixed point is unique, independent of the processing order of nodes in an iterative algorithm. 
The results of the baseline are identical to that produced by a two-phase procedure. 
Specifically, at the first phase, only the nodes in $\mathcal{G}_{\textit{unaffected}}$ are processed iteratively until the dataflow facts associated with these nodes reach fixed point.  
Since no edge exists from $\mathcal{G}_{\textit{affected}}$ to $\mathcal{G}_{\textit{unaffected}}$,  the convergent state of $\mathcal{G}_{\textit{unaffected}}$ is independent of the state of $\mathcal{G}_{\textit{affected}}$. Therefore, the current  state of $\mathcal{G}_{\textit{unaffected}}$ is exactly its final convergent state. 
When the computation over $\mathcal{G}_{\textit{unaffected}}$ is finished,  the nodes in  $\mathcal{G}_{\textit{affected}}$ are then processed iteratively to reach the fixed point. 
Apparently, our incremental analysis in Algorithm \ref{a:reachability-naive-algo} is totally same as the above two-phase procedure.
Therefore, we can conclude that our incremental analysis produces the same results as the baseline. 


\MyPara{Correctness Proof of Optimized Incremental Analysis in Algorithm \ref{a:reachability-optimized-algo}.}
Different from Algorithm \ref{a:reachability-naive-algo}, the optimized analysis further distinguishes the affected sub-CFGs as $\mathcal{G}_{\textit{add\_only}}$ and $\mathcal{G}_{\textit{delete\&change}}$. 
The nodes in $\mathcal{G}_{\textit{add\_only}}$ take the previous convergent results as their initial dataflow values. 
Whereas, the nodes in $\mathcal{G}_{\textit{delete\&change}}$ take $\bot$ as the initial value.

Here we prove the correctness of our optimized incremental analysis by following the same idea.
The results of the baseline are identical to that produced by a three-phase procedure. 
To be specific, at the first phase, only the nodes in $\mathcal{G}_{\textit{unaffected}}$ are processed iteratively. When all the nodes in $\mathcal{G}_{\textit{unaffected}}$ reach the fixed point, the nodes in $\mathcal{G}_{\textit{add\_only}}$ are then processed iteratively to reach the fixed point. Finally, the nodes in $\mathcal{G}_{\textit{delete\&change}}$ are processed. 
What we need to prove is that the convergent state of each sub-CFG is consistent with that achieved by our optimized analysis. 
As for $\mathcal{G}_{\textit{unaffected}}$, since its convergent state is independent of any other parts, it is apparent that the convergent state produced by the above three-phase procedure is identical to that by our optimized analysis.   
For $\mathcal{G}_{\textit{add\_only}}$, the consistency between the results of two analyses is not straightforward. In the optimized incremental analysis, each node of $\mathcal{G}_{\textit{add\_only}}$ takes its previous convergent result as the initial value, whereas the baseline takes $\bot$ as the initial value. 
We discuss the formal proof of consistency on $\mathcal{G}_{\textit{add\_only}}$ shortly in Theorem \ref{theorem:consistency-gaddonly}.
Last, for $\mathcal{G}_{\textit{delete\&change}}$, since the results of two analyses on both $\mathcal{G}_{\textit{unaffected}}$ and $\mathcal{G}_{\textit{add\_only}}$ are identical, the dataflow facts propagated to $\mathcal{G}_{\textit{delete\&change}}$ are totally same. 
We can conclude that the convergent results on $\mathcal{G}_{\textit{delete\&change}}$ computed by the baseline and our optimized incremental analysis are consistent.
Putting it all together, we prove that our optimized incremental analysis produces the same results as the baseline. 




\begin{theorem}[\textbf{Consistency on $\mathcal{G}_{\textit{add\_only}}$}]
\label{theorem:consistency-gaddonly}
Given the sub-CFG $\mathcal{G}_{\textit{add\_only}}$, assume that the respective sub-CFG before the change which consists of the same set of nodes is $\mathcal{G}_{\textit{add\_only}}^*$.
We define the analysis functionality of one iteration on $\mathcal{G}^*_{\textit{add\_only}}$ and $\mathcal{G}_{\textit{add\_only}}$ as $\mathcal{F}^*$ and $\mathcal{F}$, respectively.
The dataflow analysis with its previous convergent results as the initial value produces the same results as that with $\bot$ as the initial value, which shows as the following equation:
\[\mathcal{F}^{\infty}(\mathcal{F}^{*\infty}(\widehat{\bot})) \equiv \mathcal{F}^{\infty}(\widehat{\bot})\]
\end{theorem}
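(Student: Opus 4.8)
The plan is to treat $\mathcal{F}$ and $\mathcal{F}^*$ as monotone operators on the common product lattice indexed by the shared node set of $\mathcal{G}_{\textit{add\_only}}$ and $\mathcal{G}^*_{\textit{add\_only}}$, and to reduce the claimed identity to a pair of opposite inequalities between fixed points of $\mathcal{F}$. First I would record three facts. (i) Both $\mathcal{F}$ and $\mathcal{F}^*$ are monotone, since each is a composition of the monotone merge operator with monotone transfer functions; consequently, iterating either operator from $\widehat{\bot}$ yields a nondecreasing chain whose limit is its least fixed point. Write $x^\ast := \mathcal{F}^{*\infty}(\widehat{\bot})$, $y := \mathcal{F}^{\infty}(\widehat{\bot})$, and $z := \mathcal{F}^{\infty}(x^\ast)$, so that $x^\ast = \mathrm{lfp}(\mathcal{F}^*)$, $y = \mathrm{lfp}(\mathcal{F})$, and the goal is $z \equiv y$. (ii) The only structural difference between $\mathcal{G}^*_{\textit{add\_only}}$ and $\mathcal{G}_{\textit{add\_only}}$ is the presence of extra incoming edges/predecessors in the latter, which is exactly what ``influenced only by addition'' means. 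For increasing analysis the merge is $\sqcup$, so feeding a node strictly more predecessor facts can only raise its merged incoming value; hence $\mathcal{F}^*(w) \leq \mathcal{F}(w)$ holds pointwise for every state $w$. This is the operator-level restatement of the incremental property (Theorem~\ref{theorem:incremental-property}) on the add-only region, and the decreasing case follows by the order-dual argument with $\sqcap$ and $\widehat{\top}$.

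Next I would use these facts to pin down $z$ and then prove the two inequalities. Since $x^\ast$ is a fixed point of $\mathcal{F}^*$, fact (ii) gives $x^\ast = \mathcal{F}^*(x^\ast) \leq \mathcal{F}(x^\ast)$, so $x^\ast$ is a pre-fixed point of $\mathcal{F}$; iterating $\mathcal{F}$ from $x^\ast$ therefore produces a nondecreasing chain whose limit $z$ is a genuine fixed point of $\mathcal{F}$. For $y \leq z$: $y$ is the least fixed point of $\mathcal{F}$ while $z$ is some fixed point of $\mathcal{F}$, so $y \leq z$ is immediate. For $z \leq y$: I would first establish $x^\ast \leq y$ from the pointwise domination $\mathcal{F}^* \leq \mathcal{F}$ via the standard fixed-point monotonicity lemma (the least fixed point of a smaller monotone operator lies below that of a larger one). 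Then an induction on the iterates shows $\mathcal{F}^n(x^\ast) \leq y$ for all $n$: the base case is $x^\ast \leq y$, and the step uses monotonicity with $\mathcal{F}(y) = y$, namely $\mathcal{F}^{n+1}(x^\ast) = \mathcal{F}(\mathcal{F}^n(x^\ast)) \leq \mathcal{F}(y) = y$. Passing to the limit yields $z \leq y$, and antisymmetry of $\leq$ gives $z \equiv y$.

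The main obstacle I anticipate is justifying $x^\ast \leq y$ cleanly, i.e. that the previous convergent state sits safely below the new from-scratch solution. This is precisely where the add-only hypothesis does the work: it guarantees the pointwise domination $\mathcal{F}^* \leq \mathcal{F}$, without which reusing $x^\ast$ as the initial value could overshoot the correct least fixed point and the equality would fail (this is exactly why $\mathcal{G}_{\textit{delete\&change}}$ is excluded and initialized at $\widehat{\bot}$ instead). A secondary point to handle carefully is that $\mathcal{F}$ and $\mathcal{F}^*$ must be read as operators that incorporate the contributions of predecessors lying outside $\mathcal{G}_{\textit{add\_only}}$ (those in $\mathcal{G}_{\textit{unaffected}}$, whose convergent values are already shown identical across the two analyses) as constant inputs that agree in both; once this is fixed, comparing $\mathcal{F}$ and $\mathcal{F}^*$ on the same lattice is legitimate and the remainder is routine order theory.
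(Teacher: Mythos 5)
Your proposal is correct, and it rests on the same two pillars as the paper's own proof: monotone iteration from $\widehat{\bot}$ and the operator-level domination $\mathcal{F}^*(w) \leq \mathcal{F}(w)$ supplied by the incremental property (Theorem \ref{theorem:incremental-property}), with the equality obtained by antisymmetry from two opposite inequalities. The routes to the two inequalities differ in packaging, though. For $\mathcal{F}^{\infty}(\mathcal{F}^{*\infty}(\widehat{\bot})) \geq \mathcal{F}^{\infty}(\widehat{\bot})$ the paper argues more elementarily: it uses only $\widehat{\bot} \leq \mathcal{F}^{*\infty}(\widehat{\bot})$ (bottom lies below everything) and then applies $\mathcal{F}^{\infty}$ monotonically to both sides, with no appeal to the incremental property; you instead show $x^* = \mathcal{F}^{*\infty}(\widehat{\bot})$ is a pre-fixed point of $\mathcal{F}$ (this is where you invoke the domination), so that $z = \mathcal{F}^{\infty}(x^*)$ is a genuine fixed point, and then appeal to leastness of $y = \mathrm{lfp}(\mathcal{F})$. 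For the reverse inequality the two proofs are essentially identical: iterate the domination to get $x^* \leq y$ and then push through $\mathcal{F}$; the paper closes by reducing $\mathcal{F}^{\infty}(\mathcal{F}^{\infty}(\widehat{\bot}))$ to $\mathcal{F}^{\infty}(\widehat{\bot})$, whereas you run an explicit induction $\mathcal{F}^n(x^*) \leq y$ against the fixed-point equation $\mathcal{F}(y) = y$. What your packaging buys is rigor at two points the paper glosses over: the well-definedness of $\mathcal{F}^{\infty}$ applied to a non-bottom starting state (your pre-fixed-point observation guarantees the chain from $x^*$ is nondecreasing, hence stabilizes), and the explicit treatment of predecessors outside $\mathcal{G}_{\textit{add\_only}}$ as constant inputs that agree in $\mathcal{F}$ and $\mathcal{F}^*$, which is needed for the pointwise comparison of the two operators to even make sense and which the paper leaves implicit. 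What the paper's version buys is brevity and a first inequality that holds independently of the add-only hypothesis.
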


\begin{proof}
There are two cases of monotone dataflow analysis, \ie, increasing analysis with the join operator $\sqcup$ and decreasing analysis with the meet operator $\sqcap$.
Without loss of generality, we take increasing analysis as an example. We will prove the consistency of the two analyses as their convergent state partial to each other.
First, we prove that the convergent state with $\bot$ as the initial value, is partial to that with the previous convergent state as the initial value.

In increasing analysis, the incoming fact of each node is initialized as $\bot$. We use $\widehat{\bot}$ to represent the overall initialized incoming facts of nodes across the sub-CFG. Then, the dataflow analysis on the sub-CFG $\mathcal{G}^*_{\textit{add\_only}}$ can be formulated as starts with $\widehat{\bot}$. As $\bot$ is the bottom of the lattice, the beginning state of the analysis is partial to the subsequent state of the analysis on $\mathcal{G}^*_{\textit{add\_only}}$. This relation can be described as following :

\begin{align}\label{eq:ie11}
&\!\!\!\!\mathcal{F}^*(\widehat{\bot}) \geq \widehat{\bot}
\end{align}

We apply infinite times of $\mathcal{F}^*$ to get the convergent state calculated at the end of the analysis. The following inequality \ref{eq:ie12} can be obtained:

\begin{align}\label{eq:ie12}
&\!\!\!\!\mathcal{F}^{*\infty}(\widehat{\bot}) \geq \widehat{\bot}
\end{align}

The left side of the above inequality corresponds to the previous convergent results, \ie, the initial state of our optimized incremental analysis. And the right side is $\bot$, \ie, the initial state of the baseline. 
Based on the two sides, we perform the analysis iteratively to get the convergent state on $\mathcal{G}_{\textit{add\_only}}$, and imply that:

\begin{align}\label{eq:ie13}
&\!\!\!\!\mathcal{F}^{\infty}(\mathcal{F}^{*\infty}(\widehat{\bot})) \geq \mathcal{F}^{\infty}(\widehat{\bot})
\end{align}

Thus, the convergent state achieved starting from $\bot$ is partial to that achieved by starting from the previous convergent results. 

Next, we prove that the convergent state produced by starting from the previous convergent results is partial to that with $\bot$ as the initial value. 
The analyses on $\mathcal{G}^*_{\textit{add\_only}}$ and $\mathcal{G}_{\textit{add\_only}}$ both starts with $\widehat{\bot}$. As discussed in \S\ref{subsec:incremental-opt}, the incoming and outgoing facts associated with each node in $\mathcal{G}^*_{\textit{add\_only}}$ and $\mathcal{G}_{\textit{add\_only}}$ satisfy the incremental property (\ie, Theorem \ref{theorem:incremental-property}). Therefore, the following inequality holds: 

\begin{align}\label{eq:ie144}
&\!\!\!\!\mathcal{F}^*(\widehat{\bot}) \leq \mathcal{F}(\widehat{\bot})
\end{align}

Both computations are conducted infinite times, we thus get the following inequality.

\begin{align}\label{eq:ie14}
&\!\!\!\!\mathcal{F}^{*\infty}(\widehat{\bot}) \leq \mathcal{F}^{\infty}(\widehat{\bot})
\end{align}

Next, we take both sides of it as input to perform the computation $\mathcal{F}$ iteratively to reach the convergent state. And the following inequality  can be deduced: 

\begin{align}\label{eq:ie15}
&\!\!\!\!\mathcal{F}^{\infty}(\mathcal{F}^{*\infty}(\widehat{\bot})) \leq \mathcal{F}^{\infty}(\mathcal{F}^{\infty}(\widehat{\bot})) 
\end{align}

By reducing the right side of the inequality, we can get:

\begin{align}\label{eq:eq16}
&\!\!\!\!\mathcal{F}^{\infty}(\mathcal{F}^{*\infty}(\widehat{\bot})) \leq \mathcal{F}^{\infty}(\widehat{\bot})
\end{align}

Thus, the convergent state achieved starting from the previous convergent results is partial to that achieved by starting from $\bot$.  
All in all, we prove that $\mathcal{F}^{\infty}(\mathcal{F}^{*\infty}(\widehat{\bot})) \equiv \mathcal{F}^{\infty}(\widehat{\bot})$.
\end{proof}

}

\section{Implementation}
\label{sec:implementation}

We implemented \tool by following the distributed worklist algorithm  on top of Apache Giraph 1.4.0\footnote{\url{https://giraph.apache.org/}}, a well-maintained open source Java implementation of Pregel \cite{giraph-book,giraph-vldb}. 
Giraph replicates Pregel's concepts and adds several new features to this model, including master computation, out-of-core computation, and sharded aggregators, etc. 
In particular, Giraph first divides the input graph into a number of partitions based on Hadoop distributed file system. 
Within each superstep of the BSP model, Giraph launches multiple workers and enables each worker to process a partition separately in a distributed way. 
Giraph offers multiple effective partitioning schemes, which \tool directly adopts to achieve good workload balance and scalability. 
Besides, \tool leverages two extra options offered by Giraph to realize the pulled-based worklist algorithm. 
(1) BasicComputation Class. BasicComputation is a general option for performing computations in Giraph. 
It can be used to access the graph’s information, such as the superstep ID and information of vertices and edges.
We extend it to distinguish analysis phase and acquire edge information in the implementation of \tool. 
(2) Broadcast Class. Broadcast is the simplest way for master node to communicate with worker nodes in the scope of the entire cluster, ensuring that all vertices access the same information. 
\tool exploits this feature to broadcast workers of entry nodes in CFG.


\change{
In the incremental analysis discussed in \S\ref{sec:design-incre}, we additionally employ Redis\footnote{\url{https://redis.io/}}, an open source, in-memory, NoSQL key-value database to achieve efficient, scalable, and robust query of analysis results. 
As one of the most popular in-memory databases,  Redis offers low-latency read and write operations for large-scale data. 
To survive system failure and network bottlenecks, Redis provides persistence of datasets via snapshotting, journaling and replication.  
All in all, Redis meets the requirement of incremental dataflow analysis where a vast amount of dataflow facts need to be maintained, queried and updated in an efficient, scalable, and reliable way.
To be specific, we make use of the following Redis features for better resource utilization and query efficiency.
(1) Connection Pooling. Connection Pooling is a convenient Redis client library that offers connection management optimization. By utilizing it, \toolincre  avoids the overhead of establishing and releasing new connections for each request when reusing or updating a large number of dataflow facts.
(2) Pipelining. Pipelining allows users to batch multiple commands into a single request without waiting for the response to each individual command. We adopt it to merge multiple read or write requests of dataflow fact into a single request, thereby greatly reducing the overall response and transfer time. (3) Proxy. Proxy is responsible for routing commands, balancing loads, and dealing with failovers. It simplifies the client's query logic without in-depth understanding of the underlying communication and fault tolerance mechanisms. We use it to efficiently and reliably access a large number of dataflow facts in the Redis cluster.
}


\section{Programming Model}
\label{sec:model}




\tool as a framework supporting the general interprocedural dataflow analysis, provides a set of necessary APIs to users. 
Users readily implement a particular client analysis based on these APIs by specifying the information of input CFG, the dataflow equations (\ie, merge, transfer), and the propagation logic. 
In the following, we first discuss the crucial APIs provided by \tool, then demonstrate how to implement a client analysis based on the APIs.

\subsection{APIs\label{subsec:bigdataflow-api}}

Given a control flow graph or other sparse representation\cite{Hardekopf:2011-fpa-cgo,Ramalingam:2002-ser-tcs}, \tool takes it as input and constructs the graph in memory. 
During a dataflow analysis, each vertex in the CFG maintains a dataflow fact, as well as the program statements associated.
Lines \ref{vertex-start}-\ref{vertex-end} in Listing \ref{lst:apis} show the abstract class of  \texttt{VertexAttribute}, which defines two members: dataflow fact of abstract class \texttt{Fact} and statements of class \texttt{Stmts}. 
Dataflow fact describes the dataflow information computed at each program point during analysis.
The abstract class \texttt{Fact} (Line \ref{fact}) leaves users the interface for specifying a particular type of dataflow fact in a client analysis. 
\texttt{Stmts} (Lines \ref{stmts-start}-\ref{stmts-end}) describes the set of statements associated with the vertex, which determines the logic of transfer functions.
In a statement-level dataflow analysis, dataflow fact is associated with each statement, where an instance of \texttt{Stmts} contains one single statement.
While in a basic block-level analysis, each instance of \texttt{Stmts} indicates a set of statements in a basic block. 

\lstset{escapeinside={(*@}{@*)}}
\begin{lstlisting} [language=Java,label={lst:apis},caption=The APIs.]
abstract class VertexAttribute{(*@\label{vertex-start}@*) 
   Fact fact;
   Stmts stmts;
}(*@\label{vertex-end}@*) 

abstract class Fact{} (*@\label{fact}@*) 

abstract class Stmt {} (*@\label{stmt}@*) 
class Stmts { (*@\label{stmts-start}@*) 
   Stmt[] stmts;
} (*@\label{stmts-end}@*) 
 
interface Analysis { (*@\label{analysis-start}@*) 
   Fact merge(Set<Fact> predFacts, Fact oldIN);
   Fact transfer(Stmts stmts, Fact inFact);
   boolean propagate(Fact oldFact, Fact newFact);
} (*@\label{analysis-end}@*) 

interface IncrementalAnalysis { (*@\label{inc-analysis-start}@*) 
   Fact merge(Set<Fact> predFacts, Fact oldIN);
   Fact transfer(Stmts stmts, Fact inFact);
   boolean propagate(Fact oldFact, Fact newFact);
} (*@\label{inc-analysis-end}@*) 
\end{lstlisting}

\change{
Besides the above crucial data structures, three necessary components of dataflow analysis are defined in the \texttt{Analysis} and  \texttt{IncrementalAnalysis} interfaces shown as Lines \ref{analysis-start}-\ref{analysis-end} and \ref{inc-analysis-start}-\ref{inc-analysis-end} in Listing \ref{lst:apis}.
Whenever the computation on a vertex $k$ is launched, \texttt{merge()} is first invoked to take the newly updated dataflow facts of predecessors together with the old incoming fact, and produce a new incoming dataflow fact for $k$. 
In general, the merge operation can be union or intersection depending on the specific client analysis. 
Users override \texttt{merge()} to specify the exact logic. 
Taking the incoming dataflow fact produced by \texttt{merge()} and the statements as input, \texttt{transfer()} computes the outgoing dataflow fact accordingly. 
Users are required to specify the particular transformation logic by overriding \texttt{transfer()} for a particular client analysis. 
\texttt{propagate()} describes the conditions for propagating dataflow facts to successors.
Usually, propagation is decided by the comparison between old fact and new fact.
User overrides \texttt{propagate()} to define concrete termination condition. 
}

\subsection{An Example of Alias Analysis \label{subsec:model-example}}
We use a context- and flow-sensitive alias analysis as an example to illustrate how to use the APIs to implement a client analysis.
Flow-sensitive alias analysis computes the alias relations between pointer variables at each program point. 
As a fundamental analysis, it has been widely used in various applications including bug detection, security enforcement, optimizations, etc.



We adopt function cloning to achieve context-sensitivity \cite{Emami:1994-clone-pldi,zuo2021chianina}.
The input CFG to \tool actually corresponds to a cloned interprocedural CFG. 
Taking the inlined ICFG as input, we first define a particular subclass \texttt{AliasStmt} to instantiate each statement for alias analysis. Its detailed implementation is omitted due to space limit.   
\texttt{Stmts} has only one \texttt{Stmt} instance as we would like to analyze the alias information at the granularity of statement. 
Here we adopt the program expression graph (PEG) \cite{zheng2008demand} as a dataflow fact to represent the alias information at each program point. 
As such, each object of \texttt{Fact} is instantiated as a \texttt{PEG} instance.
Next, \texttt{merge()} is achieved as union of the updated PEGs from predecessors with the old incoming fact. 
Within the overridden \texttt{transfer()}, edge addition and/or deletion are performed on PEG according to the semantics of each type of statement.
If the old PEG and newly updated PEG are isomorphic, \texttt{propagate()} returns false and the vertex becomes inactive.

\begin{lstlisting} [language=Java,label={lst:alias-instance},caption=The implementation of flow-sensitive alias analysis on top of \tool.]
public class AliasStmt extends Stmt {...}
class AliasVertexAttribute extends VertexAttribute
{
   super();
   fact  = new PEG();
}

class AliasAnalysis implements Analysis {
   Fact merge(Set<Fact> predFacts, Fact oldIN) {
      PEG peg = (PEG)oldIn;
      for (Fact item : predFacts) {
         if (item == null)  continue;
         PEG prePEG = (PEG)item;
         peg.merge(prePEG);
      }
      return peg;
   }
   Fact transfer(Stmts stmts, Fact fact) {
      PEG peg = (PEG)fact;
      switch (stmts[0].getType()) {
         case Load:
           transfer_load(peg,(AliasStmt)stmts[0]);
           break;
         //...
      }
      return peg;
   }
   boolean propagate(Fact oldFact, Fact newFact) {
      if(oldFact == null)  return true; 
      PEG newPEG = (PEG)newFact;
      PEG oldPEG = (PEG)oldFact;
      return !newPEG.consistent(oldPEG);
   }
}
\end{lstlisting}

As can be seen, to implement a client analysis on top of \tool, users only need to specify the necessary functionalities specific to client analysis, without worrying about any implementation details of the underlying worklist algorithm as well as other system-side optimizations.

\section{Evaluation}
\label{sec:evaluation}

Our evaluation focuses on the following three questions:

\begin{itemize}
\item Q1: What is the overall performance of \tool given a rich set of distributed computing resources? (\cref{subsec:performance})
\item Q2: How does \tool perform compared with other competitive analysis systems/tools? (\cref{subsec:evaluate-compare})
\item Q3: What about the performance of \tool given the varying numbers of cores and resources? (\cref{subsec:scalability})
\item \change{Q4: How about the performance of \tool in the mode of incremental analysis (\cref{subsec:eval-increment})?} 
\end{itemize}

\begin{table}[h]
	\caption{Characteristics of subject programs.}
	\label{tab:subjects}
	\centering
	\vspace{-0.5em}
	\scalebox{0.84}{
		\begin{tabular}{l|r|r|r|r}
			\toprule
			Subject  & Version & \#LoC  & \#Functions & Description\\
			\midrule
			\textsf{Linux}  & 5.2 & 17.5M & 565K & operating system \\
			\textsf{Firefox}  & 67.0  & 7.9M &  770K & web browser \\
			\textsf{PostgreSQL}  & 12.2  & 1.0M & 30K & database system \\
			\textsf{OpenSSL}   & 1.1.1 & 519K & 12K & TLS protocol \\
			\textsf{Httpd}  &  2.4.39 & 196K & 6K & web server \\
		\end{tabular}
	}
\end{table}


\MyPara{Subjects.}
\textcolor{black}{To measure the performance of \tool on scaling large programs, we selected five real-world software as the experimental subjects, including Linux kernel, Firefox, PostgreSQL, OpenSSL, and Apache Httpd. 
Table \ref{tab:subjects} lists detailed information about the subjects, such as the version (Version), the number of lines of code (\#LoC), the number of functions (\#Functions), and its description.}

\MyPara{Reference Tools.}\change{
To validate the advantage of \tool in terms of performance and scalability on large-scale programs, we selected the existing parallel/distributed analysis systems/tools as the competitors. 
For parallel algorithms, we chose \toolchianina \cite{zuo2021chianina}, the most recent and state-of-the-art parallel system scaling context - and flow-sensitive analysis to large-scale C programs. \toolchianina is implemented in C/C++, and leverages two-level parallel computation model and out-of-core disk support to achieve both analysis efficiency and scalability.
We ignore other sequential analysis algorithms \cite{Hardekopf:2011-fpa-cgo,sui-2016-svf-cc} since it has been validated that \toolchianina outperforms them \cite{zuo2021chianina}. 
For distributed work, since there exist no distributed systems supporting dataflow analysis, we used \toolnaive, the version implemented based on the distributed classic worklist algorithm shown as Algorithm \ref{a:naive-algo} as the reference tool. 
By default, \tool is implemented using the optimized version (\ie, Algorithm \ref{a:opt-algo}).
For incremental dataflow analysis, since no existing distributed incremental analysis tool is available for comparison, we compare it with our whole-program analysis mode, \ie, \toolwho vs. \toolinc.
}

\MyPara{Hardware and Software Settings.}\change{
All experiments were conducted in the Alibaba Cloud environment. \toolwho, \toolinc and \toolnaive are all deployed on a cluster consisting of 125 Elastic Compute Service (ECS)\footnote{\url{https://www.alibabacloud.com/product/ecs}} nodes with Alibaba Elastic MapReduce (EMR) installed. Each node (in particular \emph{ecs.r7.2xlarge}) is equipped with 8 virtual CPU cores based on Intel Xeon Scalable processors and 64GB memory, running CentOS 7.4. The adopted EMR version is 3.14.0 corresponding to Hadoop 2.7.2, Giraph 1.4.0, and Redis 5.0.0. 
To compare with \toolchianina which can only run on a single-machine with shared memory, we used the most powerful server node available in the US (Virginia) region, \ie, \emph{ecs.r6.26xlarge} with 104 virtual cores,  768G memory, and 1T SSD-backed cloud disk. 
}

\MyPara{Client Analyses.}\change{
In the experiments, we implemented two client analyses, namely context-sensitive flow-sensitive alias analysis and instruction cache analysis, on top of \toolwho, \toolinc, \toolnaive, and \toolchianina. 
The alias analysis is same as the example discussed in \cref{subsec:model-example}.
For cache analysis, we followed the abstract model of LRU caches in \cite{DBLP:journals/lites/LvGRW016} that adopts the set-associative organization. The configuration is set as 512 cache lines with LRU replacement strategy enabled. The analysis computes a cache model at each program point and decides a cache hit or miss.
We chose the above two analyses for several reasons: 1) both analyses are fundamental and widely-used; 2) they are expensive and hardly scalable given their memory-intensive dataflow fact and compute-intensive transfer function; 3) they fall into the two cases of the accumulative property in \cref{subsec:proof} respectively, thereby validating the proof more comprehensively.
}

The context-sensitivity is achieved via fully function cloning (\ie, $\infty$-CFA). 
We start the cloning based upon a call graph constructed by using a lightweight inclusion-based context-insensitive pointer analysis with support for function pointers. 
To handle recursion, we first identify the strongly connected components (SCCs) over the pre-computed call graph. Functions not in any SCC enjoy full context sensitivity. Whereas, level-2 call-string sensitivity (\ie, using 2 top-most callsites as the distinguishing context) is used for those within SCCs.
Note that function cloning is NOT the core contribution of this work. Users can adopt the classical k-limited context-sensitivity or other selective context-sensitivity techniques \cite{jeon-ata-oopsla18,li-pa-oopsla18}. This can be done by launching a cheap pre-analysis to understand the contexts desired, and then performing selective function cloning. 

For each client analysis, the version implemented on top of \toolwho, \toolinc, \toolnaive and \toolchianina are identical and possess the same analysis precision. We checked the analysis results of four tools and validated they are consistent. Specifically, we compared the total number of alias pairs (including both memory alias and value alias) generated for alias analysis, and the total number of potentially cached memory blocks for cache analysis. The columns \#PAliases and \#BCached in Table \ref{tab:performance} list the exact numbers.

\subsection{Performance of Whole-Program Analysis}

\subsubsection{Overall Performance}
\label{subsec:performance}
Tables \ref{tab:performance-alias} and \ref{tab:performance-cache} demonstrate the performance of \tool in the mode of whole-program analysis when analyzing the five real-world subjects.
Columns \#Workers, \#PMem, and Time indicate the number of workers used (one worker corresponding to one physical core), the amount of peak memory consumed, and the total analysis time, respectively.

\textcolor{black}{It is well known that the complexity of a particular dataflow analysis is heavily dependent on many factors, such as the size, density, structure of the control flow graph,  and the semantics of program under analysis. Thereby, it is difficult to give a general formula that can figure out the ideal number of workers needed. What we can do is to estimate a number as small as possible so as to the analysis task can be completed successfully and efficiently. To this end, we first run a small sample of the analysis (\ie, 1/50 of the input graph) on a small test cluster with 10 nodes. Based on the resource utilization data monitored, we estimate an initial number roughly. Next, we run the analysis on the initial number of workers. If the task fails due to insufficient memory, the number of workers is doubled until the analysis can succeed.}

\begin{table}[htb!]
	\caption{{Overall performance: columns \textbf{\#PAliases} and \textbf{\#BCached} indicate the number of alias pairs and the number of potentially cached memory blocks;  columns \textbf{\#Workers}, \textbf{\#PMem}, \textbf{Time} and \textbf{Cost} represent the number of workers used, the size of peak memory consumed, the total analysis time, and the rental cost of cloud resources, respectively;  \textbf{\#Part.} indicates the number of partitions; - indicates out-of-memory error; (a) and (b) report the results for alias and cache analysis, respectively.}}
	\label{tab:performance}
	\centering
\subfloat[Alias Analysis]{\label{tab:performance-alias}
	\scalebox{0.64}{
		\begin{tabular}{lrr|r|r|rrr|r|r|rrr|r|r|r}
			\toprule
			& & \multicolumn{4}{c}{\tool} && \multicolumn{4}{c}{\toolnaive} && \multicolumn{4}{c}{\toolchianina} \\ \cmidrule{3-6} \cmidrule{8-11} \cmidrule{13-16}
			Subject  & \#PAliases & \#Workers & \#PMem & Time & Cost && \#Workers & \#PMem & Time & Cost && \#Part. & \#PMem & Time & Cost  \\
		\midrule
			\textsf{Linux} &  12.5B & 350 & 3.5T & 16.7mins & \$11.1 && 350 &  - & - & - && 4 & 453.4G & 17.4hrs & \$110.4 \\
			\textsf{Firefox}  & 11.5B & 140 & 1.2T & 16.5mins & \$4.4 && 140 &  - & - & - && 4 &  131.6G & 5.3hrs & \$33.6 \\
			\textsf{PostgreSQL} & 727.0M & 50 & 329.7G & 2.8mins & \$0.3 && 50  & 330.7G & 4.9mins & \$0.5 && 1  & 61.9G & 50.4mins & \$5.3 \\
			\textsf{OpenSSL}  &  734.8M & 30 & 285.3G & 3.5mins & \$0.2 && 30 &  329.8G & 6.8mins & \$0.4 && 1  & 43.2G & 35.4mins & \$3.7  \\
			\textsf{Httpd}   &  183.1M & 10 & 119.9G & 2.8mins & \$0.1 && 10 &  137.9G & 4.0mins & \$0.1 && 1  & 14.2G & 11.2mins & \$1.2  \\[-.5em]
		\end{tabular}
	}
}

\subfloat[Instruction Cache Analysis]{\label{tab:performance-cache}
	\scalebox{0.64}{
		\begin{tabular}{lrr|r|r|rrr|r|r|rrr|r|r|r}
			\toprule
			& & \multicolumn{4}{c}{\tool} && \multicolumn{4}{c}{\toolnaive} && \multicolumn{4}{c}{\toolchianina} \\ \cmidrule{3-6} \cmidrule{8-11} \cmidrule{13-16}
			Subject  & \#BCached & \#Workers & \#PMem & Time & Cost && \#Workers & \#PMem & Time & Cost && \#Part. & \#PMem & Time & Cost  \\
	       \midrule
			\textsf{Linux} & 21.5B & 500 & 5.6T & 44.4mins & \$42.0 && 500 &  - & - & - && 4 & 555.4G & 9.4hrs & \$59.6 \\
			\textsf{Firefox} & 15.8B & 400 & 4.4T & 39.0mins & \$29.5 && 400 &  - & - & - &&  4  & 351.5G & 7.2hrs &  \$45.7  \\
			\textsf{PostgreSQL} & 1.4B & 180  & 1.1T & 3.2mins & \$1.1 && 180  & 1.1T & 6.5mins & \$2.2  && 1  & 115.3G & 38.1mins & \$4.0 \\
			\textsf{OpenSSL} & 2.8B & 180 & 1.3T & 6.9mins & \$2.3 && 180  & 1.5T & 13.4mins & \$4.6 &&  1 &  227.6G & 1.7hrs   & \$10.8 \\
			\textsf{Httpd}  & 782.0M & 100  & 684.3G & 3.0mins & \$0.6 && 100 & 781.3G & 4.6mins & \$0.9 &&  1 &  58.3G & 18.5mins  & \$2.0 \\[-2.em] 
		\end{tabular}
	}
}
\end{table}


As can be seen, the peak memory consumed in both alias analysis and instruction cache analysis can easily reach several terabytes for large-scale programs, such as the Linux kernel and Firefox, due to the memory-intensive dataflow fact and the huge number of program points. 
Even for the smallest subject Httpd, performing the context- and flow-sensitive analysis takes more than a hundred or even several hundreds of gigabytes.
This is consistent with the claim in \cite{Aiken:2007-paste-saturn} that memory would be the major bottleneck for analysis to scale to large programs.
By leveraging the enormous amount of memory and computing resources in a cloud environment, \tool manages to analyze all the subjects successfully and efficiently. 
The alias analysis can be completed within 20 minutes for all subjects; the more expensive cache analysis takes less than 45 minutes for the Linux kernel with 500 workers.

\subsubsection{Comparison with Other Frameworks\label{subsec:evaluate-compare}}
Given the identical version of the client analysis implemented, we compared \tool against \toolnaive and \toolchianina with respect to performance and cost.
Columns under \toolnaive and \toolchianina in Table \ref{tab:performance-alias} and \ref{tab:performance-cache} show the detailed results of \toolnaive and \toolchianina, respectively.

\MyPara{\toolchianina.}
As \toolchianina can only run on a single-machine with shared memory, we rented the most powerful server node with 104 virtual cores, 768G memory, and 1T SSD available in the US (Virginia) region of Alibaba Cloud. 
In terms of analysis time, \toolchianina with 104 threads takes more than 17 hours and 9 hours to finish alias analysis and cache analysis over Linux. While \tool completes alias and cache analysis within 20 and 45 minutes under a cluster, respectively. 
It shows that distributed parallelism enabled by \tool indeed accelerates the analysis significantly (up to 62x and 12x for alias analysis and cache analysis on Linux, respectively).
Note that \tool takes more time for cache analysis than alias analysis on all the subjects, whereas \toolchianina does not.  This can be explained from two aspects. First, cache analysis is more memory-intensive than alias analysis. The cache analysis on \tool implemented in Java deservedly pays more GC time. 
Second, as observed, the alias analysis running on \toolchianina has low CPU utility due to load imbalance and excessive thread-switching costs for certain subjects (\eg, Linux) when a large number of threads are enabled on a single machine.

As the computing resources used by \tool and \toolchianina are different, we cannot simply derive that \tool outperforms \toolchianina.
For the sake of fairness, we measured the exact amount of rental costs of cloud resources in dollars paid by \tool and \toolchianina for completing the identical analysis. 
\textcolor{black}{As cloud providers generally adopt a unified pricing strategy, there is little difference in the price of nodes with similar resources across different providers.
Without loss of generality, we calculated the cost by multiplying the analysis time and the official pay-as-you-go hourly price of Alibaba Cloud in US (Virginia) region\footnote{\url{https://www.alibabacloud.com/zh/product/ecs-pricing-list/en}}. }
In particular,  at the time of submission, each node \emph{ecs.r7.2xlarge} used by \tool takes \$0.454/hour. The price of the entire cluster is 0.454*125, \ie, \$56.75/hour.  
The single \emph{ecs.r6.26xlarge} server node used by \toolchianina takes \$6.344/hour.
The cost columns in Table \ref{tab:performance} show the detailed results. 
As can be seen, \tool spends lower rental costs than \toolchianina over all the subjects except Httpd. 
Although the price of the cluster used by \tool (\$56.75/hour) is much higher than that of the single server used by \toolchianina (\$6.344/hour), \tool takes much less time to finish the analysis than \toolchianina. 
We can thus conclude that \tool is able to offer significantly higher analysis efficiency for large-scale programs, while taking fewer costs compared to \toolchianina.

Regarding memory consumption, \tool apparently consumes much more memory than \toolchianina. 
There are several reasons. (1) \toolchianina is a disk-based system where the memory consumption is strongly restricted. It will leverage disks to maintain the huge amount of data once the memory consumption exceeds a certain threshold. In contrast, \tool prefers utilizing the memory on each node to perform communications and accelerate the analysis. (2) \tool is implemented in Java, while \toolchianina is implemented in C/C++. No doubt \toolchianina would have less memory footprint than \tool. (3) \tool is running on top of Giraph. To achieve fault tolerance, Giraph needs to maintain extra (\eg, 3) replicas for all the data stored. Moreover, for certain global data used in the analysis, \tool has to broadcast it on every node, leading to extra memory consumption.

\MyPara{\toolnaive.}
As numerous redundant and expensive dataflow facts were transmitted in the network and gathered at each vertex, \toolnaive failed to analyze the large-scale subjects in our experiments (\ie, Linux and Firefox) given the same computing resources as \tool.
It validates that \tool does save memory resources, thus offering better scalability than \toolnaive. 
For the analyses which both \toolnaive and \tool successfully complete, \tool exclusively outperforms \toolnaive in terms of time efficiency. 
This is because \toolnaive requires more data transferred and merged than \tool to accomplish the same analysis. 

\begin{figure}[htb!]
\centering
\subfloat[time]{      
\label{fig:alias-sql-time}
\begin{minipage}[c]{.34\linewidth}
\centering
\includegraphics[width=1\textwidth]{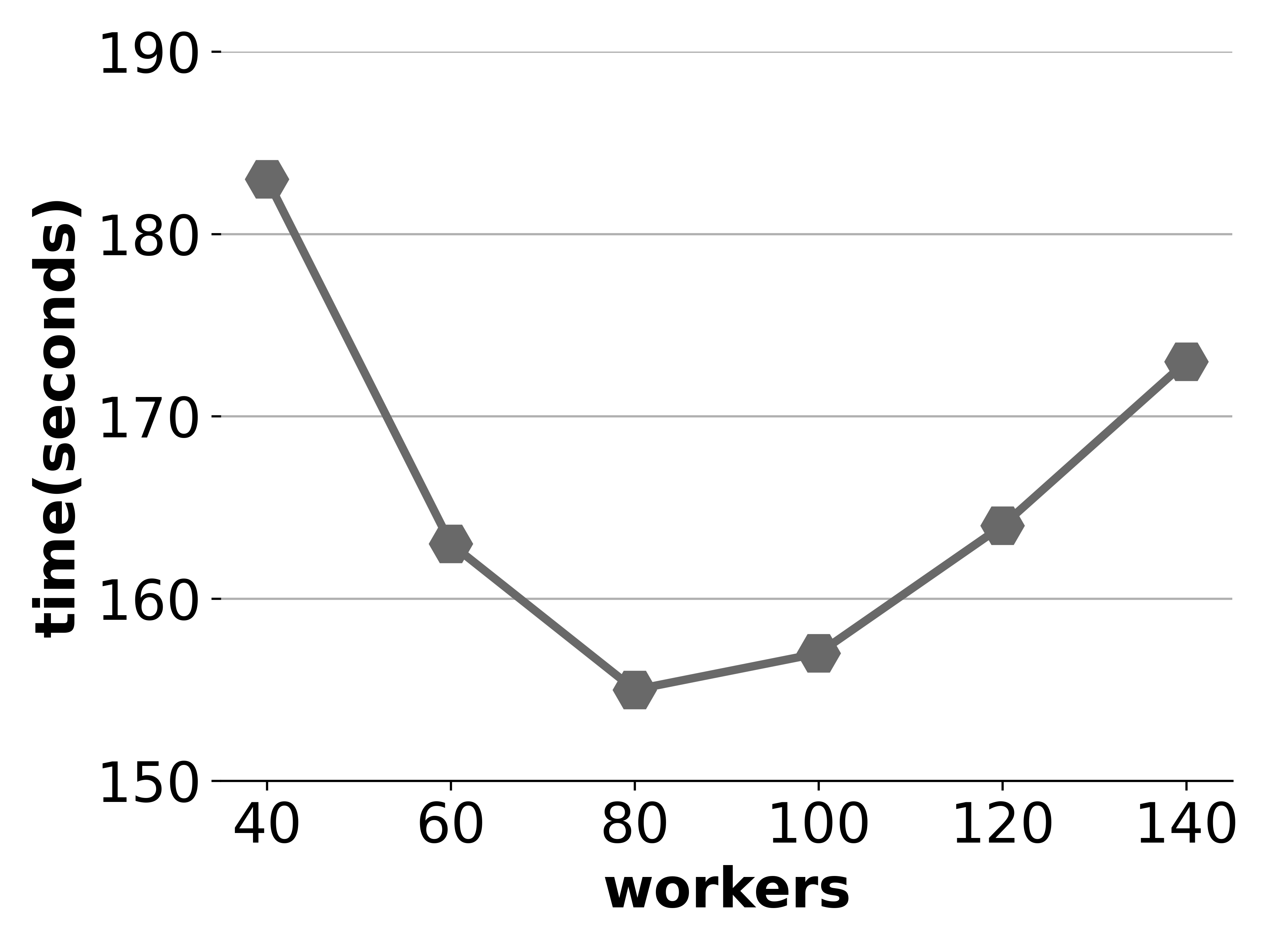}
 \end{minipage}
}
\hspace{2em}
\subfloat[peak memory]{         
\label{fig:alias-sql-memory}
\begin{minipage}[c]{.34\linewidth}
\centering
\includegraphics[width=1\textwidth]{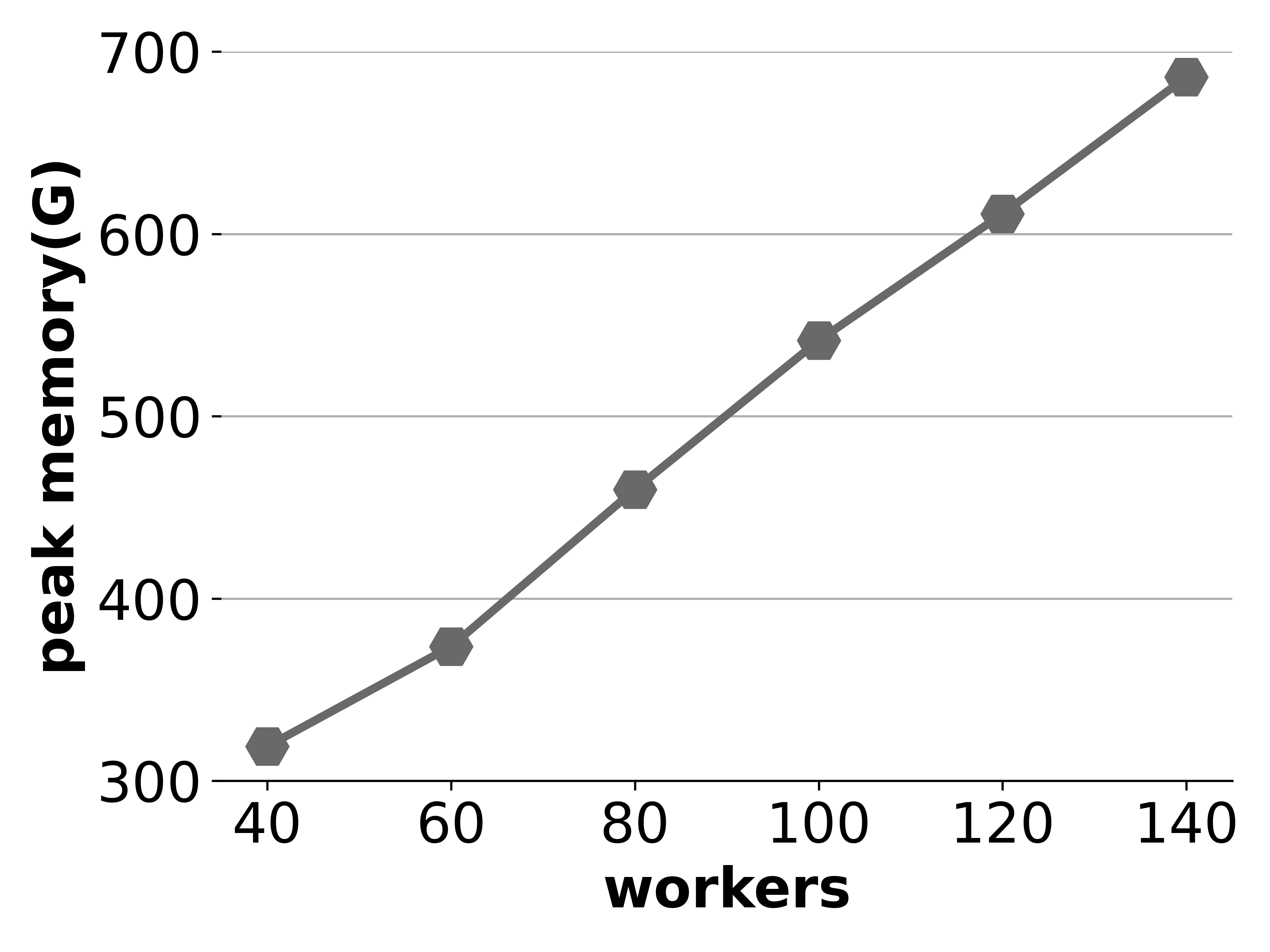}
\end{minipage}
}
\caption{The time (a) and peak memory (b) used for alias analysis on OpenSSL with varying number of workers.\label{fig:alias-sql}}
\end{figure}

\begin{figure}[htb!]
\centering
\subfloat[time]{      
\label{fig:cache-sql-time}
\begin{minipage}[c]{.34\linewidth}
\centering
\includegraphics[width=1\textwidth]{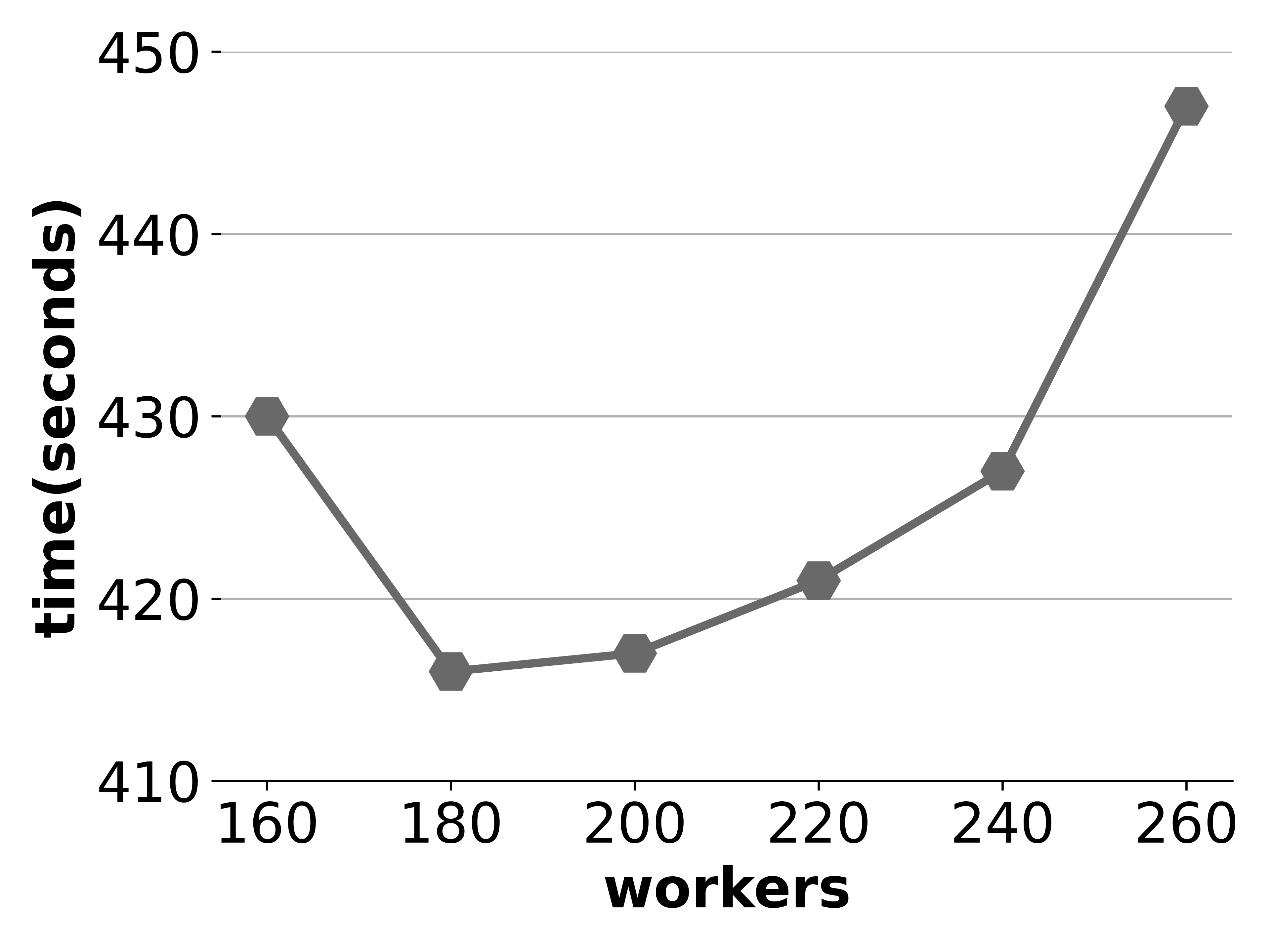}
 \end{minipage}
}
\hspace{2em}
\subfloat[peak memory]{         
\label{fig:cache-sql-memory}
\begin{minipage}[c]{.34\linewidth}
\centering
\includegraphics[width=1\textwidth]{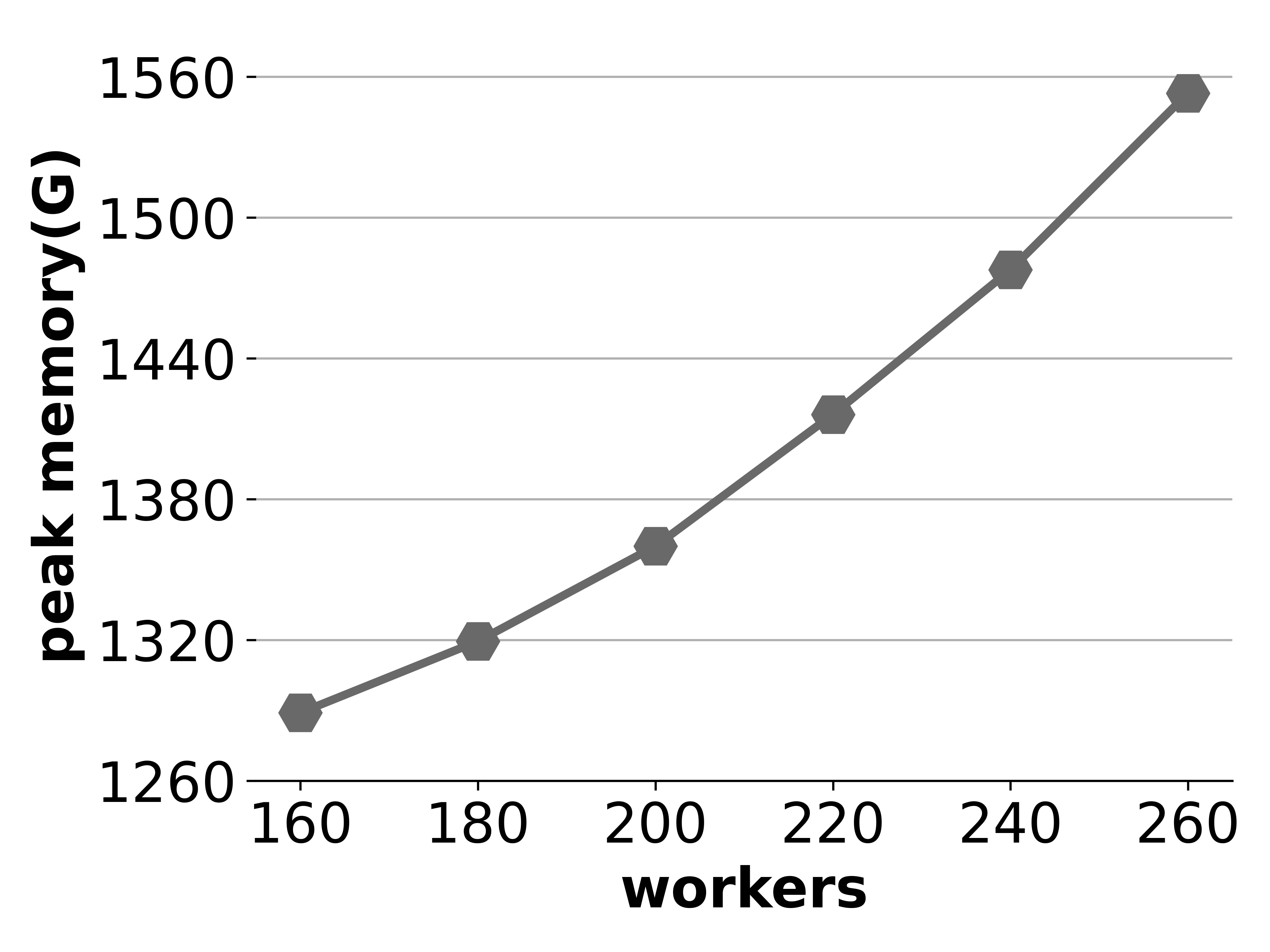}
\end{minipage}
}
\caption{The time (a) and peak memory (b) used for cache analysis on OpenSSL with varying number of workers.\label{fig:cache-sql}}
\end{figure}

\subsubsection{Scalability\label{subsec:scalability}}
To understand the scalability of \tool, we measured the analysis time in seconds and peak memory consumption in gigabytes for both alias analysis and cache analysis given different numbers of workers. 
Figures \ref{fig:alias-sql} and \ref{fig:cache-sql} show the detailed performance results of alias analysis and cache analysis on OpenSSL, respectively, where the x-axis indicates the number of workers, and y-axis represents the time or peak memory used. 
Here only the data of PostgreSQL is reported. 
Other subjects show a similar trend to that of OpenSSL. 

For alias analysis, the time taken by \tool follows a V-bottom pattern shown as Figure \ref{fig:alias-sql-time}. 
When less workers are available (\ie, 40), the total memory capacity just satisfies the analysis need. 
With the number of workers increasing from 40 to 80, increased parallelism is translated to higher performance. Therefore, the overall running time shows a descending trend. 
However, the communication cost among workers is monotonically increased with the growth of workers involved. 
Once the performance benefit of parallelism is no longer superior to the increased communication cost among workers, time climbs steadily.
As such, for the specific analysis, having 80 workers provides the best trade-off between parallelism benefit and communication cost, leading to the shortest running time of all the tested parallel schedules.
It implies that in practice we can seek a sweet spot of parallelism for different subjects according to the tendency of running time as the number of workers changes.
This is particularly meaningful because 1) cloud resources are on demand and charged on actual usage; and 2) performing dataflow analysis on the same large-scale program could be an iterative process as the program evolves constantly.
In terms of the peak memory usage, as more threads/processes consume more memory space, it is not surprising that it shows an ascending trend with the growth of workers in Figure \ref{fig:alias-sql-memory}.
Figure \ref{fig:cache-sql} shows similar trends for cache analysis. 
As can be read from Figure \ref{fig:cache-sql-time}, with 160 workers available, \tool successfully finishes the cache analysis. The best performance is achieved given more workers (\ie, 180). After that point, more analysis time is needed with the increasing number of workers.

\change{
\subsection{Performance of Incremental Analysis\label{subsec:eval-increment}}

In this section, we evaluate the performance of \tool in the mode of incremental analysis.
Due to the limit of evaluation costs, we select two representatives from the subjects in Table \ref{tab:subjects} as the experimental subjects, \ie, the largest Linux and the smallest Httpd. 
Moreover, to mimic the incremental scenario, we produce the analysis tasks by conducting the following steps: 1) we collect the real-world commits from the GitHub repositories of Linux and Httpd 30 days prior to our experiments. We ignored some of the commits which do not involve code changes; 2) we then divide these commits into 3-day and 6-day intervals to generate 10 and 5 code versions, respectively; 3) we construct the CFG of each version and compare the CFGs between two consecutive versions. The resulting CFG with updates would be the input of incremental analysis.

\begin{table}[htb!]
	\caption{Performance of incremental alias analysis. The column \textbf{Subject} shows the incremental versions used. The column class \textbf{Atomic-changes} indicates the percentages of nodes and edges atomically changed against the total number of nodes and edges in the entire CFG. The column class \textbf{Sub-CFG} demonstrates the percentages of nodes and edges in the affected sub-CFG against the total number of nodes and edges in the CFG. The column classes \toolwho and \toolinc show the analysis performance data in terms of the rental cost of cloud resources (\textbf{Cost}), the analysis time (\textbf{Time}), the number of workers used (\textbf{\#Workers}), and the amount of peak memory consumed (\textbf{\#PMem}) for whole-program and incremental analysis, respectively.}
	\label{tab:incre-performance-alias}
	\centering
	\scalebox{0.65}{
		\begin{tabular}{lrr|rr|rrrr|rrrr}
			\toprule
            & \multicolumn{2}{c}{Atomic-changes} & \multicolumn{2}{c}{Sub-CFG} & \multicolumn{4}{c}{\toolwho} & \multicolumn{4}{c}{\toolinc} \\ \cmidrule{2-3} \cmidrule{4-5} \cmidrule{6-9} \cmidrule{10-13}
            Subject & Nodes & Edges & Nodes & Edges & Cost & Time & \#Workers & \#PMem & Cost & Time & Workers & \#PMem \\
		\midrule
            \textsf{Httpd-3days-NU10} & 0.645\% & 0.592\% & 2.130\% & 2.147\% & 
             \$0.073 & 3.2mins & 10 & 120.1G & \$0.018 & 0.8mins & 10 & 40.0G  \\
            \textsf{Httpd-3days-NU9} & 0.000\% & 0.001\% & 1.040\% & 1.095\% & 
            \$0.070 & 3.1mins & 10 & 120.2G & \$0.018 & 0.8mins & 10 & 40.1G \\
            \textsf{Httpd-3days-NU8} & 0.002\% & 0.004\% & 2.233\% & 2.344\% & \$0.073 & 3.2mins & 10 & 120.8G & \$0.018 & 0.8mins & 10 & 41.0G \\
            \textsf{Httpd-3days-NU7} & 0.000\% & 0.001\% & 1.040\% & 1.095\% &  \$0.070 & 3.1mins & 10 & 120.2G & \$0.018 & 0.8mins & 10 & 40.1G \\
            \textsf{Httpd-3days-NU6} & 0.667\% & 0.694\% & 1.944\% & 1.998\% & \$0.070 & 3.1mins & 10 & 117.3G & \$0.016 & 0.7mins & 10 & 40.1G \\
            \textsf{Httpd-3days-NU5} & 0.000\% & 0.001\% & 1.051\% & 1.104\% &  \$0.073 & 3.2mins & 10 & 120.2G & \$0.018 & 0.8mins & 10 & 38.9G \\
            \textsf{Httpd-3days-NU4} & 0.000\% & 0.001\% & 1.051\% & 1.104\% &  \$0.073 & 3.2mins & 10 & 120.2G & \$0.018 & 0.8mins & 10 & 38.9G \\
            \textsf{Httpd-3days-NU3} & 0.000\% & 0.000\% & / & / & / & / & / & / & / & / & / & / \\
            \textsf{Httpd-3days-NU2} & 0.000\% & 0.000\% & / & / & / & / & / & / & / & / & / & / \\
            \textsf{Httpd-3days-NU1} & 0.000\% & 0.000\% & / & / & / & / & / & / & / & / & / & / \\
            \hline
            \textsf{Httpd-6days-NU5} & 0.645\% & 0.592\% & 2.130\% & 2.147\% & \$0.075 & 3.3mins & 10 & 117.3G & \$0.018 & 0.8mins & 10 & 40.0G \\
            \textsf{Httpd-6days-NU4} & 0.002\% & 0.004\% & 2.233\% & 2.344\% & \$0.077 & 3.4mins & 10 & 120.8G & \$0.018 & 0.8mins & 10 & 41.0G \\
            \textsf{Httpd-6days-NU3} & 0.667\% & 0.694\% & 1.944\% & 1.998\% & \$0.073 & 3.2mins & 10 & 117.3G & \$0.016 & 0.7mins & 10 & 40.1G \\
            \textsf{Httpd-6days-NU2} & 0.000\% & 0.001\% & 1.051\% & 1.104\% & \$0.073 & 3.2mins & 10 & 120.2G & \$0.018 & 0.8mins & 10 & 38.9G \\
            \textsf{Httpd-6days-NU1} & 0.000\% & 0.000\% & / & / & / & / & / & / & / & / & / & / \\
            \hline\hline
            \textsf{Linux-3days-NU10} & 0.007\% & 0.006\% & 0.423\% & 0.442\% & \$14.716 & 22.1mins & 350 & 4.3T & \$0.303 & 2.0mins & 80 & 0.3T \\
            \textsf{Linux-3days-NU9} & 6.440\% & 6.261\% & 17.310\% & 20.595\% & \$14.716 & 22.1mins & 350 & 4.3T & \$5.248 & 10.2mins & 270 & 3.3T \\
            \textsf{Linux-3days-NU8} & 2.465\% & 2.689\% & 15.430\% & 18.949\% & \$14.982 & 22.5mins & 350 & 4.3T & \$4.271 & 8.3mins & 270 & 3.3T \\
            \textsf{Linux-3days-NU7} & 2.316\% & 2.524\% & 15.344\% & 18.862\% &  \$14.716 & 22.1mins & 350 & 4.3T & \$4.219 & 8.2mins & 270 & 3.3T \\
            \textsf{Linux-3days-NU6} & 0.088\% & 0.110\% & 15.602\% & 19.152\% & \$14.782 & 22.2mins & 350 & 4.3T & \$4.374 & 8.5mins & 270 & 3.0T \\
            \textsf{Linux-3days-NU5} & 2.653\% & 2.845\% & 16.188\% & 19.720\% & \$14.982 & 22.5mins & 350 & 4.3T & \$4.682 & 9.1mins & 270 & 3.3T \\
            \textsf{Linux-3days-NU4} & 0.302\% & 0.303\% & 16.498\% & 20.070\% & \$15.115 & 22.7mins & 350 & 4.3T & \$4.528 & 8.8mins & 270 & 3.0T \\
            \textsf{Linux-3days-NU3} & 2.514\% & 2.737\% & 15.880\% & 19.415\% & \$15.115 & 22.7mins & 350 & 4.3T & \$4.476 & 8.7mins & 270 & 3.2T \\
            \textsf{Linux-3days-NU2} & 2.811\% & 3.135\% & 42.206\% & 42.355\% & \$14.982 & 22.5mins & 350 & 4.3T & \$6.740 & 13.1mins & 270 & 3.2T \\
            \textsf{Linux-3days-NU1} & 8.886\% & 9.333\% & 45.961\% & 46.179\% & \$14.716 & 22.1mins & 350 & 4.3T & \$7.718 & 15.0mins & 270 & 3.2T \\
            \hline
            \textsf{Linux-6days-NU5} & 6.441\% & 6.261\% & 17.311\% & 20.596\% & \$14.582 & 21.9mins & 350 & 4.3T & \$5.093 & 9.9mins & 270 & 3.3T \\
            \textsf{Linux-6days-NU4} & 2.468\% & 2.690\% & 15.434\% & 18.952\% & \$14.849 & 22.3mins & 350 & 4.3T & \$4.219 & 8.2mins & 270 & 3.2T \\
            \textsf{Linux-6days-NU3} & 2.700\% & 2.902\% & 16.545\% & 20.094\% & \$14.782 & 22.2mins & 350 & 4.3T & \$4.837 & 9.4mins & 270 & 3.2T \\
            \textsf{Linux-6days-NU2} & 2.769\% & 2.978\% & 16.879\% & 20.452\% & \$15.049 & 22.6mins & 350 & 4.3T & \$4.940 & 9.6mins & 270 & 3.3T \\
            \textsf{Linux-6days-NU1} & 9.198\% & 9.663\% & 47.477\% & 47.764\% & \$14.649 & 22.0mins & 350 & 4.3T & \$8.130 & 15.8mins & 270 & 3.2T \\
		\end{tabular}
	}
\end{table}

The first column in Table \ref{tab:incre-performance-alias} and \ref{tab:incre-performance-cache} lists the incremental versions used for alias analysis and instruction cache analysis, respectively. The second and third columns indicate the percentages of nodes and edges atomically changed against the total number of nodes and edges in the CFG. The fourth and fifth columns demonstrate the percentages of nodes and edges in the affected sub-CFG against the total number of nodes and edges in the entire CFG. The column classes \toolwho and \toolinc show the analysis performance data in terms of the rental cost of cloud resources (Cost), the analysis time (Time), the number of workers used (\#Workers), and the amount of peak memory consumed (\#PMem) for whole-program and incremental analysis, respectively.

\begin{table}[htb!]
	\caption{Performance of incremental instruction cache analysis. The column \textbf{Subject} shows the incremental versions used. The column class \textbf{Atomic-changes} indicates the percentages of nodes and edges atomically changed against the total number of nodes and edges in the entire CFG. The column class \textbf{Sub-CFG} demonstrates the percentages of nodes and edges in the affected sub-CFG against the total number of nodes and edges in the CFG. The column classes \toolwho and \toolinc show the analysis performance data in terms of the rental cost of cloud resources (\textbf{Cost}), the analysis time (\textbf{Time}), the number of workers used (\textbf{\#Workers}), and the amount of peak memory consumed (\textbf{\#PMem}) for whole-program and incremental analysis, respectively.}
	\label{tab:incre-performance-cache}
	\centering
	\scalebox{0.65}{
		\begin{tabular}{lrr|rr|rrrr|rrrr}
			\toprule
            & \multicolumn{2}{c}{Atomic-changes} & \multicolumn{2}{c}{Sub-CFG} & \multicolumn{4}{c}{\toolwho} & \multicolumn{4}{c}{\toolinc} \\ \cmidrule{2-3} \cmidrule{4-5} \cmidrule{6-9} \cmidrule{10-13}
            
            Subject & Nodes & Edges & Nodes & Edges & Cost & Time & \#Workers & \#PMem & Cost & Time & Workers & \#PMem \\
		\midrule
            \textsf{Httpd-3days-NU10} & 1.466\% & 1.388\% & 1.622\% & 1.582\% & \$0.643 & 3.4mins & 100 & 626.5G & \$0.038 & 1.0mins & 20 & 90.6G \\
            \textsf{Httpd-3days-NU9} & 0.000\% & 0.000\% & / & / & / & / & / & / & / & / & / & / \\
            \textsf{Httpd-3days-NU8} & 13.518\% & 13.600\% & 6.920\% & 6.960\% & \$0.662 & 3.5mins & 100 & 623.2G & \$0.129 & 1.7mins & 40 & 206.2G \\
            \textsf{Httpd-3days-NU7} & 0.000\% & 0.000\% & / & / & / & / & / & / & / & / & / & / \\
            \textsf{Httpd-3days-NU6} & 0.706\% & 0.704\% & 0.346\% & 0.346\% & \$0.681 & 3.6mins & 100 & 626.5G & \$0.030 & 0.8mins & 20 & 75.6G \\
            \textsf{Httpd-3days-NU5} & 0.000\% & 0.000\% & / & / & / & / & / & / & / & / & / & / \\
            \textsf{Httpd-3days-NU4} & 0.000\% & 0.000\% & / & / & / & / & / & / & / & / & / & / \\
            \textsf{Httpd-3days-NU3} & 0.000\% & 0.000\% & / & / & / & / & / & / & / & / & / & / \\
            \textsf{Httpd-3days-NU2} & 0.000\% & 0.000\% & / & / & / & / & / & / & / & / & / & / \\
            \textsf{Httpd-3days-NU1} & 0.000\% & 0.000\% & / & / & / & / & / & / & / & / & / & / \\
            \hline
            \textsf{Httpd-6days-NU5} & 1.466\% & 1.388\% & 1.622\% & 1.582\% & \$0.643 & 3.4mins & 100 & 626.5G & \$0.038 & 1.0mins & 20 & 90.6G \\
            \textsf{Httpd-6days-NU4} & 13.518\% & 13.600\% & 6.920\% & 6.960\% & \$0.662 & 3.5mins & 100 & 623.2G & \$0.129 & 1.7mins & 40 & 206.2G \\
             \textsf{Httpd-6days-NU3} & 0.706\% & 0.704\% & 0.346\% & 0.346\% & \$0.681 & 3.6mins & 100 & 626.5G & \$0.030 & 0.8mins & 20 & 75.6G \\
             \textsf{Httpd-6days-NU2} & 0.000\% & 0.000\% & / & / & / & / & / & / & / & / & / & / \\
            \textsf{Httpd-6days-NU1} & 0.000\% & 0.000\% & / & / & / & / & / & / & / & / & / & / \\ 
            \hline \hline
            \textsf{Linux-3days-NU10} & 0.001\% & 0.001\% & 0.000\% & 0.000\% & \$42.040 & 46.3mins & 480 & 4.9T & \$0.026 & 0.7mins & 20 & 31.8G \\
            \textsf{Linux-3days-NU9} & 0.031\% & 0.032\% & 0.026\% & 0.027\% & \$42.131 & 46.4mins & 480 & 4.9T & \$0.061 & 1.6mins & 20 & 52.1G \\
            \textsf{Linux-3days-NU8} & 0.197\% & 0.201\% & 0.800\% & 0.814\% & \$42.313 & 46.6mins & 480 & 4.9T & \$0.076 & 2.0mins & 20 & 71.2G \\
            \textsf{Linux-3days-NU7} & 0.237\% & 0.263\% & 0.119\% & 0.132\% & \$41.677 & 45.9mins & 480 & 4.9T & \$0.038 & 1.0mins & 20 & 101.6G \\
            \textsf{Linux-3days-NU6} & 0.000\% & 0.000\% & / & / & / & / & / & / & / & / & / & / \\
            \textsf{Linux-3days-NU5} & 0.987\% & 1.083\% & 1.547\% & 1.622\% & \$38.771 & 42.7mins & 480 & 4.9T & \$0.371 & 4.9mins & 40 & 267.0G \\
            \textsf{Linux-3days-NU4} & 0.871\% & 0.884\% & 0.983\% & 1.032\% & \$38.862 & 42.8mins & 480 & 4.9T & \$0.144 & 1.9mins & 40 & 203.4G \\
            \textsf{Linux-3days-NU3} & 0.317\% & 0.316\% & 0.446\% & 0.477\% & \$38.862 & 42.8mins & 480 & 4.9T & \$0.045 & 1.2mins & 20 & 42.7G \\
            \textsf{Linux-3days-NU2} & 1.073\% & 1.157\% & 0.869\% & 0.944\% & \$38.953 & 42.9mins & 480 & 4.9T & \$0.238 & 2.1mins & 60 & 302.9G \\
            \textsf{Linux-3days-NU1} & 0.346\% & 0.340\% & 0.172\% & 0.169\% & \$39.044 & 43.0mins & 480 & 5.2T & \$0.042 & 1.1mins & 20 & 69.9G \\
            \hline
            \textsf{Linux-6days-NU5} & 0.032\% & 0.033\% & 0.027\% & 0.027\% & \$41.586 & 45.8mins & 480 & 4.9T & \$0.061 & 1.6mins & 20 & 48.9G \\
            \textsf{Linux-6days-NU4} & 0.440\% & 0.471\% & 0.930\% & 0.957\% & \$41.586 & 45.8mins & 480 & 4.9T & \$0.087 & 2.3mins & 20 & 141.1G \\
            \textsf{Linux-6days-NU3} & 0.987\% & 1.083\% & 1.547\% & 1.622\% & \$38.772 & 42.7mins & 480 & 4.9T & \$0.371 & 4.9mins & 40 & 267.0G \\
            \textsf{Linux-6days-NU2} & 0.957\% & 0.971\% & 1.027\% & 1.076\% & \$38.862 & 42.8mins & 480 & 4.9T & \$0.114 & 1.5mins & 40 & 199.5G \\
            \textsf{Linux-6days-NU1} & 1.433\% & 1.511\% & 1.053\% & 1.125\% & \$39.044 & 43.0mins & 480 & 5.2T & \$0.363 & 3.2mins & 60 & 302.9G \\
		\end{tabular}
	}
\end{table}

As can be seen, different versions are of different change scales. Some commits only modify a few lines of code, while others can result in thousands of edges modification.
In alias analysis shown as Table \ref{tab:incre-performance-alias}, for Httpd, three 3-day changed versions (\ie, Httpd-3days-NU1,2,3) and one 6-day changed version (\ie, Httpd-6days-NU1) have zero changes. The change ratios in other versions range from 0.001\% to 0.694\%.
For Linux, the majority of the change ratios are concentrated between 2\% and 10\%.
Low ratios of change are shown in the specific cases (\ie, Linux-3days-NU10) compared to to other cases.
This is because some main modules in Linux do not involve code changes in specific versions, thereby leading to a low change ratio in the whole Linux project.
In cache analysis shown as Table \ref{tab:incre-performance-cache}, for Httpd, seven 3-day changed versions (\ie, Httpd-3days-NU1,2,3,4,5,7,9) and two 6-day changed versions have zero changes (\ie, Httpd-6days-NU1,2). For Linux, one version has no ratio of change (\ie, Linux-3days-NU6) and the other ratios of change are concentrated within 1.6\%.
Moreover, the changed versions from 3-day and 6-day intervals exhibit identical change ratio in some cases. For instance in alias analysis, though Httpd-3days-NU5 and Httpd-3days-NU6 both involve changes, Httpd-6days-NU3 is identical with Httpd-3days-NU6, rather than representing the cumulative effect of two changes. We analyzed the changes caused by these two commits and found that the modified parts in the previous commit are altered again by the second commit. This overlap results in identical change to the CFG, which is reflected in identical change ratio. 
This particular case is also observed in Linux. For example, although Linux-3days-NU1 and Linux-3days-NU2 have different ratios of change, their sub-CFGs are of similar size. This is because that  changes in the Linux/acpi module affect similar parts of the code, and thus have a similar impact on the structure of the CFG. 
Regarding the sub-CFG affected, the scale of the sub-CFG directly determines the computational resource consumption of incremental analysis.
As can be seen, different subjects under different client analyses show different characteristics. 
In general, the size of sub-CFG under alias analysis is apparently larger than that of instruction cache analysis. 
This is because that we adopt a sparse representation of CFG in alias analysis, whereas the traditional CFG representation is utilized in cache analysis. 

Regarding analysis performance, incremental analysis demonstrates superior performance in both alias analysis and instruction cache analysis. 
In alias analysis of Linux, the whole-program analysis (\ie, \toolwho) requires around 22 minutes, 4.3 terabytes of peak memory with 350 workers, with the rental cost around 15 dollars. In contrast, incremental dataflow analysis (\ie, \toolinc) only needs on average 10 minutes, 3.3 TB peak memory with 270 workers. The rental cost is directly cut by two-third. 
For the small subject Httpd, the analysis time and rental cost by incremental analysis is cut by three-fourth.
It lowers peak memory consumption to one-third of what is required by whole-program analysis with the same number of workers.
In the more expensive cache analysis, incremental dataflow analysis shows larger performance advantage against whole-program analysis. The cache analysis of Linux can be completed within 2 minutes in most of the versions. The peak memory consumption drops from 5 terabytes to only hundreds of gigabytes. The number of workers needed is reduced greatly from 480 to 20-60. The rental cost is saved by 2 orders of magnitude.
For the smaller Httpd, incremental dataflow analysis  manages to complete the analysis using just 40 workers within 1 or 2 minutes. Both the number of workers and rental cost are significantly reduced as well. 
These results empirically validate that our incremental analysis offers superior efficiency in terms of both time and cost, which is well-suitable to the modern CI/CD scenarios.

}

\section{Discussion}
\label{sec:discuss} 

\MyPara{Usage Scenarios.}\change{
\tool offers the distributed capability in lifting sophisticated dataflow analysis to large-scale programs.  It's highly valuable for organizations with large codebases to analyze, while often with their own cluster deployed. In such scenarios, \tool readily offers both high-speed and scalable analysis to ultra-large-scale programs.
Moreover, the incremental analysis offers further efficiency by performing incremental analysis on program changes rather than re-analyzing the entire codebase. This makes \toolinc to be well-suited for dynamic environments where frequent updates commonly occurred in industrial codebases. 
}

\MyPara{Soundness.}
Like other analysis frameworks (e.g., Soot and WALA), users implement a particular client analysis by specifying its corresponding merge and transfer functions. It is the analysis developer’s responsibility to ensure the soundness of their analysis; \tool faithfully executes whatever has been implemented by the developers. As for the underlying framework, we adopted the classic worklist algorithm \cite{Kam:1976-iterative}, and directly implemented it based on the vertex-centric computation model. Its soundness stays unchanged.


\section{Related Work}
\label{sec:related}

\subsection{Parallel and Distributed Static Analysis}
Over the past decades, a few attempts have been made to speed up static program analysis by leveraging parallel and distributed computing facilities. 
Lee and Ryder \cite{lee1992comprehensive} exploited algorithmic parallelism to accelerate dataflow analysis. 
Rodriguez et al.\cite{Rodriguez-2011} proposed a parallel algorithm for IFDS-based dataflow analysis \cite{Reps-1995-dataflow} based on the actor model, which requires the transfer function to be distributive over meet operators. 
Nagaraj and Govindarajan \cite{nagaraj2013parallel} utilized Intel Threading Building Blocks to design a parallel flow-sensitive pointer analysis algorithm. 
Su et al. \cite{su2014parallel} proposed parallel CFL-reachability-based flow-insensitive pointer analysis.
Importantly, all the above approaches rely heavily on memory for computation. They can rarely scale to large systems such as Linux kernel.

Following the line of systemizing program analysis, various systems are developed to support scalable interprocedural analysis. Graspan \cite{Wang:2017-graspan,zuo:graspan-tocs20} and BigSpa \cite{Zuo:2018-bigspa-ipdps}  scale the context-sensitive CFL-reachability analysis \cite{Reps-1997-reachability} in a single machine and distributed environment, respectively. 
Unfortunately, many dataflow analyses cannot fall into this category, such as cache analysis and interval analysis.
Chianina \cite{zuo2021chianina} is an out-of-core graph system performing the context- and flow-sensitive analyses in parallel. However, restricted by the limited parallel computing resources in a single machine, it is inefficient when analyzing large-scale systems code. 

For distributed work, Albarghouthi et al. \cite{Albarghouthi:2012-pldi-bolt} took the inspirations from MapReduce paradigm and parallelized the demand-driven top-down analyses, such as verification and software model checking. They failed to support dataflow analysis.
Garbervetsky et al. \cite{Garbervetsky:2017-fse} recently devised a distributed worklist algorithm based on the actor model. However, it does not support the standard dataflow analysis due to the absence of flow ordering between actors.
\textcolor{black}{Christakis et al. \cite{inputsplit-fse-2022} explored input splitting strategies to analyze different code pieces on parallel partitions independently. However, as stated explicitly, the splitting 
causes analysis imprecision due to the information loss across separate partitions.}
Greathouse et al. \cite{Greathouse-cgo-2011} extended dynamic dataflow analyses with a novel sampling system to achieve low runtime overhead. Apparently, they only focused on dynamic analysis rather than static dataflow analysis.

\subsection{Incremental Static Analysis\label{subsec:related-incremental}}



\change{
To efficiently analyze codebases with frequent changes, various studies \cite{ryder1983incremental, pollock1989incremental, marlowe1989efficient, burke1990critical, sreedhar1998new, arzt2014reviser, harianto2024enhancing, szabo2016inca, szabo2021incremental, van2020incremental, wauters2023change} have been proposed to realize incremental analysis over the past decades.
Pollock and Soffa \cite{pollock1989incremental} presented an incremental iterative algorithm for constructing incremental versions of kill-gen problems.
Reviser \cite{arzt2014reviser} presents a revision algorithm for interprocedural distributive environment transformers (IDE) and IFDS-based dataflow analysis in response to incremental program changes.
Harianto et al. \cite{harianto2024enhancing} presented a data structure called SP-graph to enhance incremental dataflow anomaly detection in Integrated Development Environments (IDEs).
IncA \cite{szabo2016inca} is a domain-specific language for the definition of efficient incremental program analyses. It represents program analysis tasks as graph patterns and updates the analysis results incrementally by graph pattern matching.
Laddder \cite{szabo2021incremental} is a novel differential dataflow-based solver for efficient Datalog-based incremental analysis. 
Van and Wauters \cite{van2020incremental, wauters2023change} realized incremental modular analysis by leveraging the modularity of the program and dependencies between modules. 
Besides, there are attempts for incremental pointer and alias analysis \cite{yur1999incremental, saha2005incremental, lu2013incremental, liu2019rethinking, liu2022sharp}. 
However, the above studies mainly focus on sequential computation; none of them supports distributed analysis. 
To the best of our knowledge, \toolinc is the first work which leverages the distributed cloud resources to accelerate the incremental monotone dataflow analysis.
}

\subsection{Vertex-Centric Graph Processing}
Vertex-centric model has been tightly incorporated into distributed processing frameworks to tackle the challenge of large-scale graph processing. 
Based on that, Pregel \cite{Malewicz:2010-pregel} is the pioneering system supporting general graph applications.  
Pregel adopts BSP model to accelerate the intensive computation. 
Following the idea of Pregel, Apache Giraph \cite{avery2011giraph} is implemented in Java as an open source system.  
Following Pregel, more advanced vertex-centric models and variants have been proposed. 
GraphLab \cite{low2012distributed} supports asynchronous vertex computation based on Chandy-Lamport snapshots without halting the entire program. 
GraphX \cite{Gonzalez:2014-graphx} is a graph system based on Resilient Distributed Dataset (a.k.a., RDD) abstraction. 
Note that all the above graph systems are dedicated to the general graph applications. None of them can directly scale the interprocedural dataflow analysis well. As a result, we propose \tool, the first distributed system tailored to dataflow analysis. 

\section{Conclusion}
\label{sec:conclude}
\change{
This paper proposes a distributed dataflow analysis framework called \tool running on a real-world cloud, which supports two modes: whole-program analysis and incremental analysis. By leveraging the large amount of memory and CPU cores in the cloud, \tool greatly improves the scalability and efficiency of dataflow analysis for analyzing large-scale programs. The experiments conducted in a real-world cloud environment validate that \tool not only scales context-sensitive dataflow analysis to million lines of code, but also completes such analysis in a highly efficient manner. It can be expected that we could achieve nearly on-the-fly analysis of industrial-scale codebases by leveraging modern cloud  computing facilities.
 }
	
\section{Data Availability}
\label{sec:data-availability}
\change{
All the source code of \tool is publicly available at: \href{https://github.com/BigDataflow-system}{https://github.com/BigDataflow-system}.
}

\begin{acks}
We would like to thank all the anonymous reviewers for their valuable comments.
This work is partially supported by the National Natural Science Foundation of China (Grant No. 62272217), the Fundamental Research Funds for the Central Universities (Grant No. 020214380104) and Alibaba Group via the Alibaba Innovation Research (AIR) program.
\end{acks}

  \bibliographystyle{ACM-Reference-Format}

	\bibliography{refs}

\end{document}